\numberwithin{equation}{section}
\numberwithin{figure}{section}
\theoremstyle{plain}
\newtheorem{thm}{\protect\theoremname}[section]
  \theoremstyle{definition}
  \newtheorem{defn}[thm]{\protect\definitionname}
  \theoremstyle{plain}
  \newtheorem{lem}[thm]{\protect\lemmaname}
  \theoremstyle{remark}
  \newtheorem{rem}[thm]{\protect\remarkname}
  \theoremstyle{definition}
  \newtheorem{example}[thm]{\protect\examplename}
  \theoremstyle{plain}
  \newtheorem{conjecture}[thm]{\protect\conjecturename}
  \providecommand{\conjecturename}{Conjecture}
  \providecommand{\definitionname}{Definition}
  \providecommand{\examplename}{Example}
  \providecommand{\lemmaname}{Lemma}
  \providecommand{\remarkname}{Remark}
\providecommand{\theoremname}{Theorem}
\begin{document}

\title{K-Theory and Pseudospectra for Topological Insulators}

\author{Terry A. Loring}

\address{Department of Mathematics and Statistics, University of New Mexico,
Albuquerque, NM 87131, USA.}
\begin{abstract}
We derive formulas and algorithms for Kitaev's invariants in the
periodic table for topological insulators and superconductors for
finite disordered systems on lattices with boundaries. We find
that $K$-theory arises as an obstruction to perturbing approximately compatible
observables into compatible observables. 

We derive formulas in all
symmetry classes up to dimension two, and in one symmetry class in
dimension three, that can be computed with sparse matrix algorithms.
We present algorithms in two symmetry classes in 2D and one in 3D
and provide illustrative studies regarding how these algorithms can
detect the scaling properties of phase transitions. 
\end{abstract}
\maketitle

\section{Approximately compatible observables\label{sec:Introduction}}

Compatible observables are given by a rigid definition. If they act
on finite Hilbert space, the requirement is a basis of vectors that
are completely localized for each $X_{j}$, so 
$X_{j}\mathbf{v}=\lambda_{j}\mathbf{v}$
for some scalars. If we repeatedly prepare precisely the same state,
we can hope to get no variance in whichever $X_{j}$ we then measure.
This sounds more like a math theory than any laboratory. What if we
repeatedly prepare approximately the same state and approximately
measure one of the $X_{j}$ in turn, and each series of measurements
shows little variation? Might this happen because the $X_{j}$ are
\emph{approximately compatible?}

In the end, we expect this to mean the commutators 
$\left[X_{j},X_{k}\right]=X_{j}X_{k}-X_{k}X_{j}$
are small, preferably in the operator norm. However the initial definition
should involve something like small variance of states. More simply,
we can seek common approximate eigenvectors. A fundamental quantity
would seem to be, for an $n$-tuple of scalars
$\lambda_{1},\dots,\lambda_{d}$,
\begin{equation}
\min_{\left\Vert \mathbf{v}\right\Vert =1}\max_{j}
\left\Vert X_{j}\mathbf{v}-\lambda_{j}\mathbf{v}
\right\Vert .
\label{eq:minimizeEigenErrors}
\end{equation}
If this quantity is small enough, often enough, we could declare these
observables to be approximately compatible.

This seems like a nearly impossible minimization, so we seek a proxy.
Commuting operators have a nice joint spectrum called the Clifford
spectrum, so we apply the definition of Clifford spectrum to tuples
of matrices with relatively small commutators and see what happens.
The resulting joint spectrum has nice theoretical properties, a beautiful
relation with $K$-theory, but remains difficult to compute numerically.
Generalizing pseudospectrum to what we call the Clifford pseudospectrum,
we find an efficiently computable approximation to 
Equation~\ref{eq:minimizeEigenErrors}. 
\begin{defn}
Suppose $X_{1}$ through $X_{d}$ are Hermitian matrices.  Let
$\Gamma_{1},\dots,\Gamma_{d}$ be any Hermitian representation of the
relations for $\mathcal{C\ell}_{d,0}(\mathbb{C})$,
meaning $\Gamma_{j}^{*}=\Gamma_{j}$, $\Gamma_{j}^{2}=1$ and 
$\Gamma_{j}\Gamma_{k}=-\Gamma_{k}\Gamma_{j}$
for $j\neq k$. The \emph{Clifford $\epsilon$-pseudospectrum} of
$(X_{1},\dots,X_{d})$ is
\[
\Lambda_{\epsilon}(X_{1},\dots,X_{d})
=
\left\{ 
\boldsymbol{\lambda}\in\mathbb{R}^{d}
\left|\,
\left\Vert \left(\sum\left(X_{j}-\lambda_{j}\right)\otimes\Gamma_{j}\right)^{-1}\right\Vert \geq\epsilon^{-1}
\right.
\right\} 
\]
with the convention $0^{-1}=\infty$ and $\left\Vert S^{-1}\right\Vert =\infty$
whenever S is singular. The \emph{Clifford spectrum} of $(X_{1},\dots,X_{d})$
is $\Lambda_{0}(X_{1},\dots,X_{d})$, also denoted $\Lambda(X_{1},\dots,X_{d})$.
The complement of the Clifford spectrum we call the 
\emph{Clifford resolvent set}.
\end{defn}

We will use the notation
\[
B(X_{1},\dots,X_{d})=\sum X_{j}\otimes\Gamma_{j}
\]
and
\begin{align*}
B_{\boldsymbol{\lambda}}(X_{1},\dots,X_{d}) & =B(X_{1}-\lambda_{1}I,\dots,X_{n}-\lambda_{d}I)\\
 & =B(X_{1},\dots,X_{n})-B(\lambda I,\dots,\lambda_{d}I)).
\end{align*}
For example, 
\[
\Lambda(X_{1},\dots,X_{d})
=
\left\{ 
\boldsymbol{\lambda}\in\mathbb{R}^{d}
\left|\, 
B_{\boldsymbol{\lambda}}(X_{1},\dots,X_{d})\mbox{ is singular}
\right.
\right\} .
\]

The representations of $\mathcal{C\ell}_{d,0}(\mathbb{C})$ are not
complicated, so it is routine to show this definition does not depend
on the choice of the $\Gamma_{j}$. When we get to $K$-theory we
will need to keep the matrix size as small as possible to avoid multiplicity
in the spectrum of $B_{\boldsymbol{\lambda}}(X_{1},\dots,X_{n})$.

\begin{lem}
\label{lem:create_approx_e_vectors} 
Suppose $X_{1},\dots,X_{d}$ are Hermitian $n$-by-$n$ matrices. If 
$\boldsymbol{\lambda}$ is in $\Lambda_{\epsilon}(X_{1},\dots,X_{d})$ then
there is a unit vector $\mathbf{v}$ in $\mathbb{C}^{n}$ with 
\[
\left\Vert X_{j}\mathbf{v}-\lambda_{j}\mathbf{v}\right\Vert 
\leq
\sqrt{\left\lceil \frac{d+1}{2}\right\rceil }
\sqrt{\epsilon^{2}+\sum_{j\neq k}\left\Vert \left[X_{j},X_{k}\right]\right\Vert }
\]
for all $j$. 
\end{lem}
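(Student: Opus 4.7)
The plan is to translate the failure of $B_{\boldsymbol{\lambda}}=B_{\boldsymbol{\lambda}}(X_{1},\dots,X_{d})$ to be well-invertible on the enlarged space $\mathbb{C}^{n}\otimes\mathbb{C}^{N}$ (with $N$ the dimension of the chosen Clifford representation) into the existence of an approximate common eigenvector for the $X_{j}$ on the original space $\mathbb{C}^{n}$.

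First I would use the hypothesis $\boldsymbol{\lambda}\in\Lambda_{\epsilon}$, together with the fact that $B_{\boldsymbol{\lambda}}$ is Hermitian (since each $X_{j}-\lambda_{j}I$ and each $\Gamma_{j}$ is), to extract a unit vector $\mathbf{w}\in\mathbb{C}^{n}\otimes\mathbb{C}^{N}$ with $\|B_{\boldsymbol{\lambda}}\mathbf{w}\|\leq\epsilon$. This is the standard fact that an Hermitian operator whose inverse has norm at least $\epsilon^{-1}$ admits an $\epsilon$-approximate null vector. Next I would square $B_{\boldsymbol{\lambda}}$ and invoke the Clifford relations $\Gamma_{j}^{2}=I$ and $\Gamma_{j}\Gamma_{k}=-\Gamma_{k}\Gamma_{j}$ to collapse the $\mathbb{C}^{N}$ factor in the leading term:
\[
B_{\boldsymbol{\lambda}}^{2}\;=\;\sum_{j}(X_{j}-\lambda_{j}I)^{2}\otimes I\;+\;\sum_{j<k}[X_{j},X_{k}]\otimes\Gamma_{j}\Gamma_{k}.
\]
Both factors in the remainder are skew-Hermitian, so the second term is itself Hermitian with operator norm at most $\sum_{j<k}\|[X_{j},X_{k}]\|$. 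Pairing with $\mathbf{w}$ and using $\|B_{\boldsymbol{\lambda}}\mathbf{w}\|^{2}=\langle B_{\boldsymbol{\lambda}}^{2}\mathbf{w},\mathbf{w}\rangle\leq\epsilon^{2}$ gives
\[
\sum_{j}\bigl\langle (X_{j}-\lambda_{j}I)^{2}\otimes I\,\mathbf{w},\mathbf{w}\bigr\rangle\;\leq\;\epsilon^{2}+\sum_{j<k}\|[X_{j},X_{k}]\|.
\]

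Finally I would decompose $\mathbf{w}=\sum_{\alpha}\mathbf{v}_{\alpha}\otimes\mathbf{e}_{\alpha}$ along an orthonormal basis of $\mathbb{C}^{N}$, which rewrites the above as $\sum_{\alpha}\sum_{j}\|(X_{j}-\lambda_{j}I)\mathbf{v}_{\alpha}\|^{2}\leq\epsilon^{2}+\sum_{j<k}\|[X_{j},X_{k}]\|$, while $\sum_{\alpha}\|\mathbf{v}_{\alpha}\|^{2}=1$. Averaging over $\alpha$ then isolates an index whose normalized component $\mathbf{v}=\mathbf{v}_{\alpha}/\|\mathbf{v}_{\alpha}\|$ is the desired unit vector in $\mathbb{C}^{n}$, simultaneously controlling every $\|(X_{j}-\lambda_{j}I)\mathbf{v}\|$.

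The delicate step is getting the announced constant $\sqrt{\lceil (d+1)/2\rceil}$. A crude pigeonhole (choosing $\alpha$ with $\|\mathbf{v}_{\alpha}\|^{2}\geq 1/N$) only yields a factor $\sqrt{N}$, and the smallest admissible $N$ is $2^{\lfloor d/2\rfloor}$, which already exceeds $\lceil (d+1)/2\rceil$ starting at $d=4$. Hitting the stated constant should require a sharper weighted-averaging argument — exploiting that the weights $\|\mathbf{v}_{\alpha}\|^{2}$ in the bound sum to one — and packaging the commutator sum in the looser form $\sum_{j\neq k}\|[X_{j},X_{k}]\|$ that appears in the statement so as to absorb the residual combinatorics from the representation space. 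This is the step I expect to require the most care.
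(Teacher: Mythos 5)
Your proposal is correct and, at the last step, takes a sharper route than the paper's proof; the worry in your final paragraph is unnecessary. The paper, after obtaining $\left\Vert\left(\sum_j(X_j-\lambda_j)^2\otimes I\right)\mathbf{z}\right\Vert\leq\epsilon^2+\sum\left\Vert[X_j,X_k]\right\Vert$ for a unit eigenvector $\mathbf{z}$ of $B_{\boldsymbol{\lambda}}$, splits $\mathbf{z}$ into the $g$ Clifford blocks, picks the block $\mathbf{z}_r$ of largest norm, and normalizes that block --- precisely the crude pigeonhole you flag, which produces a factor recorded as $\sqrt{g}$ together with the assertion that the minimal representation size is $g=\lceil(d+1)/2\rceil$. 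You are right to be suspicious of that assertion: the minimal dimension of a Hermitian representation of $\mathcal{C\ell}_{d,0}(\mathbb{C})$ is $2^{\lfloor d/2\rfloor}$, which agrees with $\lceil(d+1)/2\rceil$ only for $d\leq3$, so the paper's own argument does not reach the stated constant beyond $d=3$. Your alternative --- pairing $B_{\boldsymbol{\lambda}}^2$ against $\mathbf{w}$ via the quadratic form $\langle B_{\boldsymbol{\lambda}}^2\mathbf{w},\mathbf{w}\rangle=\|B_{\boldsymbol{\lambda}}\mathbf{w}\|^2\leq\epsilon^2$, then choosing $\alpha$ by a weighted average with weights $\|\mathbf{v}_\alpha\|^2$ summing to one --- is not a delicate repair but an outright improvement. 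From $\sum_\alpha\sum_j\|(X_j-\lambda_j)\mathbf{v}_\alpha\|^2\leq\epsilon^2+\sum_{j<k}\|[X_j,X_k]\|$ and $\sum_\alpha\|\mathbf{v}_\alpha\|^2=1$, some $\alpha$ with $\mathbf{v}_\alpha\neq0$ satisfies $\sum_j\|(X_j-\lambda_j)\mathbf{v}_\alpha\|^2\leq\bigl(\epsilon^2+\sum_{j<k}\|[X_j,X_k]\|\bigr)\|\mathbf{v}_\alpha\|^2$, and normalizing yields $\|(X_j-\lambda_j)\mathbf{v}\|\leq\sqrt{\epsilon^2+\sum_{j<k}\|[X_j,X_k]\|}$ for every $j$ --- no dimensional prefactor at all and a tighter commutator sum than the lemma states. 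Do not engineer the announced constant; write the clean inequality your method gives.
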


\begin{proof}
Assume we have selected the $\Gamma_{j}$ in $\mathbf{M}_{g}(\mathbb{C})$
where $g$ is minimal, so $g=\left\lceil \frac{d+1}{2}\right\rceil $.
Without loss of generality, we assume $\boldsymbol{\lambda}$ equals
$\mathbf{0}$, and 
\[
\epsilon=\left\Vert \left(B(X_{1},\dots,X_{d}\right)^{-1}\right\Vert ^{-1}.
\]
Since $B(X_{1},\dots,X_{d})$ is Hermitian, $\epsilon$ has the alternate
description as the absolute value of the smallest eigenvalue of $B(X_{1},\dots,X_{d})$.
Let $\mathbf{z}$ be a corresponding unit eigenvector. Since
\begin{equation}
B(X_{1},\dots,X_{d})^{2}
=
\sum_{j}X_{j}^{2}\otimes I_{g}+\sum_{j\neq k}[X_{j},X_{k}]\otimes\Gamma_{j}\Gamma_{k}
\label{eq:eval_B_squared}
\end{equation}
we make the estimate 
\begin{align*}
\left\Vert \left(\sum X_{j}^{2}\otimes I_{g}\right)\mathbf{z}\right\Vert  
& \leq\left\Vert B(X_{1},\dots,X_{d})^{2}\mathbf{z}\right\Vert 
+
\left\Vert \left(\sum_{j\neq k}[X_{j},X_{k}]\otimes\Gamma_{j}\Gamma_{k}\right)\mathbf{z}\right\Vert \\
& \leq\epsilon^{2}+\sum_{j\neq k}\left\Vert \left[X_{j},X_{k}\right]\right\Vert .
\end{align*}
Let 
\[
\mathbf{z}=\left[\begin{array}{c}
\mathbf{z}_{1}\\
\vdots\\
\mathbf{z}_{g}
\end{array}\right]
\]
and let $r$ be an index maximizing $\left\Vert \mathbf{z}_{r}\right\Vert $,
so we have $\left\Vert \mathbf{z}_{r}\right\Vert \geq1/g$. 
Let $\mathbf{v}=\mathbf{z}_{r}/\left\Vert \mathbf{z_{r}}\right\Vert $.
Since
\[
\left\Vert \sum X_{j}^{2}\mathbf{z}_{r}\right\Vert 
\leq
\left\Vert \left(\sum X_{j}^{2}\otimes I_{g}\right)\mathbf{z}\right\Vert 
\]
we have
\[
\left\Vert \sum X_{j}^{2}\mathbf{v}\right\Vert 
\leq 
g\left\Vert \left(\sum X_{j}^{2}\otimes I_{g}\right)\mathbf{z}\right\Vert .
\]
Since $X_{\ell}^{2}\leq\sum_{j}X_{j}^{2}$ we find
\[
\left\langle X_{\ell}^{2}\mathbf{v},\mathbf{v}\right\rangle 
\leq
\left\langle \sum_{j}X_{j}^{2}\mathbf{v},\mathbf{v}\right\rangle 
\leq
\left\Vert \sum_{j}X_{j}^{2}\mathbf{v}\right\Vert 
\]
and so
\[
\left\langle X_{\ell}^{2}\mathbf{v},\mathbf{v}\right\rangle 
\leq 
g\left(\epsilon^{2}+\sum_{j\neq k}\left\Vert \left[X_{j},X_{k}\right]\right\Vert \right).
\]
\end{proof}

\begin{lem}
\label{lem:approx_evalReverse} Suppose $X_{1},\dots,X_{d}$ are Hermitian
$n$-by-$n$ matrices. If 
\[
\left\Vert X_{j}\mathbf{v}-\lambda_{j}\mathbf{v}\right\Vert \leq \epsilon
\]
for all $j$, then $\boldsymbol{\lambda}$ is in $\Lambda_{\epsilon'}(X_{1},\dots,X_{d})$
where 
\[
\epsilon' 
=
\sqrt{
\left
\lceil \frac{d+1}{2}\right\rceil ^{\frac{1}{2}} d\epsilon^{2}
+ \sum_{j\neq k}\left\Vert \left[X_{j},X_{k}\right]\right
\Vert }.
\]
\end{lem}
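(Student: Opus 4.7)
The plan is to run Lemma~\ref{lem:create_approx_e_vectors} in reverse: rather than extracting an approximate joint eigenvector from a small singular value of $B_{\boldsymbol{\lambda}}$, I will exhibit an explicit test vector in the tensor product space $\mathbb{C}^{n}\otimes\mathbb{C}^{g}$ on which $B_{\boldsymbol{\lambda}}$ acts with small norm, thereby placing $\boldsymbol{\lambda}$ in $\Lambda_{\epsilon'}$.

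First, I would translate variables. Replacing each $X_{j}$ by $X_{j}-\lambda_{j}I$ does not affect the commutators and shifts the pseudospectrum by $\boldsymbol{\lambda}$, so I reduce to the case $\boldsymbol{\lambda}=\mathbf{0}$. The hypothesis becomes $\|X_{j}\mathbf{v}\|\leq\epsilon$ for every $j$, and the goal is to bound the smallest singular value of $B(X_{1},\dots,X_{d})$ by $\epsilon'$. Taking the $\Gamma_{j}$ of minimal size $g=\lceil(d+1)/2\rceil$ and any unit vector $\mathbf{u}\in\mathbb{C}^{g}$, the natural test vector is $\mathbf{w}=\mathbf{v}\otimes\mathbf{u}$, which is a unit vector in $\mathbb{C}^{n}\otimes\mathbb{C}^{g}$.

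The key computation is $\|B\mathbf{w}\|^{2}=\langle B^{2}\mathbf{w},\mathbf{w}\rangle$, evaluated by substituting Equation~\ref{eq:eval_B_squared}. This splits into a \emph{diagonal} piece $\sum_{j}\|X_{j}\mathbf{v}\|^{2}\leq d\epsilon^{2}$, bounded directly by the hypothesis, and a \emph{commutator} piece $\sum_{j\neq k}\langle[X_{j},X_{k}]\mathbf{v},\mathbf{v}\rangle\,\langle\Gamma_{j}\Gamma_{k}\mathbf{u},\mathbf{u}\rangle$, whose absolute value is bounded by $\sum_{j\neq k}\|[X_{j},X_{k}]\|$ using $\|\mathbf{v}\|=\|\mathbf{u}\|=1$ and $\|\Gamma_{j}\Gamma_{k}\|=1$. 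Summing gives $\|B\mathbf{w}\|^{2}\leq d\epsilon^{2}+\sum_{j\neq k}\|[X_{j},X_{k}]\|$, so $B$ has a singular value at most the right-hand side's square root, which places $\boldsymbol{\lambda}$ inside the Clifford $\epsilon'$-pseudospectrum.

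The main obstacle I anticipate is reconciling the stated constant. The tensor product test vector above yields a slightly tighter bound than the stated $\lceil(d+1)/2\rceil^{1/2}$ prefactor on $d\epsilon^{2}$, so either the author uses a less direct test vector — one that, parallel to the proof of Lemma~\ref{lem:create_approx_e_vectors}, extracts a single $\mathbb{C}^{n}$-block at the cost of a factor of $\sqrt{g}$ — or the stated bound is simply written in a looser form for symmetry with the previous lemma. In any event, the overall argument is the structural mirror of Lemma~\ref{lem:create_approx_e_vectors}, built on the same square identity for $B$.
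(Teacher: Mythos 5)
Your proof is correct, and it is actually tighter and cleaner than the paper's own argument. Both you and the paper build a test vector out of $\mathbf{v}$ and invoke the identity in Equation~\ref{eq:eval_B_squared}, so the strategy is essentially the same. The difference lies in the choice of test vector and the quantity estimated: the paper takes $\mathbf{z}$ to be the $g$-fold stack of $\mathbf{v}$ (a vector of norm $\sqrt{g}$) and estimates the \emph{norm} $\left\Vert B^2\mathbf{z}\right\Vert$, bounding $\left\Vert\left(\sum X_j^2\otimes I_g\right)\mathbf{z}\right\Vert$ by $\sqrt{g}\sum\left\Vert X_j^2\mathbf{v}\right\Vert$ and then by $\sqrt{g}\,d\epsilon^2$. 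You instead take the genuinely unit vector $\mathbf{w}=\mathbf{v}\otimes\mathbf{u}$ and estimate the \emph{quadratic form} $\left\langle B^2\mathbf{w},\mathbf{w}\right\rangle$, so that $\left\langle X_j^2\mathbf{v},\mathbf{v}\right\rangle=\left\Vert X_j\mathbf{v}\right\Vert^2\leq\epsilon^2$ applies directly. This buys two things. First, the prefactor $\sqrt{g}=\left\lceil\frac{d+1}{2}\right\rceil^{1/2}$ disappears, giving the sharper constant $\epsilon'=\sqrt{d\epsilon^2+\sum_{j\neq k}\left\Vert[X_j,X_k]\right\Vert}$, which of course still implies the stated lemma since $\Lambda_\epsilon$ is monotone in $\epsilon$. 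Second, you sidestep a step in the paper that does not hold as written: from $\left\Vert X_j\mathbf{v}\right\Vert\leq\epsilon$ one cannot in general conclude $\left\Vert X_j^2\mathbf{v}\right\Vert\leq\epsilon^2$ (one only gets $\left\Vert X_j^2\mathbf{v}\right\Vert\leq\left\Vert X_j\right\Vert\epsilon$), so the paper's bound on the diagonal piece needs exactly the inner-product passage you used. Your instinct in the final paragraph that the discrepancy was a real loosening rather than a necessity was correct.
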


\begin{proof}
Again we use linearity to reduce to the case $\boldsymbol{\lambda}=\mathbf{0}$.
If $\left\Vert X_{j}\mathbf{v}\right\Vert \leq\epsilon$ for all $j$
then let 
\[
\mathbf{z}=\left[\begin{array}{c}
\mathbf{v}\\
\vdots\\
\mathbf{v}
\end{array}\right].
\]
Using Equation~\ref{eq:eval_B_squared} we find
\begin{align*}
\left\Vert \left(B(X_{1},\dots,X_{d})\right)^{2}\mathbf{z}\right\Vert  
& \leq\left\Vert \left(\sum_{j}X_{j}^{2}\otimes I_{g}\right)\mathbf{z}\right\Vert +\sum_{j\neq k}\left\Vert \left[X_{j},X_{k}\right]\right\Vert \\
& =\sqrt{g\left\Vert \sum X_{j}^{2}\mathbf{v}\right\Vert ^{2}}+\sum_{j\neq k}\left\Vert \left[X_{j},X_{k}\right]\right\Vert \\
& \leq\sqrt{g}\sum\left\Vert X_{j}^{2}\mathbf{v}\right\Vert +\sum_{j\neq k}\left\Vert \left[X_{j},X_{k}\right]\right\Vert \\
& \leq\sqrt{g}d\epsilon^{2}+\sum_{j\neq k}\left\Vert \left[X_{j},X_{k}\right]\right\Vert .
\end{align*}
This gives a lower bound on the norm of $\left(B(X_{1},\dots,X_{d})\right)^{2}$,
specifically
\[
\left\Vert \left(B(X_{1},\dots,X_{d})\right)^{-2}\right\Vert 
\geq
\left(\sqrt{g}d\epsilon^{2}+\sum_{j\neq k}\left\Vert \left[X_{j},X_{k}\right]\right\Vert \right)^{-1}.
\]
Since $B(X_{1},\dots,X_{d})$ is Hermitian, we conclude
\[
\left\Vert \left(B(X_{1},\dots,X_{d})\right)^{-1}\right\Vert ^{-1}
\leq
\sqrt{\sqrt{g}d\epsilon^{2}+\sum_{j\neq k}\left\Vert \left[X_{j},X_{k}\right]\right\Vert }.
\]
\end{proof}

\begin{rem}
Lemmas \ref{lem:create_approx_e_vectors} and \ref{lem:approx_evalReverse}
tell us that for almost commuting Hermitian matrices $X_{1},\dots,X_{d}$
we can get an approximation to the quantity in Equation~\ref{eq:minimizeEigenErrors}
by computing 
\[
\left\Vert \left(B_{\boldsymbol{\lambda}}(X_{1},\dots,X_{d})\right)^{-1}\right\Vert ^{-1}.
\]
In a numerical setting, we can compute this easily. For example, we
can compute the absolute value of the eigenvalue of $B(X_{1},\dots,X_{d})$
that is closest to zero. This matrix is Hermitian, and typically sparse,
so standard algorithms work well for modest matrix sizes. The algorithms
typically compute an associated (approximate) eigenvalue, so we have
a way to construct vectors that come close to the minimum in 
Equation~\ref{eq:minimizeEigenErrors}.  As we push the matrix sizes
larger, we will need to do better. Still,
estimating the norm of an inverse is a fairly standard problem in
numerical analysis. One issue is that it is hard to differentiate
an eigenvalue at zero from one close to zero. This is why we turn
to the pseudospectrum. If we are computing the function
\[
\boldsymbol{\lambda}
\mapsto
\left\Vert \left(B_{\boldsymbol{\lambda}}(X_{1},\dots,X_{d})\right)^{-1}\right\Vert ^{-1}
\]
we need to set a value $\epsilon$ just above zero and regard all
values below that as equal. This is very reasonable, as we are modeling
simultaneous approximate measurement when true simultaneous measurement
is impossible.
\end{rem}

\begin{example}
If $X_{1},\dots,X_{d}$ are commuting Hermitian matrices then $\Lambda(X_{1},\dots,X_{d})$
equals the usual joint spectrum. This is an immediate corollary of
Lemma~\ref{lem:create_approx_e_vectors} and Lemma~\ref{lem:approx_evalReverse}. 
\end{example}

\begin{example}
If $A$ and $B$ are Hermitian, then $\left(\lambda_{1},\lambda_{2}\right)$
is in $\Lambda(A,B)$ if and only if $\lambda_{1}+i\lambda_{2}$ is
in the spectrum of $A+iB$. So the Clifford spectrum of a pair of
Hermitian matrices is finite. For positive $\epsilon$ we can show
that$\left(\lambda_{1},\lambda_{2}\right)$ is in $\Lambda_{\epsilon}(A,B)$
if and only if $\lambda_{1}+i\lambda_{2}$ is in the usual pseudospectrum
of $A+iB$. However need the convention 
\[
\sigma_{\epsilon}(Y)
=
\left\{ \alpha\in\mathbb{C}\left|\,\left\Vert \left(\alpha-Y\right)^{-1}\right\Vert \geq\epsilon^{-1}\right.\right\} 
\]
and not the convention with strict inequality, as in the excellent
book \cite{TrefethenEmbree} by Trefethen and Embree. To see the connection,
we temporarily use
\[
\Gamma_{1}=\left[\begin{array}{cc}
0 & 1\\
1 & 0
\end{array}\right],\quad\Gamma_{2}=\left[\begin{array}{cc}
0 & i\\
-i & 0
\end{array}\right]
\]
so that 
\[
-B_{(\lambda_{1},\lambda_{2})}(A,B)=\left[\begin{array}{cc}
0 & \left(\lambda_{1}+i\lambda_{2}\right)-\left(A+iB\right)\\
\left(\left(\lambda_{1}+i\lambda_{2}\right)-\left(A+iB\right)\right)^{*} & 0
\end{array}\right].
\]

Often the better choices here are 
\[
\Gamma_{1}=\left[\begin{array}{cc}
0 & 1\\
1 & 0
\end{array}\right],\quad\Gamma_{2}=\left[\begin{array}{cc}
1 & 0\\
0 & -1
\end{array}\right]
\]
as this keeps 
\[
B_{(\lambda_{1},\lambda_{2})}(A,B)=\left[\begin{array}{cc}
B-\lambda_{2} & A-\lambda_{1}\\
A-\lambda_{1} & -B+\lambda_{2}
\end{array}\right]
\]
real when $A$ and $B$ are real. Then we are able to produce real
joint approximate eigenvalues for $A$ and $B$ by finding near null
vectors of $B_{(\lambda_{1},\lambda_{2})}(A,B)$.
\end{example}

Next an example where the Clifford spectrum is an infinite set. For
$d=3$ the clear choice for the $\Gamma_{j}$ is $\Gamma_{1}=\sigma_{x}$,
$\Gamma_{2}=\sigma_{y}$, $\Gamma_{3}=\sigma_{z}$ so that 
\[
B(X,Y,Z)=\left[\begin{array}{cc}
Z & X-iY\\
X+iY & -Z
\end{array}\right]
\]
as was done in previous work with Hastings \cite{HastingsLoringWannier}. 

\begin{example}
\label{exa:PauliSpin} 
A nice example, computed by Kisil \cite{KisilCliffordSpectrum},
shows us that the Clifford spectrum for three Hermitian matrices is
radically different from the Clifford spectrum of two Hermitian matrices
(as defined below), as it need not be a finite set. We compute 
$\Lambda(\sigma_{x},\sigma_{y},\sigma_{z})$ with help from a symbolic
algebra package. The ``characteristic polynomial'' here is 
\begin{align*}
 & \det\left(B(\sigma_{x}-rI,\sigma_{y}-sI,\sigma_{z}-tI)\right)\\
 & \quad=\det\left(\left[\begin{array}{cccc}
1-t & 0 & -r+is & 0\\
0 & -1-t & 2 & -r+is\\
-r-is & 2 & -1+t & 0\\
0 & -r-is & 0 & 1+t
\end{array}\right]\right)\\
 & \quad=(r^{2}+s^{2}+t^{2}+1)^{2}-4.
\end{align*}
This means $\Lambda(\sigma_{x},\sigma_{y},\sigma_{z})$ is the unit
sphere. 
\end{example}

While investigating D-branes, Berenstein Malinowski \cite{berenstein2012matrix}
took the preceding example further. In that setting, the position
observables do not commute. They looked at higher spin representations
and computed the Clifford spectrum, again a sphere. In fact they were
interested in a subset of the Clifford spectrum that needs some form
of $K$-theory for its definition. 

Where an index, and eventually $K$-theory, arise is easily seen in
Example~\ref{exa:PauliSpin}. Let us examine what is going on at
two points in the Clifford resolvent set, the origin and $(0,0,2)$.
We find that
\[
B_{\mathbf{0}}(\sigma_{x},\sigma_{y},\sigma_{z})=\left[\begin{array}{cccc}
1 & 0 & 0 & 0\\
0 & -1 & 2 & 0\\
0 & 2 & -1 & 0\\
0 & 0 & 0 & 1
\end{array}\right]
\]
which has a single eigenvalue at $-3$ and a triple eigenvalue at
$1$. On the other hand 
\[
B_{(0,0,2)}(\sigma_{x},\sigma_{y},\sigma_{z})=\left[\begin{array}{cccc}
-3 & 0 & 0 & 0\\
0 & -3 & 2 & 0\\
0 & 2 & 1 & 0\\
0 & 0 & 0 & 3
\end{array}\right]
\]
has spectrum
\[
\left\{ -1-\sqrt{8},-1,-1+\sqrt{8},3\right\} 
\]
and so the same number of positive and negative eigenvalues. As we
vary $\boldsymbol{\lambda}$ the eigenvalues move continuously. It
follows that for any $\boldsymbol{\lambda}$ inside the unit sphere
$B_{\boldsymbol{\lambda}}(\sigma_{x},\sigma_{y},\sigma_{z})$ will
have just one positive eigenvalue. For $\boldsymbol{\lambda}$ outside
the unit sphere $B_{\boldsymbol{\lambda}}(\sigma_{x},\sigma_{y},\sigma_{z})$
will have exactly two positive eigenvalues. The Clifford resolvent
set contains information and we will see that from a computation of
$B_{\boldsymbol{\lambda}}(X_{1},\dots,X_{d})$ at a single point we
can make predictions on the size of the Clifford spectrum. This is
the mathematical essence of bulk-edge correspondence.

We now require that our $\Gamma_{j}$ are selected in matrices of
minimal size. In fact, let us consider $d=3$ and use the Pauli spin
matrices, as above. Recall that for an invertible Hermitian matrix
$Q$ the \emph{signature} of  $Q$ is the number of positive eigenvalues
(with multiplicity) minus the number of negative eigenvalues of $Q$,
denoted $\mathrm{Sig}(Q)$. Note the signature is always even for
even size matrices.

\begin{defn}
If $\boldsymbol{\lambda}$ is not in $\Lambda(X,Y,Z)$ then the index
of this triple at $\boldsymbol{\lambda}$ is
\[
\mathrm{Ind}_{\boldsymbol{\lambda}}(X,Y,Z)=\frac{1}{2}\mathrm{Sig}\left(B_{\boldsymbol{\lambda}}(X,Y,Z)\right)
\]
which is in the abelian group $\mathbb{Z}$.
\end{defn}

\begin{rem}
We us primarily the notation of pure mathematics.  In particular $*$ 
indicates the conjugate transpose of a matrix, and is a special instance
 of the $*$ operation in a $C^*$-algebra.
\end{rem}

\begin{example}
\label{exa:CherInsulator} Consider a finite model of a two-dimensional
Chern insulator on square lattice. That is, with zero for boundary
conditions. The Hamiltonian we consider a tight binding model, where
there are two orbital types at each site on a square lattice. We have
creation operators $c_{m,n,P}$ and $c_{m,n,S}$ at site $(m,n)$
in band P or S. Let $c_{m,n}$ be the sum of these two types of create
at the same site. The periodic Hamiltonian is 
\begin{align*}
H_{\mathrm{per}} 
& =\sum_{m,n}c_{m,n}^{*}\left(-\sigma_{z}-\sigma_{y}\right)c_{m+1,n}+h.c.\\
& \quad+\sum_{m,n}c_{m,n}^{*}\left(-\sigma_{z}-i\sigma_{y}\right)c_{m,n+1}+h.c.\\
& \quad+\sum_{m,n}c_{m,n}^{*}\left(\left(3+\mu_{m,n}\right)\sigma_{z}\right)c_{m,n}
\end{align*}
where $\mu_{m,n}$ is drawn with uniform distribution from 
$\left[-\tfrac{N}{2},\tfrac{N}{2}\right]$ where $N$ sets the disorder
level. This is the model used for a Chern insulator as part of the numerical
study done with Hastings \cite{LorHastHgTe},
which was essentially the spin-up only part of the model for an HgTe
quantum wells given in \cite{konig2007quantum}. If we use lattice
position (roughly \r{A}ngstr\"{o}ms) in defining our position operators,
we find $\left\Vert \left[H,X\right]\right\Vert $
and $\left\Vert \left[H,Y\right]\right\Vert $
rather large, about $6$. We work with the triple $(\eta X,\eta Y,H)$,
although we plot our results using lattice units. For this example,
$\eta=0.5$ was selected as a value for which the computed approximate
eigenvectors we spread out roughly one nanometer in position. 
We calculated the $\epsilon$-pseudospectrum, and also the index at many
positions at the Fermi level. For better viewing, the $X$ coordinate
of the $\boldsymbol{\lambda}$ was truncated to $[-2,2]$ and energy
coordinate to $[-2.5,2.5]$. The full energy spectrum is roughly $[-7,7]$.
This portion of the pseudospectrum and labeled resolvent is shown in
Figures \ref{fig:Chern-insulator-near-Fermi-0}-\ref{fig:Chern-insulator-near-Fermi-10}
with an increasingly large random disorder. The pseudospectrum is
calculated at 5 grid points per unit and $\epsilon=0.05$. 
\end{example}

\begin{figure}
\includegraphics[clip,scale=1.0,bb=160 290 460 480]{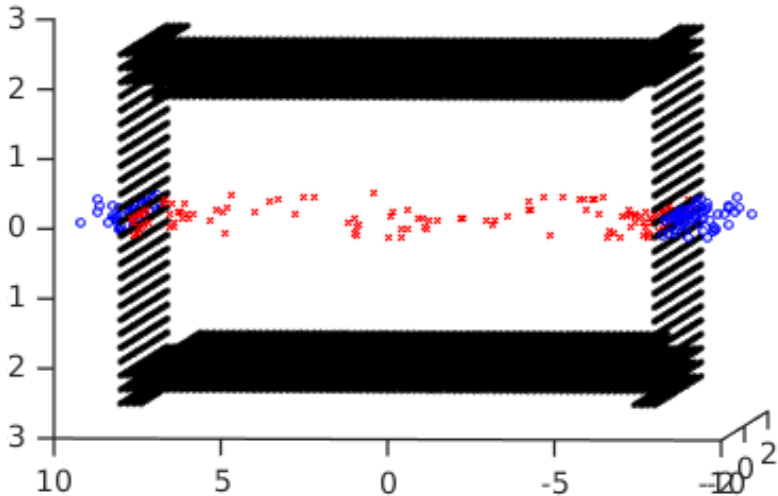}
\caption{A Chern insulator on a $18$-by-18 lattice with no disorder. 
\label{fig:Chern-insulator-near-Fermi-0}}
\end{figure}

\begin{figure}
\includegraphics[clip,scale=1.0,bb=160 290 460 480]{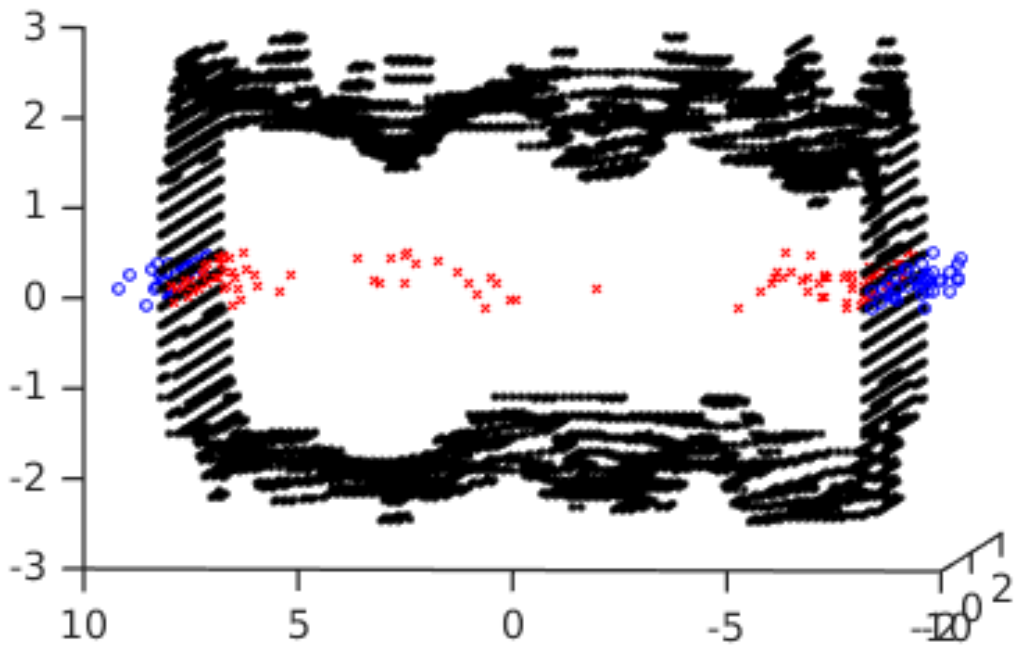}
\caption{
A Chern insulator on a $18$-by-18 lattice with disorder at 4. \label{fig:Chern-insulator-near-Fermi-4}}
\end{figure}

\begin{figure}
\includegraphics[clip,scale=1.0,bb=160 290 460 480]{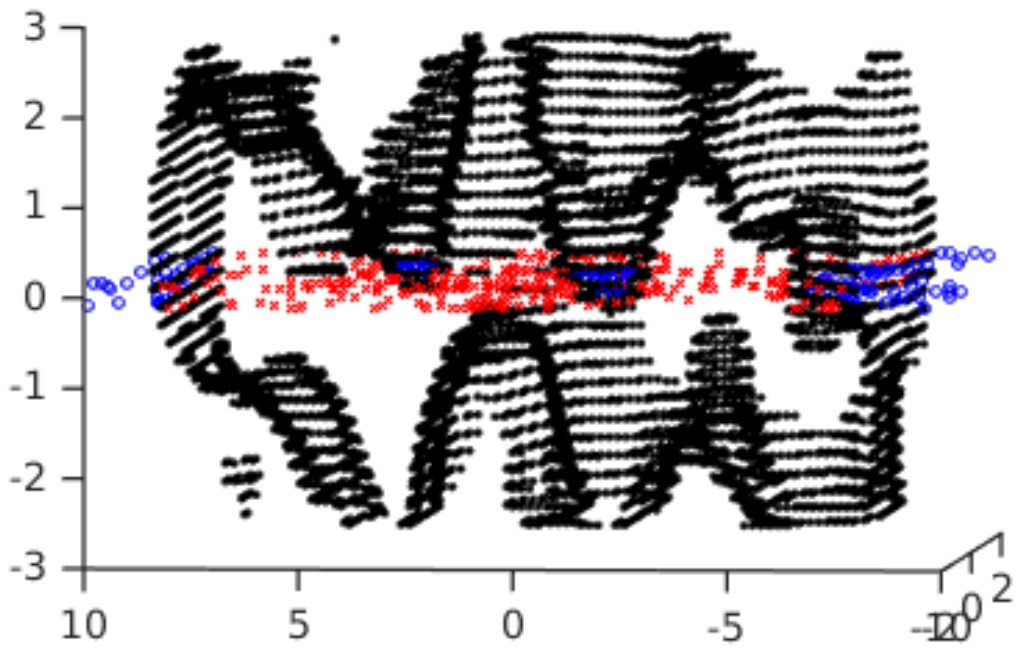}
\caption{A Chern insulator on a $18$-by-18 lattice with disorder at 10. 
\label{fig:Chern-insulator-near-Fermi-10}}
\end{figure}

\section{Almost commuting matrices, the ten-fold way}

We have many potential sources of almost commuting matrices, but now
focus on situations most relevant to topological insulators. A lattice
model of a $D$-dimensional topological or ordinary insulator needs
D+1 Hermitian matrices to be described. The $D$ position operators
$X_{j}$ will commute with each other and almost commute with the
Hamiltonian $H$. In most situations our formulas will work just as
well if the $X_{j}$ almost commute, so we do not always require the
$X_{j}$ to exactly commute.

We now consider the ten symmetry classes in the Atland-Zirnbauer 
\cite{altland1997nonstandard} classification.
Depending on the symmetry class, we may have up to three symmetries.
Time reversal will be denoted $\mathcal{T}$ and be antiunitary and
commute with all $D+1$ matrices. Particle-hole conjugation will also
be anti-unitary, will commute with position and anticommute with the
Hamiltonian. The symmetry $S$ will be unitary, commute with position
and anticommute with $H$. If all three are present, then $S$ is
the product of the other two, which commute. All symmetries are of
order two.

\begin{table}
\begin{tabular}{|>{\raggedright}p{0.7in}|l|l|l|}
\hline 
\multirow{2}{0.7in}{Cartan class} & \multirow{2}{*}{$\mathcal{T}$} & \multirow{2}{*}{$\mathcal{C}$} & \multirow{2}{*}{$S$}\\
 &  &  & \\
\hline 
\multicolumn{1}{|>{\raggedright}p{0.7in}}{\emph{Complex}} & \multicolumn{1}{l}{} & \multicolumn{1}{l}{} & \\
\hline 
A & --- & --- & ---\\
\hline 
AIII & --- & --- & $\checkmark$\\
\hline 
\multicolumn{1}{|>{\raggedright}p{0.7in}}{\emph{Real}} & \multicolumn{1}{l}{} & \multicolumn{1}{l}{} & \\
\hline 
AI & $\mathcal{T}\circ\mathcal{T}=I$ & --- & ---\\
\hline 
BDI & $\mathcal{T}\circ\mathcal{T}=I$ & $\mathcal{C}\circ\mathcal{C}=I$ & $\checkmark$\\
\hline 
D & --- & $\mathcal{C}\circ\mathcal{C}=I$ & ---\\
\hline 
DIII & $\mathcal{T}\circ\mathcal{T}=-I$ & $\mathcal{C}\circ\mathcal{C}=I$ & $\checkmark$\\
\hline 
AII & $\mathcal{T}\circ\mathcal{T}=-I$ & --- & ---\\
\hline 
CII & $\mathcal{T}\circ\mathcal{T}=-I$ & $\mathcal{C}\circ\mathcal{C}=-I$ & $\checkmark$\\
\hline 
C & --- & $\mathcal{C}\circ\mathcal{C}=-I$ & ---\\
\hline 
CI & $\mathcal{T}\circ\mathcal{T}=I$ & $\mathcal{C}\circ\mathcal{C}=-I$ & $\checkmark$\\
\hline 
\end{tabular}

\vspace{0.2in}

\caption{The symmetry classes. The mark $\checkmark$ means this symmetry exists,
and $SHS=-H$ and $SX_{j}S=X_{j}$ for a unitary $S$, with $H$ the
Hamiltonian and $X_{j}$ any of the position observables. The mark
--- means this symmetry does not exists. If the other two symmetries
exist then $S=\mathcal{T}\circ\mathcal{C}=\mathcal{C}\circ\mathcal{T}$.
All other formulas indicate the symmetry exists, with the stated conditions
holding, with $SHS=-H$ and $SX_{j}S=X_{j}$, or $\mathcal{T}\circ H=H\circ\mathcal{T}$
and $\mathcal{T}\circ X_{j}=X_{j}\circ\mathcal{T}$, or $\mathcal{C}\circ H=-H\circ\mathcal{C}$
and $\mathcal{C}\circ X_{j}=X_{j}\circ\mathcal{C}$.
\label{tab:The-symmetry-classes}
}
\end{table}

Following Kitaev \cite{kitaevEdgeNotes}, we will focus on situations
where there is a localized spectral gap. A minimal interpretation of this is 
\[
\mathbf{0}\notin\Lambda(X_{1},\dots,X_{D},H).
\]
However, we are dealing with approximate measurement, so a better
definition is generally
\[
\left\Vert B\left(X_{1},\dots,X_{D},H\right)^{-1}\right\Vert \leq\delta_{1}
\]
and $\left\Vert \left[X_{j},X_{k}\right]\right\Vert \leq\delta_{2}$
and $\left\Vert \left[X_{j},H\right]\right\Vert \leq\delta_{2}$ .
For each choice of $\delta_{1}$ and $\delta_{2}$ we are specifying
a potentially useful collection of systems.

In lower dimensions we can hope to classify such systems up to homotopy
and identify computable invariants to detect these homotopy classes.
As we move to higher dimensions we need to allow for stabilization
by adding on trivial systems.

\begin{table}
\begin{tabular}{|>{\centering}p{0.7in}|c|c|c|c|c|c|c|c|}
\hline 
\multicolumn{1}{|>{\centering}p{0.7in}}{\multirow{2}{0.7in}{Cartan class}} & \multicolumn{8}{c|}{Dimension}\\
\cline{2-9} 
 & \multicolumn{1}{c}{0} & \multicolumn{1}{c}{1} & \multicolumn{1}{c}{2} & \multicolumn{1}{c}{3} & \multicolumn{1}{c}{4} & \multicolumn{1}{c}{5} & \multicolumn{1}{c}{6} & 7\\
\hline 
\multicolumn{1}{|>{\centering}p{0.7in}}{\emph{Complex}} & \multicolumn{1}{c}{} & \multicolumn{1}{c}{} & \multicolumn{1}{c}{} & \multicolumn{1}{c}{} & \multicolumn{1}{c}{} & \multicolumn{1}{c}{} & \multicolumn{1}{c}{} & \\
\hline 
A & $\mathbb{Z}$ & $0$ & $\mathbb{Z}$ & $0$ & $\mathbb{Z}$ & $0$ & $\mathbb{Z}$ & $0$\\
\hline 
AIII & $0$ & $\mathbb{Z}$ & $0$ & $\mathbb{Z}$ & $0$ & $\mathbb{Z}$ & $0$ & $\mathbb{Z}$\\
\hline 
\multicolumn{1}{|>{\centering}p{0.7in}}{\emph{Real}} & \multicolumn{1}{c}{} & \multicolumn{1}{c}{} & \multicolumn{1}{c}{} & \multicolumn{1}{c}{} & \multicolumn{1}{c}{} & \multicolumn{1}{c}{} & \multicolumn{1}{c}{} & \\
\hline 
AI & $\mathbb{Z}$ & $0$ & $0$ & $0$ & $\mathbb{Z}$ & $0$ & $\mathbb{Z}_{2}$ & $\mathbb{Z}_{2}$\\
\hline 
BDI & $\mathbb{Z}_{2}$ & $\mathbb{Z}$ & $0$ & $0$ & $0$ & $\mathbb{Z}$ & $0$ & $\mathbb{Z}_{2}$\\
\hline 
D & $\mathbb{Z}_{2}$ & $\mathbb{Z}_{2}$ & $\mathbb{Z}$ & $0$ & $0$ & $0$ & $\mathbb{Z}$ & $0$\\
\hline 
DIII & $0$ & $\mathbb{Z}_{2}$ & $\mathbb{Z}_{2}$ & $\mathbb{Z}$ & $0$ & $0$ & $0$ & $\mathbb{Z}$\\
\hline 
AII & $\mathbb{Z}$ & $0$ & $\mathbb{Z}_{2}$ & $\mathbb{Z}_{2}$ & $\mathbb{Z}$ & $0$ & $0$ & $0$\\
\hline 
CII & $0$ & $\mathbb{Z}$ & $0$ & $\mathbb{Z}_{2}$ & $\mathbb{Z}_{2}$ & $\mathbb{Z}$ & $0$ & $0$\\
\hline 
C & $0$ & $0$ & $\mathbb{Z}$ & $0$ & $\mathbb{Z}_{2}$ & $\mathbb{Z}_{2}$ & $\mathbb{Z}$ & $0$\\
\hline 
CI & $0$ & $0$ & $0$ & $\mathbb{Z}$ & $0$ & $\mathbb{Z}_{2}$ & $\mathbb{Z}_{2}$ & $\mathbb{Z}$\\
\hline 
\end{tabular}

\vspace{0.2in}

\caption{The range of the invariants \cite{kitaev-2009,ryu2010topological}.
\label{tab:range-of-invariants}
}
\end{table}

In addition to homotopy questions, we can ask of a given system is
close to another system that is in the atomic limit, where the Hamiltonian
commutes with all the position operators. This is then a special case
of a mathematical question. Given $D+1$ almost commuting Hermitian
matrices $X_{1},\dots,X_{D+1}$ in specific AZ class that almost commute,
are there nearby commuting Hermitian matrices in the same class that
commute. To be a serious question this must be posed in a way that
is uniform for all matrix sizes.  That is, in Theorem~\ref{thm:Lin}
the selected $\delta$ must work for all choices for the matrix
size $n$.

One instance of this, for dimension 2 in class D, asks the following.
Given two real symmetric and one imaginary antisymmetric matrices that
almost commute, can these can be uniformly approximated by
commuting matrices, again with two being real symmetric and being
imaginary antisymmetric. The answer is
no, with an obstruction in $\mathbb{Z}$, as indicated
in table~\ref{tab:range-of-invariants}. 

We tend to prefer describing the symmetries in terms of operations
on matrices \cite{LoringQuaternions}. So we work with the dual operation
$\sharp$ that is derived from fermionic time-reversal by
\[
Q^{\sharp}=\mathcal{T}^{-1}\circ Q^{*}\circ\mathcal{T}
\]
were 
\[
\mathcal{T}\left[\begin{array}{c}
\mathbf{v}\\
\mathbf{w}
\end{array}\right]=\left[\begin{array}{c}
-\overline{\mathbf{w}}\\
\overline{\mathbf{v}}
\end{array}\right].
\]
In block form,
\[
\left[\begin{array}{cc}
A & B\\
C & D
\end{array}\right]^{\sharp}=\left[\begin{array}{cc}
D^{\mathrm{T}} & -B^{\mathrm{T}}\\
-C^{\mathrm{T}} & A^{\mathrm{T}}
\end{array}\right].
\]

The first two columns in Table~\ref{tab:range-of-invariants} are
unique. These invariants are just a reflection of the homotopy classes
of such locally-gapped systems. In these columns the invariants do
not represent obstructions to a system being close to another system
in the atomic limit. The mathematics behind this statement is nontrivial.
It says that approximately measuring two incompatible observables
simultaneously is very different from doing so with three or more. 

\begin{thm} \label{thm:Lin}
\textup{(Lin's Theorem, 1995)} For any $\epsilon>0,$ there exists
$\delta>0$ such that whenever $n\in\mathbb{N}$ and two self-adjoint
matrices $H$ and $X$ in $\mathbf{M}_{n}(\mathbb{C})$ satisfy
$\left\Vert H\right\Vert \leq1$ and $\left\Vert X\right\Vert \leq1$ and
\[
\left\Vert HX-XH\right\Vert \leq\delta,
\]
there exists a pair of self-adjoint matrices $H_{1}$ and $X_{1}$
in $\mathbf{M}_{n}(\mathbb{C})$ such that 
\[
\left\Vert H_{1}-H\right\Vert \leq\epsilon,\quad\left\Vert X_{1}-X\right\Vert \leq\epsilon
\]
and $H_{1}X_{1}=X_{1}H_{1}.$
\end{thm}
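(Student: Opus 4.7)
The plan is to prove Lin's theorem by contradiction, translating the quantitative approximation question into a lifting problem for commutative $C^*$-algebras. First I would assume the conclusion fails: there exist $\epsilon_0 > 0$, a sequence of dimensions $n_k$, and contractive self-adjoint pairs $H_k, X_k \in \mathbf{M}_{n_k}(\mathbb{C})$ with $\|[H_k, X_k]\| \to 0$ but no commuting self-adjoint pair within distance $\epsilon_0$ of $(H_k, X_k)$. Rescaling if necessary keeps $\|H_k\|, \|X_k\| \le 1$ throughout the argument.

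Next I would pass to the sequence algebra $\mathcal{A} = \prod_k \mathbf{M}_{n_k}(\mathbb{C}) \big/ \bigoplus_k \mathbf{M}_{n_k}(\mathbb{C})$ and let $h, x \in \mathcal{A}$ be the images of $(H_k)$ and $(X_k)$. Because the commutator norms tend to zero, $h$ and $x$ commute in $\mathcal{A}$, so together with the identity they generate a unital commutative $C^*$-subalgebra isomorphic to $C(K)$, where $K \subset [-1,1]^2$ is the joint spectrum; equivalently, the normal element $h + ix$ has spectrum some compact $K \subset \mathbb{C}$. The inclusion $\iota : C(K) \to \mathcal{A}$ is exactly what I would try to lift through the quotient map $\pi : \prod_k \mathbf{M}_{n_k}(\mathbb{C}) \to \mathcal{A}$.

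The decisive tool is the theorem of Friis--R{\o}rdam that $C(K)$ is semiprojective for every compact planar $K$. Semiprojectivity yields a $*$-homomorphism $\widetilde\iota : C(K) \to \prod_{k \ge N} \mathbf{M}_{n_k}(\mathbb{C})$ whose composition with $\pi$ agrees with $\iota$. Applying $\widetilde\iota$ to the coordinate functions on $K$ produces commuting self-adjoints $\widetilde H_k, \widetilde X_k \in \mathbf{M}_{n_k}(\mathbb{C})$ with $\|\widetilde H_k - H_k\| \to 0$ and $\|\widetilde X_k - X_k\| \to 0$. For $k$ large enough both differences are below $\epsilon_0$, contradicting the standing assumption and proving the theorem.

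The hard part, and the entire substance of Lin's result, is the planar semiprojectivity statement. A direct approach via continuous functional calculus applied to $H_k + iX_k$ does not yield a normal matrix in any dimension-independent way, and crude spectral perturbations fail to control the full $\epsilon$-$\delta$ dependence uniformly in $n$. Semiprojectivity succeeds by gluing local lifts with partitions of unity adapted to the spectrum, combined with $K$-theoretic vanishing on planar sets, to produce liftings whose norm estimates are free of dimension. Once semiprojectivity of $C(K)$ is granted, the $C^*$-algebraic contradiction above packages the theorem with no further quantitative work, which is why this proof is natural in the present paper's operator-algebraic framework.
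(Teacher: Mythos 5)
The paper itself does not prove Lin's Theorem; it states it and refers the reader to Lin's original argument and to the Friis--R\o rdam short proof. So you are supplying a proof where the paper gives none, and the route you chose is exactly the one the paper recommends: pass to the norm sequence algebra $\prod_k \mathbf{M}_{n_k}(\mathbb{C})/\bigoplus_k \mathbf{M}_{n_k}(\mathbb{C})$, use that the limiting pair commutes, and lift the resulting $*$-homomorphism $C(K)\to\mathcal{A}$ back to the product. Your contradiction setup and the extraction of commuting self-adjoints $\widetilde H_k,\widetilde X_k$ with $\|\widetilde H_k-H_k\|,\|\widetilde X_k-X_k\|\to 0$ from a lift are both correct.

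There is, however, a genuine error in the key lemma as you state it. It is \emph{not} true that $C(K)$ is semiprojective for every compact planar $K$: for instance $C([0,1]^2)$ is not semiprojective, and more generally $C(X)$ can be semiprojective only when $X$ is a one-dimensional compact ANR. What Friis and R\o rdam actually establish, and what your contradiction argument really uses, is a strictly weaker lifting property: liftability of maps $C(K)\to\prod B_n/\bigoplus B_n$ with $B_n$ finite-dimensional (a form of weak, or matricial, semiprojectivity). Your argument only ever invokes the conclusion for the specific quotient $\prod \mathbf{M}_{n_k}/\bigoplus\mathbf{M}_{n_k}$, so once the hypothesis is replaced by the correct weak statement the reduction step goes through. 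But as written the proposal reduces Lin's theorem to a strictly stronger, false statement; and the corrected lemma carries the entire weight of the theorem, so the one-sentence gloss about gluing local lifts and $K$-theoretic vanishing cannot be waved past --- that is where all of the difficulty, including the dimension-independence of the estimates, is concentrated.
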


Lin's original proof \cite{LinAlmostCommutingHermitian} is difficult,
so perhaps a better starting point in the literature is \cite{friis1996almost}.
However, Ogata \cite{ogata2013approximating} adapted Lin's original
proof to work with more than two almost commuting Hermitian matrices
in a special case involving macroscopic observables. These observables
are multiparticle averages that avoid the $K$-theory in columns two
and above in Table~\ref{tab:range-of-invariants}.

\begin{conjecture}
Lin's Theorem remains true of we assume that $(X,H)$ is in any Atland-Zirnbauer
symmetry class and we require that $(X_{1},H_{1})$ be in the same
symmetry class.
\end{conjecture}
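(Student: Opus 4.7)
The plan is a case analysis over the ten Altland--Zirnbauer classes, in each case establishing a structured analogue of Theorem~\ref{thm:Lin} directly within the symmetry class rather than deriving it from the unconstrained theorem purely by averaging (which generally fails to yield exact commutativity). In classes A, AI, and AII, where both $X$ and $H$ transform identically under the symmetry, the pair $(X,H)$ lives in a $*$-subalgebra of $\mathbf{M}_{n}(\mathbb{C})$ --- all of it, the real symmetric subalgebra, or the $\sharp$-self-dual quaternionic subalgebra, respectively --- so the conjecture reduces to Lin's original result or to its real or quaternionic analogue. The latter two admit proofs by the same methods as Theorem~\ref{thm:Lin}, in particular via the Clifford pseudospectrum machinery of Section~\ref{sec:Introduction} applied over $\mathbb{R}$ or $\mathbb{H}$.

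For classes D and C, with only an antiunitary symmetry $\mathcal{C}$ satisfying $\mathcal{C}H\mathcal{C}^{-1}=-H$ and $\mathcal{C}X\mathcal{C}^{-1}=X$, I would apply Theorem~\ref{thm:Lin} to obtain commuting $(X_{1}',H_{1}')$ within $\epsilon/2$ of $(X,H)$ and then symmetrize by setting $X_{1}=\tfrac{1}{2}(X_{1}'+\mathcal{C}^{-1}X_{1}'\mathcal{C})$ and $H_{1}=\tfrac{1}{2}(H_{1}'-\mathcal{C}^{-1}H_{1}'\mathcal{C})$. A direct computation shows $X_{1}$ and $H_{1}$ lie in the correct class and remain within $\epsilon/2$ of the originals, while $[X_{1},H_{1}]$ is only of order $\epsilon$ because each cross-term such as $[X_{1}',\mathcal{C}^{-1}H_{1}'\mathcal{C}]=-[X_{1}',H_{1}']+[X_{1}',\mathcal{C}^{-1}H_{1}'\mathcal{C}+H_{1}']$ is $O(\epsilon)$ thanks to $\mathcal{C}^{-1}H_{1}'\mathcal{C}\approx-H_{1}'$. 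A bootstrap that applies Theorem~\ref{thm:Lin} iteratively with suitably shrinking tolerances should then drive the commutator to zero while preserving the symmetry; alternatively, one could adapt the Clifford pseudospectrum eigenvalue construction directly to the graded real $C^{*}$-algebra in which $(X,H)$ naturally sits.

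For the chiral classes AIII, BDI, CII, DIII, and CI, I would diagonalize the chiral symmetry as $S=\mathrm{diag}(I,-I)$, so that $X$ becomes block diagonal, $X=\mathrm{diag}(X_{+},X_{-})$, and $H$ becomes block off-diagonal with a single independent off-diagonal block $U$. The commutation condition $\|XH-HX\|\leq\delta$ becomes an approximate intertwining $\|X_{+}U-UX_{-}\|\leq\delta$, and the task reduces to producing nearby Hermitian matrices $X_{\pm}'$ and a nearby $U'$ with $X_{+}'U'=U'X_{-}'$. I would attempt this via a polar-decomposition argument combined with a spectral-alignment step that matches spectral projections of $X_{+}$ onto small intervals with the corresponding projections of $X_{-}$ through $U$; for BDI, DIII, CII, CI the additional antiunitary symmetry descends to an extra real or quaternionic constraint on $X_{\pm}'$ and $U'$ that must be carried throughout.

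The main obstacle is this chiral block-intertwining problem. Naive symmetrization of an unstructured Lin approximation destroys the block-off-diagonal form of $H$, so the commuting approximation must be built inside the chiral-graded subalgebra from the outset. Table~\ref{tab:range-of-invariants} warns of nontrivial $\mathbb{Z}$-valued invariants in dimension one for AIII, BDI, CII and a $\mathbb{Z}_{2}$ in DIII; the conjecture implicitly asserts that these classify locally gapped systems up to homotopy rather than obstruct approximation by a commuting pair. Translating this assertion into the vanishing of the correct $K$-theoretic obstruction to the block-intertwining problem, uniformly in $n$ and uniformly across the real or quaternionic refinements in BDI, DIII, CII, CI, is the step I expect to demand the most care, and may well require genuinely new techniques beyond the existing generation of proofs of Theorem~\ref{thm:Lin}.
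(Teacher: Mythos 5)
The statement you have been asked to prove is labelled a \emph{conjecture} in the paper, not a theorem, and the author immediately records the state of the art: Lin settled class A, and the joint work with S{\o}rensen \cite{LorSorensenDisk,LorSorensenOrtho} settles AI, D, AII and C, ``leaving the conjecture open on the five classes involving two antiunitary symmetries.'' There is therefore no proof in the paper to compare your proposal against; a complete argument would be a new research result, and your own closing sentence --- that the chiral classes ``may well require genuinely new techniques'' --- is the honest verdict.

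Within that framing, your sketch has the right rough shape but does not close even the known cases. For A, AI and AII, reducing to Lin's theorem over $\mathbb{C}$, $\mathbb{R}$ and $\mathbb{H}$ is the right reduction, but the real and self-dual versions of Lin's theorem are not corollaries of the complex case and are not delivered by the Clifford pseudospectrum machinery of Section~\ref{sec:Introduction}; that machinery diagnoses joint approximate eigenvectors, it does not manufacture exactly commuting perturbations. The real and quaternionic versions required independent structural proofs, which is what the S{\o}rensen papers supply. For D and C, your symmetrize-and-bootstrap scheme does not work as written. A sharper calculation than the one you give shows that after one symmetrization pass the new commutator equals $-[A,B]$ where $A$ and $B$ are the (norm $\leq \epsilon/2$) symmetry defects, so it is $O(\epsilon^{2})$ rather than $O(\epsilon)$; but an iterated ``Lin, then symmetrize'' loop would only converge if Lin's theorem had a quantitative dependence $\epsilon=O(\delta^{\alpha})$ with $\alpha>\tfrac12$, which is far stronger than anything known, and without it the cumulative displacement from $(X,H)$ is uncontrolled. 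The known proofs in D and C are not bootstraps; they are built inside the structured (real orthogonal, symplectic) algebra from the start. Finally, your reduction of the chiral classes to the block-intertwining problem $X_{+}U\approx UX_{-}$ is a reasonable framing, and you are right that the question is whether the $\mathbb{Z}$ or $\mathbb{Z}_{2}$ invariants of Table~\ref{tab:range-of-invariants} classify without obstructing, but you offer no argument for that vanishing uniformly in $n$, and neither does the paper. That gap is exactly what keeps this a conjecture.
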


Joint work with S\o rensen \cite{LorSorensenDisk,LorSorensenOrtho}
proved that this conjecture is valid in classes AI, D, AII and C.
Of course Lin dealt with class A, leaving the conjecture open on the
five classes involving two antiunitary symmetries. Even in the complex
case, research into Lin's theorem is not over, in particualar looking
at quantitative versions, as in \cite{hastings-2008,kachkov_Safa_DistanceToNormal}.

The only serious consequence of Lin's theorem we explore here is
the following remark about $1D$ systems.  However, there are
expected to be results regarding the classification of 2D systems,
along the lines of the results in \cite{LorSorensenOrtho}.

\begin{rem}
Consider two almost commuting Hermitian operators $H$ and $X$.   The Clifford
spectrum of this pair will be a perturmation of the Clifford
spectrum of a commuting pair, which is a finite set.  
The Clifford pseudo spectrum computed for 1D systems has
typically been a collection of small disconnected regions.
The computations done have been limited, but it is expected
that the pseudo spectrum of a 1D system look very different from
the mutated spheres we see for 2D systems.
\end{rem}

\section{Dimension zero, index formulas\label{sec:Dimension-zero}}

Many approaches to the $K$-theory of topological insulators rely
on spectrally flattened Hamiltonian, or equivalently the Fermi
projector.  For example, see
\cite{BellissardNCGquantumHall,essin2007topological,Mondr_Prodan_AIII_1D,prodan2011disordered}. 
One approach that avoids this is the scattering matrix
approach \cite{fulga2011scattering,sbierski3DtopInsScattering}.
There are many numerical issues
related to working with matrices with high degeneracy in the spectrum,
so we prefer to work directly with the full class of invertible Hermitian
matrices.

The homotopy classification in dimension zero is rather standard.
However the formulas for the invariants are not all standard.
Our invariants are only designed to work for finite models, but
it is anticipated that there will be connections with indexes
defined to work in infinite volume, such as 
\cite{BellisardNthChern,Mondr_Prodan_AIII_1D,prodan2011disordered}.

\begin{thm}
In each of the ten Atland-Zirnbauer symmetry classes, two invertible
$n$-by-$n$ Hermitian matrices are homotopic via invertibles in that
class if and only if the values of an index (in the group $0$ or
$\mathbb{Z}$ or $\mathbb{Z}_{2}$) for each matrix are equal. This
index can be computed in $O(n^{3})$ time. The index for each class
is listed below.
\end{thm}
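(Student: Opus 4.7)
The plan is to proceed class by class, exhibiting in each case a canonical form for an invertible Hermitian $H$ compatible with the given symmetries, extracting a discrete numerical invariant from that form, and then producing an explicit symmetry-respecting homotopy between any two matrices that share the invariant. Throughout, the ``only if'' direction will be essentially automatic, since each proposed invariant will be locally constant on the space of invertibles in its class, and a continuous $\mathbb{Z}$- or $\mathbb{Z}_{2}$-valued function is constant on each connected component. The substantive direction is ``if,'' which requires an explicit deformation.

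For the three classes whose dimension-zero group is $\mathbb{Z}$ (namely A, AI, AII), I would use the signature. In class A, the functional-calculus homotopy $t \mapsto (1-t)H + t\,\mathrm{sign}(H)$ runs through invertible Hermitians and terminates at the signed spectral projection $P_{+}-P_{-}$; connectedness of the Grassmannian of rank-$p$ projections then completes the argument, and the invariant is $\mathrm{Sig}(H)$. In AI the same strategy is carried out inside the real symmetric subspace using connectedness of the real Grassmannian, and in AII Kramers' degeneracy forces every eigenvalue to have even multiplicity, so the invariant is $\tfrac{1}{2}\mathrm{Sig}(H)$ and the homotopy is carried out in the quaternionic Hermitian subspace defined by the $\sharp$ operation.

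For the five classes carrying a chiral symmetry $S$ (AIII, BDI, DIII, CII, CI), I would first diagonalize $S$, after which $H$ takes the off-diagonal form $\left(\begin{smallmatrix} 0 & B \\ B^{*} & 0\end{smallmatrix}\right)$ with $B$ invertible and subject to whatever extra structure the antiunitary symmetries impose. In AIII, $B$ is arbitrary complex invertible and $GL_{n}(\mathbb{C})$ is path connected, yielding $0$. In BDI, $B$ is real invertible and the invariant is $\mathrm{sign}(\det B) \in \mathbb{Z}_{2}$, since $GL_{n}^{+}(\mathbb{R})$ and $GL_{n}^{-}(\mathbb{R})$ are each connected. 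In DIII and CII, $B$ lies in a $GL(\mathbb{H})$-like group which is connected, giving $0$. In CI, $B$ is complex symmetric invertible, a connected space, giving $0$. The two remaining classes D and C lack chirality: in D, $\mathcal{C}^{2}=I$ yields a real structure in which $H=iM$ with $M$ real antisymmetric invertible, and the invariant is $\mathrm{sign}(\mathrm{Pf}(M)) \in \mathbb{Z}_{2}$, since the real invertible antisymmetric matrices have exactly two path components distinguished by the Pfaffian sign. In C, $\mathcal{C}^{2}=-I$ supplies a quaternionic self-dual structure whose space of invertibles is connected, contributing $0$.

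Finally, the $O(n^{3})$ bound is routine: the signature is extracted from any cubic-time Hermitian eigensolver (or an $LDL^{*}$ decomposition of Sylvester type), the determinant from $LU$, and the Pfaffian from the skew-symmetric tridiagonalization of Parlett--Reid. The main obstacle in writing up the proof is organizational: keeping the real and quaternionic structures straight across all ten classes and verifying that the $\sharp$ operation and the sign relations $\mathcal{T}^{2}=\pm I$, $\mathcal{C}^{2}=\pm I$, $S=\mathcal{T}\circ\mathcal{C}$ are invoked correctly, since a misplaced sign alters the connected-component count among $\mathbb{Z}$, $\mathbb{Z}_{2}$, and $0$.
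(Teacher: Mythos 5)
Your proposal follows essentially the same route as the paper: a symmetry-respecting canonical form obtained by spectral decomposition (or, in the chiral classes, by diagonalizing $S$ and passing to the off-diagonal block), with connectivity of the relevant classical group or Grassmannian providing the homotopy. The paper's explicit subsections only treat the five classes with nontrivial $0$D group (A, AI, BDI, D, AII), invoking spectrally-flattened factorizations such as $H=UDU^{*}$ with $U$ in the appropriate classical group; your functional-calculus retraction $t\mapsto(1-t)H+t\,\mathrm{sign}(H)$ followed by Grassmannian connectedness is the same argument in slightly different clothing, and you additionally sketch why the remaining five classes are trivial, which the paper leaves to the table. Two small discrepancies of convention rather than substance: in AII the paper normalizes as $\tfrac{1}{4}\mathrm{Sig}(H)$ while you write $\tfrac{1}{2}\mathrm{Sig}(H)$ (Kramers doubling only forces divisibility of $\mathrm{Sig}(H)$ by $2$, so the factor of $\tfrac14$ is a normalization chosen to mesh with the higher-dimensional derived Hamiltonians, not something forced by the classification itself); and in class D you take $\mathrm{sign}(\mathrm{Pf}(M))$ with $H=iM$ while the paper uses $\mathrm{sign}(\mathrm{Pf}(iH))$, which differ by $(-1)^{n}$ when $M$ is $2n$-by-$2n$ with $n$ odd --- both are valid $\mathbb{Z}_{2}$ invariants, just opposite labellings of the two components. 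The $O(n^{3})$ claim via $LDL^{*}$, $LU$, and skew-symmetric tridiagonalization matches the paper's remarks exactly.
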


An important consideration is how these invariants can be computed
more quickly than $O(n^{3})$ when the matrices are sparse. We will
give brief remarks on sparse algorithms below.

\subsection{Class A in 0D\label{sub:Class-A-in-zero}}

There are no symmetries, except $H=H^{*}$ in $\mathrm{GL}_{n}(\mathbb{C})$.
The index is
\[
\tfrac{1}{2}\mathrm{Sig}\left(H\right)
\]
and it is just a variation on the spectral theorem that this classifies
such matrices up to homotopy.
\begin{rem}
\label{rem:complexSignature} The signature can be computed using
the LDLT decomposition, which finds a unit lower triangular matrix
$L$ and a block diagonal matrix $D$ with $1$-by-1 and $2$-by-2
blocks so that $H=LDL^{*}.$ By Sylvester's law of inertia, 
$\mathrm{Sig}\left(H\right)=\mathrm{Sig}\left(D\right)$
and the signature of $D$ can be found in linear time. Since $LDLT$
is $O(n^{3})$ we are done in this case. If $H$ is sparse, there
is a readily available sparse version of the LDLT algorithm \cite{DuffSparseFactorSym}. 
\end{rem}

\subsection{Class AI in 0D}

We now have $H$ real. We can view that as the added symmetry $H^{\mathrm{T}}=H$.
The index is again 
\[
\tfrac{1}{2}\mathrm{Sig}\left(H\right)
\]
and the algorithms mentioned in \S \ref{sub:Class-A-in-zero} are
available also in the real case. The essential fact in proving that
this invariant classifies is that $H$ can be factored as $H=U^{*}DU$
with $D$ diagonal with decreasing diagonal terms and $U$ being real
orthogonal with determinant one.

\subsection{Class BDI in 0D}

The symmetries here can be taken to be $H^{\mathrm{T}}=H$ and $H\Gamma=-\Gamma H$
for $H=H^{*}$ in $\mathrm{GL}_{n}(\mathbb{C})$ with 
\[
\Gamma=\left[\begin{array}{cc}
I & 0\\
0 & -I
\end{array}\right].
\]
That is,
\[
H=\left[\begin{array}{cc}
0 & C\\
C^{\mathrm{T}} & 0
\end{array}\right]
\]
for $C$ an invertible real matrix. The index in this case is
\[
\mathrm{sign}\left(\det\left(C\right)\right)
\]
in $\mathbb{Z}_{2}=\{\pm1\}$. Recall that two real invertible matrices
can be path connected if and only if their determinants are of the
same sign.
\begin{rem}
There is a noncanonical choice to be made here, specifically a real
unitary from $\mathbb{C}^{n}\oplus0\mapsto0\oplus\mathbb{C}$. 
\end{rem}

\begin{rem}
The sign of the determinant can be computed using the LU decomposition,
which finds a lower triangular matrix $L$ and an upper triangular
matrix $R$ so that $C=LR.$ Then
\[
\mathrm{sign}\left(\det\left(C\right)\right)
=
\prod\mathrm{sign}(L_{jj})\prod\mathrm{sign}(R_{jj})
\]
which avoids the underflow and overflow associated with computing
determinants of large matrices. Since $LU$ is $O(n^{3})$ we are
done in this case. If $H$ is sparse, there is are readily available
sparse versions of the LU algorithm \cite{DavisSparseLU}. 
\end{rem}

\subsection{Class D in 0D}

The symmetry here can be taken to be $H^{\mathrm{T}}=-H$ for $H=H^{*}$
in $\mathrm{GL}_{2n}(\mathbb{C})$. The eigenvalues of the real, normal
matrix $iH$ will come in conjugate pairs so its determinant is positive.
This makes the Pfaffian real, and our invariant is 
\[
\mathrm{sign}\left(\mathrm{Pf}\left(iH\right)\right).
\]
The homotopy classification can be understood here in terms of the
factorization \cite[Theorem 9.4]{HastLorTheoryPractice} of $H$ as 
$H=UDU^{*}$ where $U$ is real orthogonal and and $D$ is diagonal.

\begin{rem}
The sign of the Pfaffian of $K=iH$ can be computed using a decomposition
$K=LDL^{\mathrm{T}}$ where $D$ is tridiagonal. If $H$ is banded,
which will happen when working with derived Hamiltonians based on
1D systems, one can use software for Pfaffians by 
Wimmer \cite{wimmer2011efficient}.  There is an algorithm, but no 
available software, for the more general case of sparse, real skew-symmetric 
matrices \cite{DuffSparseFactorSkew}.
\end{rem}

\subsection{Class AII, 0D}

We now have $H$ self-dual, so $H^{*}=H^{\sharp}=H$. The index is
\[
\frac{1}{4}\mathrm{Sig}\left(H\right)
\]
due to Kramer's doubling. The homotopy classification is best understood
here in terms of the factorization of \cite[Theorem 4.6]{HastLorTheoryPractice},
$H=UDU^{*}$ where $U$ is a symplectic unitary and $D$ is diagonal
with $D_{j,,j}=D_{j+n,,j+n}$.

\section{Dimension one, index formulas\label{sec:Dimension-one}}

Although Lin's theorem is deep, we can avoid it when we have two incompatible
observables if we are willing to study systems up to homotopy. Given
$H$ and $X$ Hermitian matrices, one form of our local gap condition
\[
\left[\begin{array}{cc}
0 & \left(X+iH\right)^{*}\\
X+iH & 0
\end{array}\right]\mbox{ is invertible}
\]
translates to the condition $W=X+iH$ is invertible. It is easy to
deform an invertible to a unitary by the path
\[
W_{t}=W\left(W^{*}W\right)^{t}
\]
for $t$ in $[0,1]$. Extracting the Hermitian and anti-Hermitian
parts $X_{t}=\tfrac{1}{2}W_{t}^{*}+\tfrac{1}{2}W_{t}$ and 
$H_{t}=\tfrac{i}{2}W_{t}^{*}-\tfrac{i}{2}W_{t}$
we get a path to a system the atomic limit. At all points on the path
the commutator $\left[X_{t},H_{t}\right]$ will remain small if the
initial commutator is assumed sufficiently small. One can check that
this construction respects the various symmetries. For example, in
class C we have $H^{\sharp}=-H$ and $X^{\sharp}=-X$. This implies
$W^{\sharp}=W^{*}$. Functional calculus commutes with $\sharp$ so
\[
\left(W_{t}\right)^{\sharp}=\left(W^{*}W\right)^{t}W^{*}=\left(W_{t}\right)^{*}.
\]
Finally
\[
X_{t}^{\sharp}=\tfrac{1}{2}W_{t}^{*}+\tfrac{1}{2}W_{t}=X_{t}
\]
and
\[
H_{t}^{\sharp}=\tfrac{1}{2}W_{t}+\tfrac{1}{2}W_{t}^{*}=-H_{t}.
\]

By this homotopy argument, we can simply check that a formula is invariant
under homotopy and that correctly classifies systems in the atomic
limit, at least up to homotopy.

\subsection{Class AIII in 1D \label{sub:Class-AIII-1D}}

We can assume we have $H=H^{*}$ and $X=X^{*}$ in $\mathbf{M}_{n}(\mathbb{C})$,
with $X+iH$ assumed to be invertible. With $ $ 
\[
\Gamma=\left[\begin{array}{cc}
I & 0\\
0 & -I
\end{array}\right]
\]
defining a grading, we have $H\Gamma=-\Gamma H$ and $X\Gamma=\Gamma X$.
This means that $\left(X+iH\right)\Gamma$ is Hermitian. It is invertible
because $\Gamma$ is unitary. The index we use here is
\[
\tfrac{1}{2}\mathrm{Sig}\left(\left(X+iH\right)\Gamma\right).
\]
Consider the case where $W=X+iH$ is unitary, in this class. This
means $U^{\sigma}=U$ where $\sigma$ denotes conjugation by $\Gamma$.
This symmetry will hold also for $f$(U) so long as $f(\overline{\lambda})=f(\lambda).$
We can select a vertical line in the complex plane, near the imaginary
axis, that misses the spectrum of $U$ and get a homotopy $f_{t}$
from the identity function to the function that maps all to the right
of the line to $1$ and all to the left to $-1$. Thus we have a homotopy
to a unitary that is symmetric. Back in the $X$ and $H$ picture,
we can assume $H=0$. Along with the fact that $X$ is even, we are
in the situation
\[
X=\left[\begin{array}{cc}
A & 0\\
0 & B
\end{array}\right]
\]
where $A$ and $B$ are Hermitian. Since $H=0$ the index is
\[
\tfrac{1}{2}\mathrm{Sig}\left(A\right)-\tfrac{1}{2}\mathrm{Sig}\left(B\right).
\]
The reason this is the correct index to classify, where it seems we
need two indices, is that we can bring back nonzero H and use paths such
as
\[
U_{t}=\left[\begin{array}{cccccc}
\cos(t) &  &  & \sin(t)\\
 & 1 &  &  & 0\\
 &  & \ddots &  &  & \ddots\\
\sin(t) &  &  & -\cos(t)\\
 & 0 &  &  & -1\\
 &  & \ddots &  &  & \ddots
\end{array}\right]
\]
to increase the signature of $B$ by to while decreasing the signature
of $A$ by the same amount.

\subsection{Class BDI in 1D}

We can assume $H$ is odd and real and $X$ is even and real, where
even and odd is determined by the grading operator 
\[
\Gamma=\left[\begin{array}{cc}
I & 0\\
0 & -I
\end{array}\right].
\]
We can use the index from \S \ref{sub:Class-AIII-1D}, but that turns
out to be slower to compute that is necessary.

\begin{example}
Suppose $\alpha$ is real, between $-1$ and $1$, and set $\beta=\sqrt{1-\alpha^{2}}$.
Our example has
\[
X=\left[\begin{array}{cc}
\alpha & 0\\
0 & \alpha
\end{array}\right]
\]
 and
\[
H=\left[\begin{array}{cc}
0 & \beta\\
\beta & 0
\end{array}\right].
\]
These have the correct symmetries, and
\[
\left(X+iH\right)\Gamma=\left[\begin{array}{cc}
\alpha & -i\beta\\
i\beta & \alpha
\end{array}\right].
\]
This has signature $0$. It is unitarily equivalent to a real matrix,
which is not an accident.
\end{example}

We use the unitary $Q=\omega I+\overline{\omega}\Gamma$ 
where $\omega=\tfrac{1}{2}-\tfrac{i}{2}$.  We can conjugate by a unitary
matrix without altering the signature, and we compute 
\begin{align*}
Q\left(X+iH\right)\Gamma Q^{*} 
& =\left(\omega I+\overline{\omega}\Gamma\right)\left(X+iH\right)\Gamma\left(\overline{\omega}I+\omega\Gamma\right)\\
& =X\Gamma+H
\end{align*}
and discover a nice index,
\[
\frac{1}{2}\mathrm{Sig}\left(X\Gamma+H\right)
\]
which involves now the signature of a real symmetric matrix. See Remark~\ref{rem:complexSignature}
on computing signature, especially for sparse matrices.

The proof here that this invariant suffices is almost identical to
the argument in Section~\ref{sub:Class-AIII-1D}.

\subsection{Class D in 1D}

Since $\mathcal{C}^{2}=I$ we can assume $H^{\mathrm{T}}=-H$ and
$X^{\mathrm{T}}=X$. Since these are Hermitian, we see $X$ is real
and $H$ is pure-imaginary, so $X+iH$ is real. We are assuming it
to be invertible. The index here is 
\[
\mathrm{sign}\left(\det\left(X+iH\right)\right).
\]
We know the sign of the determinant classifies real orthogonal matrices
up to homotopy. We apply this to $X+iH$ and extract the needed Hermitian
and anti-Hermitian parts.

\subsection{Class DIII, 1D}

We can assume $H$ is imaginary and self-dual, and $X$ is real and
self-dual, but in this case we find that even and odd are to be determined
by the grading operator 
\[
\Gamma=\left[\begin{array}{cc}
0 & -iI\\
iI & 0
\end{array}\right].
\]
This ensures that the transpose is conjugated to the dual. We use
the unitary $Q=\omega I+\overline{\omega}\Gamma$ where 
$\omega=\tfrac{1}{2}-\tfrac{i}{2}$.  We compute 
\begin{align*}
Q\left(X+iH\right)\Gamma Q^{*} 
& =\left(\omega I+\overline{\omega}\Gamma\right)\left(X+iH\right)\Gamma\left(\overline{\omega}I+\omega\Gamma^{*}\right)\\
& =\left(\omega I+\overline{\omega}\Gamma\right)\left(X+iH\right)\left(\overline{\omega}\Gamma+\omega I\right)\\
& =-\frac{i}{2}\left(X+iH\right)+\frac{1}{2}\Gamma\left(X+iH\right)+\frac{1}{2}\left(X+iH\right)\Gamma+\frac{i}{2}\Gamma\left(X-iH\right)\Gamma\\
& =X\Gamma+H
\end{align*}
and we check its symmetries 
\[
\left(X\Gamma+H\right)^{*}=(\Gamma)X+H=\left(X\Gamma+H\right)
\]
and since $\Gamma$ and $H$ are imaginary and $X$ real, this is
imaginary, and so skew-symmetric. We can compute a Pfaffian, as before,
and our index is
\[
\mathrm{sign}\left(\mathrm{Pf}\left(i\left(X\Gamma+H\right)\right)\right).
\]

Here is a sketch of why this invariant classifies. As in class AIII
we are able to use functional calculus to reduce first to the case
of $X+iH$ being unitary and then also to where $H=0$. If X is real
symmetric and self-dual and unitary, it can be shown that it will
factor as 
\[
Q^{*}\left[\begin{array}{cc}
D\\
 & D
\end{array}\right]Q
\]
with $Q$ real orthogonal and symplectic and 
\[
D=\left[\begin{array}{cccccc}
1\\
 & \ddots\\
 &  & 1\\
 &  &  & -1\\
 &  &  &  & \ddots\\
 &  &  &  &  & -1
\end{array}\right].
\]
To see how to finish the classification, notice the invariant comes
out differently on
\[
X_{+}=\left[\begin{array}{cc}
1\\
 & 1
\end{array}\right],\quad X_{+}=\left[\begin{array}{cc}
-1\\
 & -1
\end{array}\right]
\]
since 
\[
\mathrm{Pf}\left(iX_{\pm}\Gamma\right)=\mathrm{Pf}\left(\left[\begin{array}{cc}
0 & \pm1\\
\mp1 & 0
\end{array}\right]\right)=\pm1.
\]
On the other hand
\[
\left[\begin{array}{cccc}
\cos(t) & 0 & 0 & \sin(t)\\
0 & \cos(t) & -\sin(t) & 0\\
0 & -\sin(t) & \cos(t) & 0\\
\sin(t) & 0 & 0 & \cos(t)
\end{array}\right]
\]
is a real, self-dual, Hermitian unitary path from $I_{4}$ to $-I_{4}$.

\subsection{Class CII, 1D}

This is much like the BDI situation. On \textbf{$\mathbf{M}_{4n}(\mathbb{C})$
}the operation corresponding to particle-charge conjugation is
\[
\left[\begin{array}{cc}
A & B\\
C & D
\end{array}\right]^{\kappa}=\left[\begin{array}{cc}
A^{\sharp} & C^{\sharp}\\
B^{\sharp} & D^{\sharp}
\end{array}\right].
\]
The grading operator is 
\[
\Gamma=\left[\begin{array}{cc}
I & 0\\
0 & -I
\end{array}\right].
\]
The operations corresponding to time-reversal is 
\begin{align*}
\left[\begin{array}{cc}
A & B\\
C & D
\end{array}\right]^{\tau} & =\left[\begin{array}{cc}
I & 0\\
0 & -I
\end{array}\right]\left[\begin{array}{cc}
A & B\\
C & D
\end{array}\right]^{\kappa}\left[\begin{array}{cc}
I & 0\\
0 & -I
\end{array}\right]\\
 & =\left[\begin{array}{cc}
I & 0\\
0 & -I
\end{array}\right]\left[\begin{array}{cc}
A^{\sharp} & C^{\sharp}\\
B^{\sharp} & D^{\sharp}
\end{array}\right]\left[\begin{array}{cc}
I & 0\\
0 & -I
\end{array}\right]\\
 & =\left[\begin{array}{cc}
A^{\sharp} & -C^{\sharp}\\
-B^{\sharp} & D^{\sharp}
\end{array}\right].
\end{align*}
To check these are the correct type, we notice
\[
\left[\begin{array}{cc}
A & B\\
C & D
\end{array}\right]^{\kappa}=-\left[\begin{array}{cc}
Z & 0\\
0 & Z
\end{array}\right]\left[\begin{array}{cc}
A & B\\
C & D
\end{array}\right]^{\mathrm{T}}\left[\begin{array}{cc}
Z & 0\\
0 & Z
\end{array}\right]
\]
and 
\[
\left[\begin{array}{cc}
A & B\\
C & D
\end{array}\right]^{\tau}=-\left[\begin{array}{cc}
Z & 0\\
0 & -Z
\end{array}\right]\left[\begin{array}{cc}
A & B\\
C & D
\end{array}\right]^{\mathrm{T}}\left[\begin{array}{cc}
Z & 0\\
0 & -Z
\end{array}\right],
\]
and since 
\[
\left[\begin{array}{cc}
Z & 0\\
0 & Z
\end{array}\right]^{2}=\left[\begin{array}{cc}
Z & 0\\
0 & -Z
\end{array}\right]^{2}=-\left[\begin{array}{cc}
I & 0\\
0 & I
\end{array}\right]
\]
we have $\mathcal{C}\circ\mathcal{C}=-I$ and $\mathcal{T}\circ\mathcal{T}=-I$.

Our matrices are Hermitian $X$ and $H$ with $X$ even and $X^{\kappa}=X$,
and with $H$ odd and $H^{\kappa}=H$. The index we use is
\[
\frac{1}{4}\mathrm{Sig}\left(X\Gamma+H\right).
\]
The reason for the extra factor of one-half will evident from the
symmetries here. We note
\[
\left(X\Gamma+H\right)^{*}=\Gamma X+H=X\Gamma+H
\]
and
\[
\left(X\Gamma+H\right)^{\tau}=\Gamma^{\tau}X+H=\Gamma X+H=X\Gamma+H
\]
so this matrix is ``self-dual'' and Hermitian, so has Kramer's doubling.

Arguing as before, we reduce to the case $H=0$ and $X$ unitary.
This means 
\[
X=\left[\begin{array}{cc}
Y\\
 & Z
\end{array}\right]
\]
with $Y$ and $Z$ both self-dual, Hermitian and unitary. Since
\[
\frac{1}{4}\mathrm{Sig}\left(X\Gamma\right)=\frac{1}{4}\mathrm{Sig}\left[\begin{array}{cc}
Y\\
 & -Z
\end{array}\right]=\frac{1}{4}\mathrm{Sig}\left(Y\right)-\frac{1}{4}\mathrm{Sig}\left(Z\right)
\]
we initially seem to be short by one invariant. The path
\[
U_{t}=\left[\begin{array}{cccc}
\cos(t) &  & \sin(t)\\
 & \cos(t) &  & \sin(t)\\
\sin(t) &  & -\cos(t)\\
 & \sin(t) &  & -\cos(t)
\end{array}\right]
\]
illustrates how to use nonzero $H_{t}$ to increase one signature
by four while decreasing the other by four.

\section{Dimension two, index formulas\label{sec:Dimension-two}}

What we are after here are invariants that, for two dimensional finite
systems on a square, can be quickly computed and that can explain
the robustness of gapless edge modes in the face of disorder. We only
consider disorder that respects the symmetry class. However, the simplicity
of these invariants should mean they function well for disorder that
is only approximately invariant under the needed symmetries.

\subsection{Class A in 2D}

We have no reality condition and use the index from Section~\ref{sec:Introduction},
at the origin, so
\[
\frac{1}{2}\mathrm{Sig}\left(X\otimes\sigma_{x}+Y\otimes\sigma_{y}+H\otimes\sigma_{z}\right).
\]

Notice that if consider
\[
H'=X\otimes\sigma_{x}+Y\otimes\sigma_{y}+H\otimes\sigma_{z}
\]
as a derived Hamiltonian, it constitutes a class A system in 0D.

\subsection{Class D in 2D\label{sub:Class-D-in-2D}}

We can assume, after perhaps a unitary change of basis, that $H$
is imaginary while $X$ and $Y$ are real, all being Hermitian. We
select our $\Gamma_{j}$ so that $H$ is tensored with $\sigma_{y}$,
which is imaginary. Therefore
\[
H'=X\otimes\sigma_{z}+Y\otimes\sigma_{x}+H\otimes\sigma_{y}
=
\left[\begin{array}{cc}
X & Y-iH\\
Y+iH & X
\end{array}\right]
\]
defines a 0D system in class AI. That is, it is real symmetric and,
by the local gap assumption, invertible. In terms of Table~\ref{tab:range-of-invariants}
this moves us two steps up and two to the left. This is similar to
the scattering matrix approach of Fulga et al.\ \cite{fulga2011scattering}
which, in slightly different geometry, moves one step up and one step
left. 

We utilize the invariant from Class AI in 0D and use
\[
\frac{1}{2}\mathrm{Sig}\left(X\otimes\sigma_{z}+Y\otimes\sigma_{x}+H\otimes\sigma_{y}\right).
\]

This is invariant under homotopy, is additive with respect to direct
sums, and is trivial on trivial systems. It could be the trivial invariant.
Perhaps the best way to show these invariants nontrivial is to use
them in an numerical study. We do that in some cases. Here we derive
the existence of a nontrivial example mathematically.

The standard example \cite{ChoiAlmostNotNearly,HastingsLoringWannier}
of three almost commuting Hermitian matrices
\[
\frac{1}{2}\mathrm{Sig}\left[\begin{array}{cc}
X & Y-iH\\
Y+iH & X
\end{array}\right]=\pm1
\]
has one matrix, say $H$, imaginary and the others real. All we are
missing is having $X$ and $Y$ commute. By the class AI version of
Lin's Theorem \cite{LorSorensenDisk} we can modify these a little
to produce $X_{1}$ and $Y_{1}$ that are commuting real orthogonal
matrices. If we use a large enough matrix size, $B(X,Y,H)$ will have
a large spectral gap, meaning the index will be unchanged when applied
to $(X_{1},Y_{1},H)$.

\subsection{Class DIII in 2D}

We are given symmetries $H^{\sharp}=H$, $X^{\sharp}=X$ and $Y^{\sharp}=Y$on
top of knowing $X$ and $Y$ are real and $H$ is imaginary, as in
Section~\ref{sub:Class-D-in-2D}. As we did there, we set
\[
H'=X\otimes\sigma_{z}+Y\otimes\sigma_{x}+H\otimes\sigma_{y}.
\]
This is real, so $H'^{\mathrm{T}\otimes\mathrm{T}}=H'$, while $H'^{\sharp\otimes\sharp}=H'.$
With these symmetry operations, the grading operator is $Z\otimes Z$.
We need to select partial isometries into the $1$ and $-1$ eigenspaces
for this grading operator, and choose
\[
W_{+}=\frac{1}{\sqrt{2}}\left[\begin{array}{cc}
I & 0\\
0 & I\\
0 & -I\\
I & 0
\end{array}\right],\quad W_{-}=\frac{1}{\sqrt{2}}\left[\begin{array}{cc}
I & 0\\
0 & I\\
0 & I\\
-I & 0
\end{array}\right].
\]
Now we can use the index from Class BDI in dimension zero and define
our index as
\[
\mathrm{sign}\left(\det\left(W_{+}^{*}H'W_{-}\right)\right).
\]
We want to see this is not a trivial index. 

Suppose $(X,Y,H)$ is any example from class D, which can exist with both odd
index and even. Let our class DIII example be the $2n$-by-$2n$ matrices
\begin{equation}
X_{1}=\left[\begin{array}{cc}
X\\
 & X
\end{array}\right],\quad Y_{1}=\left[\begin{array}{cc}
Y\\
 & Y
\end{array}\right],\quad H_{1}=\left[\begin{array}{cc}
H\\
 & -H
\end{array}\right].\label{eq:double_example}
\end{equation}
The index is then the sign of the determinant of
\begin{align*}
 & \frac{1}{2}\left[\begin{array}{cccc}
I & 0 & 0 & I\\
0 & I & -I & 0
\end{array}\right]\left[\begin{array}{cccc}
X & 0 & Y-iH & 0\\
0 & X & 0 & Y+iH\\
Y+iH & 0 & -X & 0\\
0 & Y-iH & 0 & -X
\end{array}\right]\left[\begin{array}{cc}
I & 0\\
0 & I\\
0 & I\\
-I & 0
\end{array}\right]\\
 & =\quad\left[\begin{array}{cc}
X & Y-iH\\
-Y-iH & X
\end{array}\right]\\
 & =\quad\left[\begin{array}{cc}
I & 0\\
0 & -I
\end{array}\right]\left[\begin{array}{cc}
X & Y-iH\\
Y+iH & -X
\end{array}\right]
\end{align*}
so the 
\[
\mathrm{ind}(X_{1},Y_{1},H_{1})=(-1)^{\mathrm{ind}(X,Y,H)}.
\]

\begin{rem}
Notice that in this example, the index comes out $1$ when we start
with $X$ and $Y$ and $H$ all commuting. An easy homotopy argument
shows we get trivial index of $1$ whenever we start in class DIII
with all three matrices commuting. This means we made valid choices
for $W_{\pm}$.
\end{rem}

\subsection{Class AII in 2D}

We have one symmetry $H^{\sharp}=H$, $X^{\sharp}=X$ and $Y^{\sharp}=Y$.
We set
\[
H'=X\otimes\sigma_{x}+Y\otimes\sigma_{y}+H\otimes\sigma_{z}
\]
and find $H'^{\sharp\otimes\sharp}=-H'.$ We are in a nonstandard
version of class D, dimension zero. The unitary matrix
\[
Q=\frac{1}{\sqrt{2}}\left[\begin{array}{cc}
I & -iZ\\
iZ & I
\end{array}\right]=\frac{1}{\sqrt{2}}\left[\begin{array}{cccc}
I & 0 & 0 & -iI\\
0 & I & iI & 0\\
0 & iI & I & 0\\
-iI & 0 & 0 & I
\end{array}\right]
\]
brings us to the standard picture and our invariant is
\[
\mathrm{sign}\left(\mathrm{Pf}\left(iQ^{*}H'Q\right)\right).
\]
We conduct a numerical study that provides ample evidence that this
is not a trivial invariant.

\begin{rem}
There are many choices here, including conventions as to the definition
of the Pfaffian, so that it is easy to program this wrong. Done correctly,
the index is $1$ when $H$ and $X$ and $Y$ all commute.
\end{rem}

\subsection{Class C in 2D}

We have one symmetry $H^{\sharp}=-H$, $X^{\sharp}=X$ and $Y^{\sharp}=Y$.
We set
\[
H'=X\otimes\sigma_{z}+Y\otimes\sigma_{x}+H\otimes\sigma_{y}
\]
so that $H'^{\sharp\otimes\mathrm{T}}=H'.$ We are in a nonstandard
version of class AII, dimension zero. The matrix 
\[
Q=\left[\begin{array}{cccc}
I & 0 & 0 & 0\\
0 & 0 & I & 0\\
0 & I & 0 & 0\\
0 & 0 & 0 & I
\end{array}\right]
\]
converts is to the standard dual operation. Our invariant is
\[
\frac{1}{4}\mathrm{Sig}\left(QH'Q\right)=\frac{1}{4}\mathrm{Sig}\left(H'\right).
\]
To show nontrivial values of this index are possible, consider any
example from class D. Let our class C example be
\begin{equation}
X_{1}=\left[\begin{array}{cc}
X\\
 & X
\end{array}\right],\quad Y_{1}=\left[\begin{array}{cc}
Y\\
 & Y
\end{array}\right],\quad H_{1}=\left[\begin{array}{cc}
H\\
 & H
\end{array}\right].\label{eq:double_example-C}
\end{equation}
Then
\begin{align*}
QH'Q & =Q\left[\begin{array}{cccc}
X & 0 & Y-iH & 0\\
0 & X & 0 & Y-iH\\
Y+iH & 0 & -X & 0\\
0 & Y+iH & 0 & -X
\end{array}\right]Q\\
 & =\left[\begin{array}{cccc}
X & Y-iH\\
Y+iH & -X\\
 &  & X & Y-iH\\
 &  & Y+iH & -X
\end{array}\right]\\
\end{align*}
so 
\[
\mathrm{ind}_{\mathrm{C}}(X_{1},Y_{1},H_{1})=\mathrm{ind}_{D}(X,Y,H).
\]

\section{Dimension three, index formulas\label{sec:Dimension-three}}

We believe in all five interesting classes in three dimensions we
can move up two and left two in Table~\ref{tab:The-symmetry-classes}
and arrive at a useful index. For now, we study this just in Class
AII. In this class, we have examples from physics research (with Hastings
\cite{HastLorTheoryPractice}) to show this invariant is interesting,
and theorems about the $K$-theory of real $C^{*}$-algebras (with
Boersema \cite{BL-realKO}) that allow is to identify the invariant.

The higher dimensional invariants in classes just one antiunitary
symmetry, can be explained using the theory of real (ungraded) $C^{*}$-algebras.
This will be discussed elsewhere.

\subsection{Class AII in 3D}

We have one symmetry $H^{\sharp}=H$,$X^{\sharp}=X$, $Y^{\sharp}=Y$,
and $Z^{\sharp}=Z$. We set
\[
H'=X\otimes\sigma_{x}+Y\otimes\sigma_{y}+Z\otimes\sigma_{z}
\]
and 
\[
X'=H\otimes I
\]
and find and find $H'^{\sharp\otimes\sharp}=-H'$ and $X'^{\sharp\otimes\sharp}=X'$
so we have a derived 1D system in class D. The unitary matrix
\[
Q=\frac{1}{\sqrt{2}}\left[\begin{array}{cccc}
I & 0 & 0 & -iI\\
0 & I & iI & 0\\
0 & iI & I & 0\\
-iI & 0 & 0 & I
\end{array}\right]
\]
brings us to the standard picture and our invariant is
\[
\mathrm{sign}\left(\det\left(Q^{*}\left(X'+iH'\right)Q\right)\right)=\mathrm{sign}\left(\det\left(X'+iH'\right)\right).
\]
 Again we offer a numerical study that provides ample evidence that
this is not the trivial invariant. 
\begin{rem}
The advantage of the left form of the invariant is that an LU factorization
will be faster and take less memory.
\end{rem}

\section{Bulk-edge correspondence}

We now prove a relation between the index values in the
pseudoresolvant and the pseudospecturm.  Essentially, between
two points in energy-position space where the index changes,
there must be a point in the pseudospectrum.  So at that
point there must be a vector approximately localized in
position and energy.  For weak disorder, this will mean
a vector localized at the edge and localized at the Fermi
level.  For strong disoroder, things get messier, with
the ``edge modes'' forming a ring around the sample, sometimes
moved in from the edge.  

Unlike in \cite{EsinGurari-Bulk-boundary} and \cite{Kane_et_al_Zmod2QuantumHall},
for example, we do not have a separate Hamiltonian for the edge states.
Rather we are classifying the approximate zero modes of the Hamiltonian.
In a system with weak disorder and nontrivial invariant in the center, these
approximate zero modes cannot localize in the bulk, but they can and do
localize near the edge.  Since the value of the invariant is robust against 
disorder, at least disorder with the correct symmetry, whatever is happening
at the edge is robust.

We are not claiming the converse.  Most likely, in higher 
dimensions there are  robust edge effects that can exist without
these particular invariants being nonzero.  Indeed, see the
discussion of stabilization in $K$-theory in \cite{kitaev-2009}.

We will use $\mathrm{ind}(X_{1},\dots,X_{D+1})$ generically for any
of the indices described on Sections~\ref{sec:Dimension-zero}-\ref{sec:Dimension-three}.
We do not assume in this section that the first $D$ matrices commute.
Our convention is to call $X_{D+1}$ the Hamiltonian, even if these
matrices are not related to quantum systems.

We will assume the needed symmetries on the $\Gamma_{j}$ to correspond
the the choices made in the definition of a specific index. For example
when $D=3$ and we are in Class AII, the index is
\begin{equation}
\mathrm{sign}\left(\det\left(Q^{*}\left[\begin{array}{cccc}
0 & 0 & H+iZ & iX+Y\\
0 & 0 & iX-Y & H-iZ\\
H-iZ & -iX-Y & 0 & 0\\
-iX+Y & H+iZ & 0 & 0
\end{array}\right]Q\right)\right)\label{eq:3D-AII-full formula}
\end{equation}
so our choices for the $\Gamma_1,\Gamma_2,\Gamma_3,\Gamma_4$ are
\[
\left[\begin{array}{cccc}
0 & 0 & 0 & i\\
0 & 0 & i & 0\\
0 & -i & 0 & 0\\
-i & 0 & 0 & 0
\end{array}\right]
,\quad
\left[\begin{array}{cccc}
0 & 0 & 0 & 1\\
0 & 0 & -1 & 0\\
0 & -1 & 0 & 0\\
1 & 0 & 0 & 0
\end{array}\right]
,\quad
\left[\begin{array}{cccc}
0 & 0 & i & 0\\
0 & 0 & 0 & -i\\
-i & 0 & 0 & 0\\
0 & i & 0 & 0
\end{array}\right]
,\quad
\left[\begin{array}{cccc}
0 & 0 & 1 & 0\\
0 & 0 & 0 & 1\\
1 & 0 & 0 & 0\\
0 & 1 & 0 & 0
\end{array}\right].
\]
What is essential is that the index can
be computed continuously from
\[
B(X_{1},\dots,X_{D+1})
\]
whenever this
is invertible. This means that given a continuous path \[(X_{1,t},\dots,X_{D+1,t})\]
the only way for to change is for it to be undefined at some point
because \[B(X_{1,t},\dots,X_{D+1,t})\] is singular. 

\begin{defn}
Assume $(X_{1},\dots,X_{D+1})$ are in some symmetry class. If that
symmetry class does not have any symmetry that anticommute with the
Hamiltonian, then for any $(D+1)$-tuple $\boldsymbol{\lambda}$ not
in $\Lambda(X_{1},\dots,X_{D+1})$ define 
\[
\mathrm{ind}_{\boldsymbol{\lambda}}(X_{1},\dots,X_{D+1})
=
\mathrm{ind}\left(X_{1}-\lambda_{1}I,\dots,X_{D+1}-\lambda_{D+1}I\right).
\]
In the other classes, we do the same except only define this index
at a point $\boldsymbol{\lambda}$ when $\lambda_{D+1}=0$.
\end{defn}

\begin{lem}
\label{lem:UniformContinuityOfGap} If $(X_{1},\dots,X_{D+1})$ and
$(Y_{1},\dots,Y_{D+1})$ are tuples of Hermitian matrices of the same
size then
\[
\left|\left\Vert \left(B(X_{1},\dots,X_{D+1})\right)^{-1}\right\Vert ^{-1}-\left\Vert \left(B(Y,\dots,Y_{D+1})\right)^{-1}\right\Vert ^{-1}\right|
\leq
\sqrt{\sum\left\Vert X_{j}-Y_{j}\right\Vert ^{2}}.
\]
If $(X_{1},\dots,X_{D+1})$ and $(Y_{1},\dots,Y_{D+1})$ are in the
same symmetry class and 
\[
\mathrm{ind}(X_{1},\dots,X_{D+1})\neq\mathrm{ind}(Y,\dots,Y_{D+1})
\]
then
\[
\left|\left\Vert \left(B(X_{1},\dots,X_{D+1})\right)^{-1}\right\Vert ^{-1}+\left\Vert \left(B(Y,\dots,Y_{D+1})\right)^{-1}\right\Vert ^{-1}\right|
\leq
\sqrt{\sum\left\Vert X_{j}-Y_{j}\right\Vert ^{2}}
\]
and somewhere on the line segment $(X_{1,t},\dots,X_{D+1,t})$ between
these pairs is a point where
\[
B(X_{1,t},\dots,X_{D+1,t})
\]
is singular.
\end{lem}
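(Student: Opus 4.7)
My plan is to prove both inequalities by combining (i) $1$-Lipschitz continuity of the smallest absolute eigenvalue of a Hermitian matrix in the operator norm with (ii) an intermediate-value argument for the index along the straight segment from $X$ to $Y$.

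For the first inequality, I would first recall that $\|H^{-1}\|^{-1}$ equals the smallest absolute eigenvalue of a Hermitian matrix $H$ and is $1$-Lipschitz in operator norm by Weyl's inequalities. This reduces the task to estimating $\|B(X)-B(Y)\|$. By linearity, $B(X)-B(Y) = B(X-Y) = \sum_j(X_j-Y_j)\otimes\Gamma_j$, and squaring, using $\Gamma_j\Gamma_k = -\Gamma_k\Gamma_j$ for $j\ne k$, yields
\[
B(X-Y)^2 \;=\; \sum_j(X_j-Y_j)^2\otimes I \;+\; \sum_{j<k}\bigl[X_j-Y_j,\,X_k-Y_k\bigr]\otimes\Gamma_j\Gamma_k.
\]
When the $X_j-Y_j$ commute the second sum drops and $\|\sum_j A_j^2\|\le\sum_j\|A_j\|^2$ closes the estimate. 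For the general case, the averaging identity $\mathrm{tr}(\Gamma_j\Gamma_k) = g\,\delta_{jk}$ implies that for any unit $u\in\mathbb{C}^n$ and any orthonormal Clifford basis $\{e_r\}$,
\[
\frac{1}{g}\sum_r\|B(X-Y)(u\otimes e_r)\|^2 \;=\; \sum_j\|(X_j-Y_j)u\|^2 \;\le\; \sum_j\|X_j-Y_j\|^2,
\]
producing product test vectors with correctly sized image.

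For the second inequality together with the singular-point claim, I would parametrize $X_t = (1-t)X + tY$. Each Atland-Zirnbauer symmetry condition on the tuple is affine in the matrix entries, so every $X_t$ remains in the same class. On the open locus where $B(X_t)$ is invertible the relevant index is a continuous integer-valued functional of $B(X_t)$ (signature, sign of determinant, or sign of Pfaffian depending on the class) and hence locally constant. The hypothesis $\mathrm{ind}(X) \ne \mathrm{ind}(Y)$ therefore forces some $t^*\in(0,1)$ at which $B(X_{t^*})$ is singular, which proves the singular-point assertion. Picking a unit null vector $v$ at $t^*$, linearity yields $B(X)v = -t^*B(Y-X)v$ and $B(Y)v = (1-t^*)B(Y-X)v$, so
\[
\|B(X)^{-1}\|^{-1}+\|B(Y)^{-1}\|^{-1} \;\le\; \|B(X)v\| + \|B(Y)v\| \;=\; \|B(Y-X)v\|,
\]
and the $\ell^2$ bound from the first step, applied on $v$, closes the argument.

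The principal obstacle is securing the sharp $\ell^2$ constant: a pure triangle inequality only gives $\|B(X-Y)\|\le\sum_j\|X_j-Y_j\|$, which is weaker than $\sqrt{\sum_j\|X_j-Y_j\|^2}$ by a factor of $\sqrt{D+1}$. The Clifford averaging identity $\mathrm{tr}(\Gamma_j\Gamma_k)=g\,\delta_{jk}$ is what recovers the sharper constant, but it controls the image only on test vectors of product form $u\otimes e_r$. The delicate step is therefore to guarantee that the relevant test vectors—the null vector at $t^*$ for the sum bound, or the minimizer of $\|B(Y)w\|$ for the first inequality—can be chosen to respect this product structure, possibly after a unitary rotation of the Clifford factor or a selection inside a multidimensional kernel.
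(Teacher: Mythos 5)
Your strategy for the first inequality---$1$-Lipschitz continuity of $H\mapsto\|H^{-1}\|^{-1}$ on Hermitian matrices (Weyl) followed by a claimed bound $\|B(X-Y)\|\le\sqrt{\sum_j\|X_j-Y_j\|^2}$---is the same outline the paper uses, and your intermediate-value argument producing a singular $B(X_{t^*})$ on the segment matches the paper's as well. But the obstacle you flag in your last paragraph is not merely a missing argument: the needed $\ell^2$ estimate is false when the differences $X_j-Y_j$ do not commute, and the lemma as literally stated fails. Take $D+1=2$, $\Gamma_1=\sigma_x$, $\Gamma_2=\sigma_z$, $X_1=X_2=\sigma_x+\sigma_z$, $Y_1=\sigma_z$, $Y_2=\sigma_x$. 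Then $B(X)=(\sigma_x+\sigma_z)\otimes(\sigma_x+\sigma_z)$ has $\|B(X)^{-1}\|^{-1}=2$, while $B(Y)=\sigma_z\otimes\sigma_x+\sigma_x\otimes\sigma_z$ is singular so $\|B(Y)^{-1}\|^{-1}=0$, yet $\sqrt{\|X_1-Y_1\|^2+\|X_2-Y_2\|^2}=\sqrt{2}<2$. Your Clifford-averaging device cannot repair this: the identity $\tfrac{1}{g}\sum_r\|B(X-Y)(u\otimes e_r)\|^2=\sum_j\|(X_j-Y_j)u\|^2$ controls only an average over $r$ of product test vectors, while the extremal vector (the small eigenvector of $B(X)$ or $B(Y)$, or the null vector at $t^*$) is in general entangled and can lie entirely in the top eigenspace of $B(X-Y)$, as it does in this example.

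For reference, the paper's proof also stops at the Weyl estimate $\bigl|\,|a|-|b|\,\bigr|\le\|B(X)-B(Y)\|$ and then appeals to ``a lower bound on the distance between the matrices'' without supplying the transition from $\|B(X)-B(Y)\|$ to $\sqrt{\sum_j\|X_j-Y_j\|^2}$, so it shares this gap. The lemma is only invoked (in Lemma~\ref{lem:IndezZeroNearInfinty} and Theorem~\ref{thm:Bulk-Edge}) with $Y_j=X_j-\lambda_j I$ for scalars $\lambda_j$; there the differences $\lambda_j I$ are central, the commutator terms in $B(X-Y)^2$ vanish, and Lemma~\ref{lem:BdistanceIsEuclideanDistance.} gives the exact identity $\|B(X-Y)\|=\sqrt{\sum_j\lambda_j^2}$. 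A clean fix is to add the hypothesis that the $X_j-Y_j$ pairwise commute, or to replace the Euclidean expression on the right by $\|B(X_1-Y_1,\dots,X_{D+1}-Y_{D+1})\|$; with either change your Weyl step alone closes the first inequality, and your null-vector argument at $t^*$ (with the two operator-norm distances adding along the segment) closes the second.
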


\begin{proof}
We apply Weyl's estimate on the spectral variation of Hermitian matrices
to 
\[
B(X_{1,t},\dots,X_{D+1,t})
\]
 at various places along the line defined by
\[
X_{j,t}=(1-t)X_{j}+Y_{j}.
\]
Let's let $a$ be the eigenvalue of smallest magnitude for $B(X_{1},\dots,X_{D+1})$,
so
\[
a=\pm\left\Vert \left(B(X_{1},\dots,X_{D+1})\right)^{-1}\right\Vert ^{-1}
\]
and similarly have an eigenvalue $b$ for $B(Y,\dots,Y_{D+1})$ with
\[
b=\pm\left\Vert \left(B(Y_{1},\dots,Y_{D+1})\right)^{-1}\right\Vert ^{-1}.
\]
Without loss of generality, $\left|a\right|<\left|b\right|$. The
closest eigenvalue to $a$ in the spectrum of $B(Y_{1},\dots,Y_{D+1})$
is at least at a distance of $\left|b\right|-\left|a\right|$ away.
This is then a lower bound on the best overall spectral pairing, and
that in turn a lower bound on the distance between the matrices. 

If the index varies, then at some $t$ the matrix 
\[
B(X_{1,t},\dots,X_{D+1,t})
\]
is singular, since the index is continuous where defined. We can apply
the first statement in the result to $B(X_{1},\dots,X_{D+1})$ and
$B(X_{1,t},\dots,X_{D+1,t})$, and then again to $B(Y_{1},\dots,Y_{D+1})$
and $B(X_{1,t},\dots,X_{D+1,t})$. These points are collinear so the
estimates add.
\end{proof}

\begin{lem}
\label{lem:BdistanceIsEuclideanDistance.}For any two $(D+1)$-tuples
$\boldsymbol{\lambda}$ and $\boldsymbol{\mu}$ of scalars, 
\[
\left\Vert B\left(\lambda_{1}I,\dots,\lambda_{D+1}I\right)-B\left(\mu_{1}I,\dots,\mu_{D+1}I\right)\right\Vert 
=
d(\boldsymbol{\lambda},\boldsymbol{\mu})
\]
where $d$ denotes Euclidean distance.
\end{lem}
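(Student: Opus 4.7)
The plan is to reduce the norm on the left-hand side to the operator norm of a single Clifford-type combination, then exploit the Clifford relations to square it explicitly. First I would use the bilinearity of $B$ in its arguments to write
\[
B(\lambda_1 I,\dots,\lambda_{D+1} I) - B(\mu_1 I,\dots,\mu_{D+1} I)
= I \otimes \sum_{j=1}^{D+1} (\lambda_j - \mu_j)\Gamma_j,
\]
so the operator norm in question equals $\bigl\|\sum_j (\lambda_j - \mu_j)\Gamma_j\bigr\|$. Setting $\alpha_j = \lambda_j - \mu_j$ and $S = \sum_j \alpha_j \Gamma_j$, note that $S$ is Hermitian because each $\alpha_j$ is real and each $\Gamma_j$ is Hermitian, so $\|S\|^2 = \|S^2\|$.

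Next I would compute $S^2$ directly using the Clifford relations $\Gamma_j^2 = I$ and $\Gamma_j\Gamma_k = -\Gamma_k\Gamma_j$ for $j\neq k$. Expanding,
\[
S^2 = \sum_j \alpha_j^2 \Gamma_j^2 + \sum_{j\neq k} \alpha_j \alpha_k \Gamma_j \Gamma_k
= \Bigl(\sum_j \alpha_j^2\Bigr) I + \sum_{j<k} \alpha_j\alpha_k (\Gamma_j\Gamma_k + \Gamma_k\Gamma_j),
\]
and the anticommutation kills every off-diagonal term. Hence $S^2 = d(\boldsymbol{\lambda},\boldsymbol{\mu})^2 I$, so $\|S\| = d(\boldsymbol{\lambda},\boldsymbol{\mu})$, which is exactly the claim after tensoring with $I$ (the tensor factor $I$ does not change the norm).

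There is no real obstacle here; the only subtle point worth checking is that the tensor factor $I$ on the left does not inflate the norm, which is immediate since $\|A \otimes I\| = \|A\|$ for any matrix $A$. The calculation is essentially the fact that the Clifford generators realize the Euclidean metric as a square root of the identity, and it works uniformly in the ambient matrix size $n$.
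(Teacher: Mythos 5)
Your proof is correct and takes essentially the same route as the paper's: reduce by linearity to a single term, square it, and use the Clifford relations to collapse $S^2$ to $\bigl(\sum_j\alpha_j^2\bigr)I$. The only cosmetic difference is that the paper invokes its already-established identity (Equation 1.2) for $B(\cdot)^2$ and then the spectral mapping theorem, whereas you expand $S^2$ directly and use $\|S\|^2=\|S^2\|$ for Hermitian $S$; these are the same argument in slightly different dress.
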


\begin{proof}
We first notice 
\[
B\left(\lambda_{1}I,\dots,\lambda_{D+1}I\right)-B\left(\mu_{1}I,\dots,\mu_{D+1}I\right)
=
B\left(\left(\lambda_{1}-\mu_{1}\right)I,\dots,\left(\lambda_{D+1}-\mu_{D+1}\right)I\right)
\]
so we need only compute 
\[
\left\Vert B\left(\mu_{1}I,\dots,\mu_{D+1}I\right)\right\Vert .
\]
By Equation~\ref{eq:eval_B_squared},
\[
B\left(\gamma_{1}I,\dots,\gamma_{D+1}I\right)^{2}=\sum\gamma_{j}^{2}I
\]
so, by the spectral mapping theorem, 
\[
\sigma\left(B\left(\gamma_{1}I,\dots,\gamma_{D+1}I\right)\right)
\subseteq
\left\{ \pm\sqrt{\sum\gamma_{j}^{2}}\right\} .
\]
Therefore 
\[
\left\Vert B\left(\gamma_{1}I,\dots,\gamma_{D+1}I\right)\right\Vert 
=
\sqrt{\sum\gamma_{j}^{2}}.
\]
\end{proof}

So now we know that points in the pseudospectrum with different index
must be separated by certain distance. We also have a means to estimate
how much disorder is needed to change an index. The reason nonzero
index is thus important, when it is found, is that trivial index prevails
away from the origin.

\begin{lem}
\label{lem:IndezZeroNearInfinty}If $(X_{1},\dots,X_{D+1})$ is in
any symmetry class, and if 
\[
\left|\boldsymbol{\lambda}\right|>\left\Vert B(X_{1},\dots,X_{D+1})\right\Vert 
\]
 then $\boldsymbol{\lambda}$ is in the pseudoresolvent and 
\[
\mathrm{ind}_{\boldsymbol{\lambda}}\left(X_{1},\dots,X_{D+1}\right)=0.
\]
\end{lem}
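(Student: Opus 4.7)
The plan is a two-step reduction. First I establish that $\boldsymbol{\lambda}$ is in the pseudoresolvent via a Weyl spectral variation argument. Second, I homotope the entire tuple to a tuple of scalar matrices and evaluate the index formula directly on the scalar configuration, using the tracelessness of the Clifford generators.

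For the first step, decompose
\[
B_{\boldsymbol{\lambda}}(X_{1},\dots,X_{D+1})
=
B(X_{1},\dots,X_{D+1})-B(\lambda_{1}I,\dots,\lambda_{D+1}I).
\]
Lemma~\ref{lem:BdistanceIsEuclideanDistance.} gives the norm of the subtracted summand as exactly $|\boldsymbol{\lambda}|$, and the identity from Equation~\ref{eq:eval_B_squared} yields $(B(\lambda_{1}I,\dots,\lambda_{D+1}I))^{2}=|\boldsymbol{\lambda}|^{2}I$, so its spectrum is contained in $\{\pm|\boldsymbol{\lambda}|\}$. Weyl's estimate then places every eigenvalue of $B_{\boldsymbol{\lambda}}$ within $\|B(X_{1},\dots,X_{D+1})\|$ of $\pm|\boldsymbol{\lambda}|$, and the strict hypothesis $|\boldsymbol{\lambda}|>\|B(X_{1},\dots,X_{D+1})\|$ keeps these eigenvalues away from zero.

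For the index I use the straight-line path $X_{j,t}=tX_{j}-\lambda_{j}I$ for $t\in[0,1]$, with the convention $\lambda_{D+1}=0$ in classes carrying a chiral symmetry. Because $\lambda_{j}I$ is real, self-dual, and commutes with every grading or chiral involution, the entire path lies in the symmetry class. Applying the Weyl argument to
\[
B_{t}=tB(X_{1},\dots,X_{D+1})-B(\lambda_{1}I,\dots,\lambda_{D+1}I)
\]
shows $B_{t}$ is invertible throughout $[0,1]$, so continuity of the index on connected components of invertible symmetric tuples forces
\[
\mathrm{ind}_{\boldsymbol{\lambda}}(X_{1},\dots,X_{D+1})
=
\mathrm{ind}(-\lambda_{1}I,\dots,-\lambda_{D+1}I).
\]

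On this scalar endpoint, $B$ has the form $I_{n}\otimes M$ with $M=-\sum\lambda_{j}\Gamma_{j}$. Since $d=D+1\geq 2$, conjugating $\Gamma_{j}$ by any $\Gamma_{k}$ with $k\neq j$ reverses its sign, so every $\Gamma_{j}$ is traceless; combined with $M^{2}=|\boldsymbol{\lambda}|^{2}I$, this forces $M$ to have eigenvalues $\pm|\boldsymbol{\lambda}|$ with equal multiplicity and $\mathrm{Sig}(B)=n\,\mathrm{Sig}(M)=0$. Every index in Sections~\ref{sec:Dimension-zero}--\ref{sec:Dimension-three} defined as a rational multiple of $\mathrm{Sig}$ therefore vanishes. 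For the sign-of-determinant and sign-of-Pfaffian formulas, the tensor structure $I_{n}\otimes M$ makes the relevant block (after the class-specific change of basis $Q$ and, where applicable, the partial isometries $W_{\pm}$) collapse to a tensor product with a fixed standard matrix whose determinant or Pfaffian sign is the identity of the relevant group. The main obstacle is this final ten-fold verification, since each class uses a slightly different block decomposition and one must check that no sign flip is introduced by the conventions for $Q$ and $W_{\pm}$.
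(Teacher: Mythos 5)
Your argument is correct and is the same homotopy argument the paper uses: shrink the $X_j$ to zero along $t\mapsto tX_j$, note via the spectral picture of $B(\lambda_1 I,\dots,\lambda_{D+1}I)$ and Weyl's inequality that $B_t$ stays invertible throughout because $t\|B(X_1,\dots,X_{D+1})\|<|\boldsymbol{\lambda}|$, and conclude by continuity of the index that $\mathrm{ind}_{\boldsymbol{\lambda}}$ equals the trivial index of the scalar endpoint. You spell out the endpoint computation (tracelessness of $\Gamma_j$, $M^2=|\boldsymbol{\lambda}|^2 I$) more explicitly than the paper, which simply asserts the scalar tuple has trivial index in every class; otherwise the two proofs coincide.
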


\begin{proof}
Notice first that $B\left(\lambda_{1}I,\dots,\lambda_{D+1}I\right)$
has spectrum $\{\pm\left|\boldsymbol{\lambda}\right|\}$ with trivial
index, in any symmetry class. We then consider the path 
$B_{\boldsymbol{\lambda}}(tX_{1},\dots,tX_{D+1})$
which has length
\[
\left\Vert B(X_{1},\dots,X_{D+1})\right\Vert .
\]
 For the signature along this path to change, it will need to be at
least as long as $\left|\boldsymbol{\lambda}\right|$.
\end{proof}

We now prove bulk-edge correspondence. To accommodate highly disordered
systems, we allow for a generous interpretation of edge mode. A nontrivial
index in the center will need to be surrounded, but not too closely,
by a ring of approximate modes approximately at the Fermi level (assumed
here to be zero). We can, for less disordered systems, compute the
index well away from the center and see that the edge modes are really
near the edge. 

\begin{thm}
\label{thm:Bulk-Edge}Suppose $(X_{1},\dots,X_{D},H)$ is in any symmetry
class and that
\[
\mathrm{ind}(X_{1},\dots,X_{D},H)
\]
is nontrivial and 
\[
R=\left\Vert \left(B(X_{1},\dots,X_{D},H)\right)^{-1}\right\Vert ^{-1}
\]
is positive. Then for every unit $D$-tuple $(\alpha_{1},\dots,\alpha_{D})$,
at some point along the ray 
\[
t\mapsto\boldsymbol{\lambda}_{t}=(t\alpha_{1},\dots,t\alpha_{D},0)
\]
there is a $\boldsymbol{\lambda}_{s}$ which is in the pseudospectrum.
If $\boldsymbol{\lambda}_{t}$ is in the pseudospectrum then 
\[
R
\leq
\left|\boldsymbol{\lambda}_{t}\right|
\leq
\sqrt{
\left\Vert H\right\Vert ^{2}+\sum_{j}\left\Vert X_{j}\right\Vert ^{2}+\sum_{j}\left\Vert [X_{j},H]\right\Vert .
}.
\]
\end{thm}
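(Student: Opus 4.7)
The plan is to establish the three assertions in order, all by direct appeal to the earlier lemmas and to the identity $B^2 = \sum X_j^2\otimes I + \sum_{j\neq k}[X_j,X_k]\otimes\Gamma_j\Gamma_k$ from Equation~\ref{eq:eval_B_squared}.

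For the existence of some $\boldsymbol{\lambda}_s$ on the ray that lies in the pseudospectrum, I would use a connectedness argument combined with Lemma~\ref{lem:IndezZeroNearInfinty}. At $t=0$ we have $\boldsymbol{\lambda}_0=\mathbf{0}$ and $\mathrm{ind}_{\boldsymbol{\lambda}_0}=\mathrm{ind}(X_1,\dots,X_D,H)$, which is nontrivial by hypothesis. Once $t>\|B(X_1,\dots,X_D,H)\|$, the point $\boldsymbol{\lambda}_t$ has $|\boldsymbol{\lambda}_t|=t$, and Lemma~\ref{lem:IndezZeroNearInfinty} forces $\mathrm{ind}_{\boldsymbol{\lambda}_t}=0$. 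Since the index is locally constant on the Clifford resolvent set, some intermediate value of $t$ must fall in $\Lambda(X_1,\dots,X_D,H)$, giving the desired $\boldsymbol{\lambda}_s$.

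For the lower bound $R\leq|\boldsymbol{\lambda}_t|$, I would apply Lemma~\ref{lem:UniformContinuityOfGap} to the pair of tuples $(X_1,\dots,X_D,H)$ and $(X_1-t\alpha_1I,\dots,X_D-t\alpha_DI,H)$. The $B$-matrix of the second tuple differs from that of the first by $B(t\alpha_1 I,\dots,t\alpha_D I,0)$, whose operator norm equals $|\boldsymbol{\lambda}_t|$ by Lemma~\ref{lem:BdistanceIsEuclideanDistance.}. Under the hypothesis that $\boldsymbol{\lambda}_t$ lies in the pseudospectrum (i.e.\ the shifted $B$ is singular, with gap $0$), Lemma~\ref{lem:UniformContinuityOfGap} yields $R - 0 \leq |\boldsymbol{\lambda}_t|$ directly.

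For the upper bound, I would pick a unit null vector $\mathbf{z}$ of $B_{\boldsymbol{\lambda}_t}(X_1,\dots,X_D,H)$ and rewrite the kernel equation as
\[
B(X_1,\dots,X_D,H)\mathbf{z}=B(t\alpha_1I,\dots,t\alpha_DI,0)\mathbf{z}.
\]
Because $B(t\alpha_1I,\dots,t\alpha_DI,0)^2=|\boldsymbol{\lambda}_t|^2 I$ (as in the proof of Lemma~\ref{lem:BdistanceIsEuclideanDistance.}), the norm-squared of the right side is exactly $|\boldsymbol{\lambda}_t|^2$. The norm-squared of the left side equals $\langle B(X_1,\dots,X_D,H)^2\mathbf{z},\mathbf{z}\rangle$, which by Equation~\ref{eq:eval_B_squared} expands to $\sum_j\sum_r\|X_j\mathbf{z}_r\|^2+\sum_r\|H\mathbf{z}_r\|^2$ plus the expectation of the commutator term. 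Bounding each block-summand by the corresponding operator norm and using $\sum_r\|\mathbf{z}_r\|^2=1$ yields the stated upper bound.

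The main obstacle is bookkeeping in the third part: keeping the commutator contributions in $\langle B^2\mathbf{z},\mathbf{z}\rangle$ aligned with the asymmetric sum $\sum_j\|[X_j,H]\|$ that appears in the statement. This relies on reading the double sum $\sum_{j\neq k}$ in Equation~\ref{eq:eval_B_squared} consistently with the rest of the paper, and (tacitly) on the $X_j$ commuting among themselves in the physically relevant setting; otherwise one picks up additional $\|[X_j,X_k]\|$ terms in the bound. The other two parts are essentially immediate invocations of the preceding lemmas.
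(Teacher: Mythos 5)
Your proposal is correct and its first two steps (existence of a point in the Clifford spectrum along the ray, and the lower bound $R\leq|\boldsymbol{\lambda}_t|$) are identical in approach to the paper's proof, invoking Lemmas~\ref{lem:IndezZeroNearInfinty}, \ref{lem:UniformContinuityOfGap}, and \ref{lem:BdistanceIsEuclideanDistance.} in the same way. For the upper bound, however, you take a slightly different route: where the paper simply applies the contrapositive of Lemma~\ref{lem:IndezZeroNearInfinty} (which directly asserts $\boldsymbol{\lambda}_t$ is in the resolvent set once $|\boldsymbol{\lambda}_t|>\|B(X_1,\dots,X_D,H)\|$), you instead take a unit null vector $\mathbf{z}$ of $B_{\boldsymbol{\lambda}_t}$ and compute $|\boldsymbol{\lambda}_t|^2=\|B(X_1,\dots,X_D,H)\mathbf{z}\|^2\leq\|B(X_1,\dots,X_D,H)\|^2$ variationally, which is a more elementary and self-contained derivation that bypasses the homotopy argument underlying Lemma~\ref{lem:IndezZeroNearInfinty}. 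Both routes then bound $\|B\|^2$ via Equation~\ref{eq:eval_B_squared}, so they converge. One observation worth making: you correctly flag the bookkeeping with the commutator contributions, but note that the issue is actually also present in the paper's own proof, which bounds $\left\|\sum_{j\neq k}[X_j,X_k]\otimes\Gamma_j\Gamma_k\right\|$ by $\left\|\sum[X_j,H]\right\|$ and then by $\sum\left\|[X_j,H]\right\|$, silently dropping the factor of two coming from the antisymmetric pairs $(j,D+1)$ and $(D+1,j)$ (and any $[X_j,X_k]$ terms when the $X_j$ do not exactly commute, which the paper explicitly allows in this section). So your caution is warranted, and if anything your variational route makes the source of that constant more transparent than the paper's.
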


\begin{proof}
Suppose $\boldsymbol{\lambda}_{t}$ is in the pseudospectrum. The
claim $\left|\boldsymbol{\lambda}_{t}\right|\geq R$ follows directly
from Lemmas \ref{lem:UniformContinuityOfGap} and \ref{lem:BdistanceIsEuclideanDistance.}.
We know from Lemma~\ref{lem:IndezZeroNearInfinty} that $\boldsymbol{\lambda}_{t}$
is in the pseudoresolvent whenever 
\[
\left|\boldsymbol{\lambda}_{t}\right|>\left\Vert B(X_{1},\dots,X_{D},H)\right\Vert .
\]
We now use Equation~\ref{eq:eval_B_squared} to get a bound on this
norm,
\begin{align*}
\left\Vert B(X_{1},\dots,X_{D},H)\right\Vert ^{2} 
& =\left\Vert B(X_{1},\dots,X_{D},H) ^{2} \right\Vert \\
& \leq\left\Vert H^{2}+\sum X_{j}^{2}\right\Vert +\left\Vert \sum[X_{j},H]\right\Vert \\
& \leq\left\Vert H\right\Vert ^{2}+\sum\left\Vert X_{j}\right\Vert ^{2}+\sum\left\Vert [X_{j},H]\right\Vert .
\end{align*}
If $\boldsymbol{\lambda}_{t}$ is in the pseudospectrum
then
\[
\left|\boldsymbol{\lambda}_{t}\right|^2
\leq
\left\Vert H\right\Vert ^{2}+\sum\left\Vert X_{j}\right\Vert ^{2}+\sum\left\Vert [X_{j},H]\right\Vert .
\]
Again using Lemma~\ref{lem:IndezZeroNearInfinty} we see that somewhere
on this ray the index is trivial. At the start of the ray the index
was assumed to be nontrivial, so Lemma~\ref{lem:UniformContinuityOfGap}
tells us that at least one point on the ray is in the pseudospectrum.
\end{proof}

\begin{rem}
In the symmetry classes without particle-hole conjugation, we can
consider rays that slant up or down in the energy direction. In those
classes, the edge modes persist above and below the Fermi level. So
when $D=2$ in a class like AII, the pseudospectrum contains a sphere
that surrounds a hole. So starting with lattice geometry of a square,
we have produced a modified sphere and nontrivial topology.
\end{rem}

\section{Local and global invariants}

Suppose we have increasing system sizes, say $(X_{k},Y_{k},H_{k})$
with consistent spacial units, such as nanometers. We then must select
rescaling of units $\eta_{k}>0$ for all $k$ and so work with the
observables $(\eta X_{k},\eta Y_{k},H_{k})$. Another view point is
that we will be basing our pseudospectrum and index on
\[
B(\eta_{k}X_{k},\eta_{k}Y,H)-B(\eta_{k}\lambda_{1}I,\eta_{k}\lambda_{2}I,\lambda_{3}I).
\]

In terms of approximate modes, this means we are producing $\mathbf{v}$
that more or less minimize
\[
\max\left(
\eta_{k}\left\Vert X_{k}\mathbf{v}-\lambda_{1}\mathbf{v}\right\Vert ,
\eta_{k}\left\Vert Y_{k}\mathbf{v}-\lambda_{2}\mathbf{v}\right\Vert ,
\left\Vert H_{k}\mathbf{v}-\lambda_{3}\mathbf{v}\right\Vert 
\right)
\]
which, for points in the pseudospectrum, will be on the order of
\[
\sqrt{\eta_{k}}
\sqrt{
\left\Vert \left[H_{k},X_{k}\right]\right\Vert ,\left\Vert \left[H_{k},Y_{k}\right]\right\Vert 
}.
\]
I.e.
\[
\max\left(
\sqrt{\eta_{k}}\left\Vert X_{k}\mathbf{v}-\lambda_{1}\mathbf{v}\right\Vert ,
\sqrt{\eta_{k}}\left\Vert Y_{k}\mathbf{v}-\lambda_{2}\mathbf{v}\right\Vert ,
\frac{1}{\sqrt{\eta_{k}}}\left\Vert H_{k}\mathbf{v}-\lambda_{3}\mathbf{v}\right\Vert 
\right)
\approx C
\]
If we are trying to model a local probe, we would then want to keep
$\eta_{k}$ constant. Ideally it would be set to correspond to modes
localized in energy so most energy spectrum can fit in the band gap
of the clean periodic Hamiltonian and with spread in position of a
few nanometers. This will to correspond to the expected spacial resolution
of a scanning tunneling spectroscope. If we hold $\eta_{k}$ constant
we will have a \emph{local index} and local pseudospectrum.

If we instead think bulk-edge correspondence, as in Theorem~\ref{thm:Bulk-Edge},
then we want 
\[
R_{k}
=
\frac{1}{\eta_{k}}\left\Vert \left(B(\eta_{k}X_{k},\eta_{k}Y_{k},H_{k})\right)^{-1}\right\Vert ^{-1}
\]
to grow proportionate to $\left\Vert X_{k}\right\Vert $ which, for
simplicity, we take equal to $\left\Vert Y_{k}\right\Vert $. At the
very least, we want this to grow, so we need $\eta_{k}\searrow0$.
In keeping with how we rescale the periodic observables in the torus
geometry, we use $\eta_{k}=\eta_{0}/\left\Vert X_{k}\right\Vert $
and call the resulting index a \emph{global index}.

\section{Dimension two, numerics}

We present examples in two symmetry classes, AI and AII. We look at
examples illustrating the local invariant, and do a study of the disorder-averaged
global invariant, looking for the type of phase transition out of
topological insulator caused by doping. We use Hamiltonians previously
used in such studies so we can test the new algorithms.

We have collected ample evidence that the new indices correspond to
old indices, in three combinations of dimension and symmetry classes,
in the special case of modest disorder strength and where the Fermi
level is in the middle of the gap of the periodic Hamiltonian of clean
system. We also get equality for trivial systems, meaning systems
in the atomic limit. What can we expect to prove here?

We are discussing invariants that take discrete values on individual
finite systems. These are closely related to the $K$-theory obstructions
to fixing approximate matrix representations of $C^{*}$-algebra relations
to be exact representations \cite{EilersLoringContingenciesStableRelations}.
Based on that older research, and the more recent work \cite{LoringQuantKth},
we expect to soon find a rigorous proof that the new index formulas
agree with some established index in the case of weak disorder and
with the Fermi level in the middle of a big bulk gap. For the more
interesting situation, exploring transitions between topological and
ordinary insulating states, we expect at best a probablistic result.
Moreover different indices will mark the transitions differently.
We hope to have established a connection with these new indices and
edge modes, and the practicality of working with systems of nontrivial
size. It will take time to do numerical studies contrasting the phase transitions
as found by various index algorithms, including 
\cite{Mondr_Prodan_AIII_1D,prodan2011disordered}
and \cite{fulga2011scattering,fulga2011Wire}.

\subsection{Class AI in 2D}

\begin{figure}
\hspace*{-1.8in}(a)\includegraphics[clip,scale=0.4,bb=90 340 520 440]{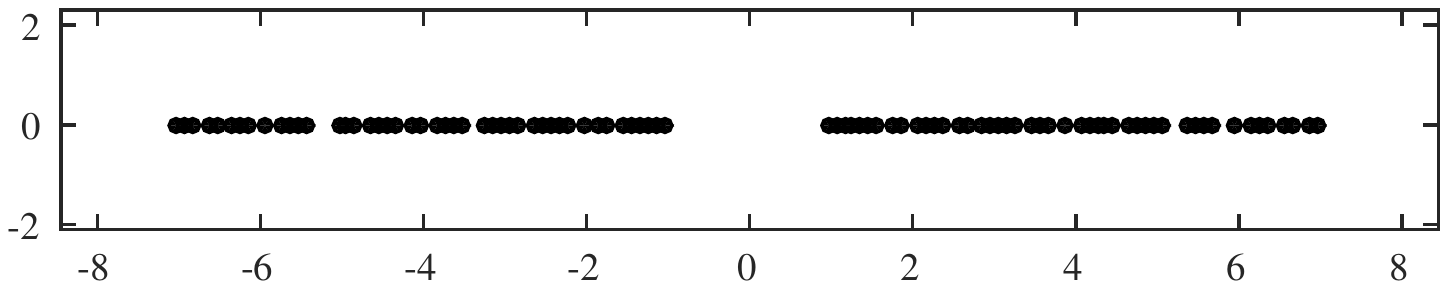}\\
\hspace*{-1.8in}(b)\includegraphics[clip,scale=0.4,bb=90 340 520 440]{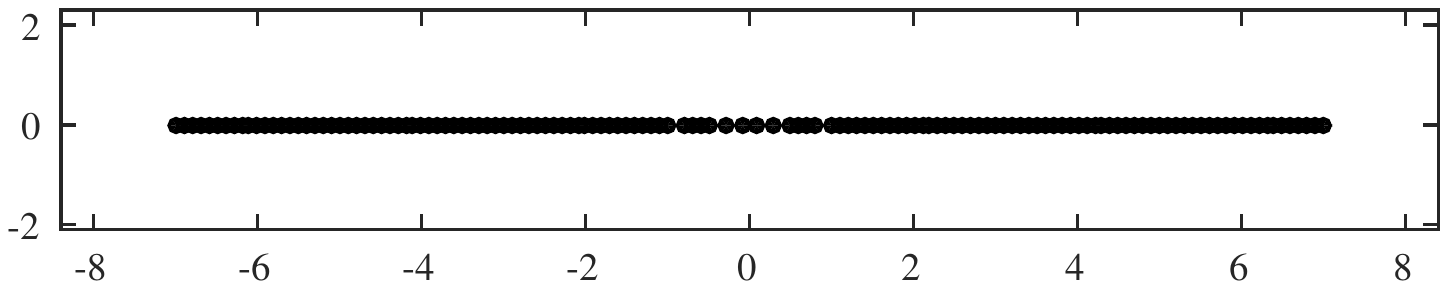}\\

\vspace*{-1.3in}
\hspace*{3.5in}
(c)\includegraphics[clip,scale=0.9,bb=210 300 410 490]{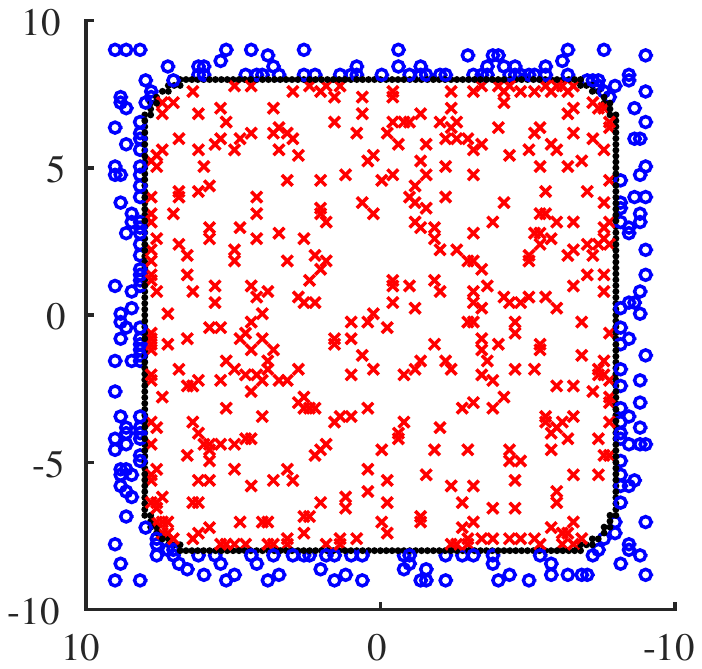}
\caption{Chern insulator with zero disorder, looking down at the pseudospectrum
and labled pseudoresolvant at the Fermi level. 
Panel (a) is the $0.1$-pseudospectrum of the Hamiltonian with periodic
boundary conditions. Panel (b) is the $0.1$-pseudospectrum of the
Hamiltonian with zero boundary conditions. Panel (c) is the Clifford
pseudospectrum with $K$-theory labels in the gaps, but only at the
Fermi level of zero.
\label{fig:AI-local-1}
}
\end{figure}

\begin{figure}
\hspace*{-1.8in}(a)\includegraphics[clip,scale=0.4,bb=90 340 520 440]{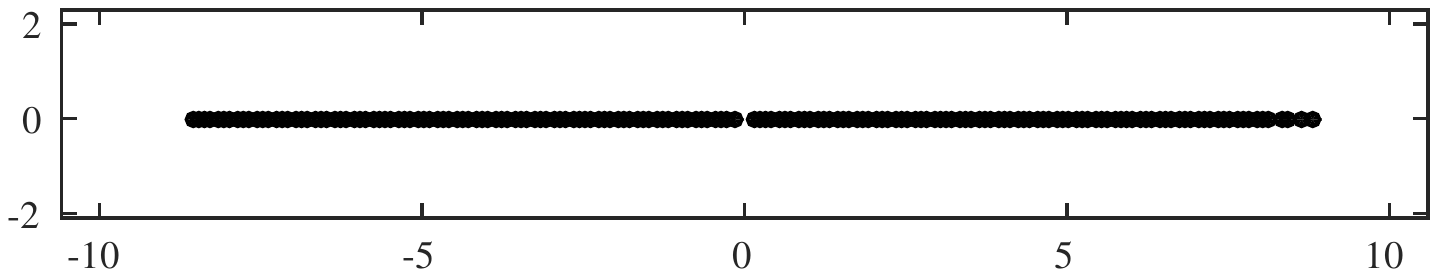}\\
\hspace*{-1.8in}(b)\includegraphics[clip,scale=0.4,bb=90 340 520 440]{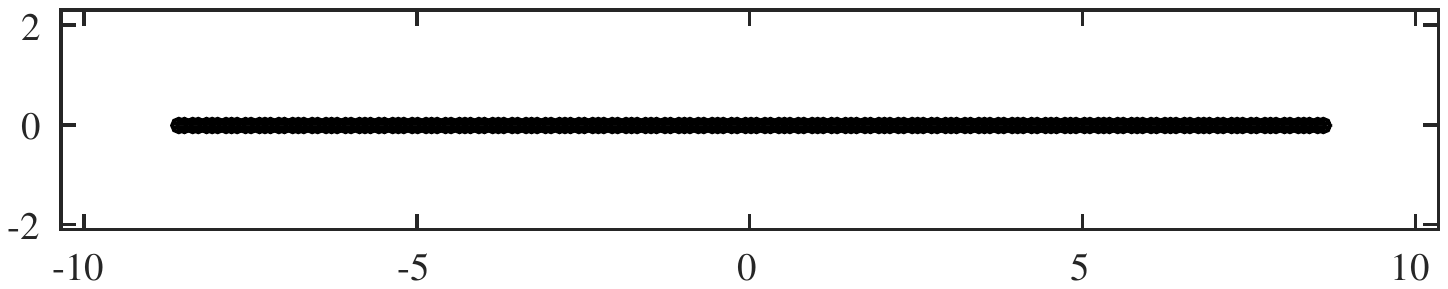}\\

\vspace*{-1.3in}
\hspace*{3.5in}
(c)\includegraphics[clip,scale=0.9,bb=210 300 410 490]{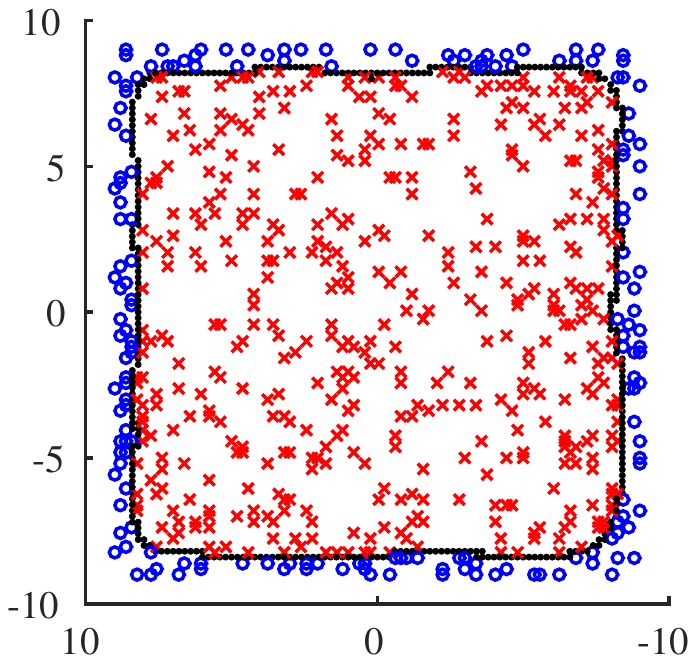}
\caption{Chern insulator with disorder set at $7$. 
Panel (a) is the $0.1$-pseudospectrum of the Hamiltonian with periodic
boundary conditions. Panel (b) is the $0.1$-pseudospectrum of the
Hamiltonian with zero boundary conditions. Panel (c) is the Clifford
pseudospectrum with $K$-theory labels in the gaps. The bulk gap is
just closing at this disorder strength. The small dots (black) are
in the pseudospectrum. Some of the grid points in the pseudoresolvant
are labeled by there index, with crosses (red) indicating index $1$
and with circles (blue) indicating index $-1$. 
\label{fig:AI-local-2}
}
\end{figure}

We look again at the model Chern insulator as in Example~\ref{exa:CherInsulator}.
We create the usual Hamiltonian $H_{\mathrm{per}}$ with periodic
boundary conditions as well as $H,$ the Hamiltonian with zero at
the square boundary. We look now at the $\epsilon$-pseudospectrum
only at energy zero and compute the local index, using $\eta=0.02$ and
$\epsilon=0.05$. For comparison we compute the spectrum of both $H$
and $H_{\mathrm{per}}$. In fact we plot the $\epsilon$-pseudospectrum
for small $\epsilon$ as this is known to close to the actual spectrum
and is much faster to compute. 

\begin{figure}
\includegraphics[clip,scale=0.9,bb=210 300 410 490]{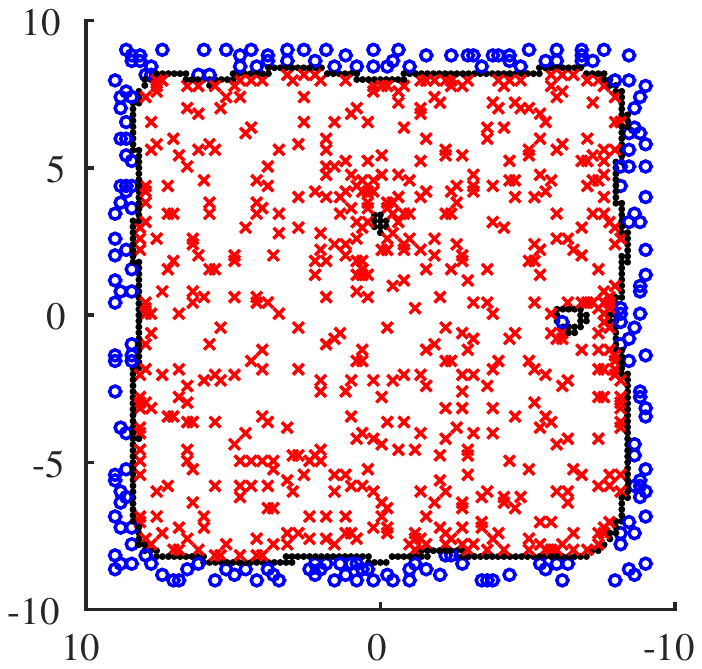}
\includegraphics[clip,scale=0.9,bb=210 300 410 490]{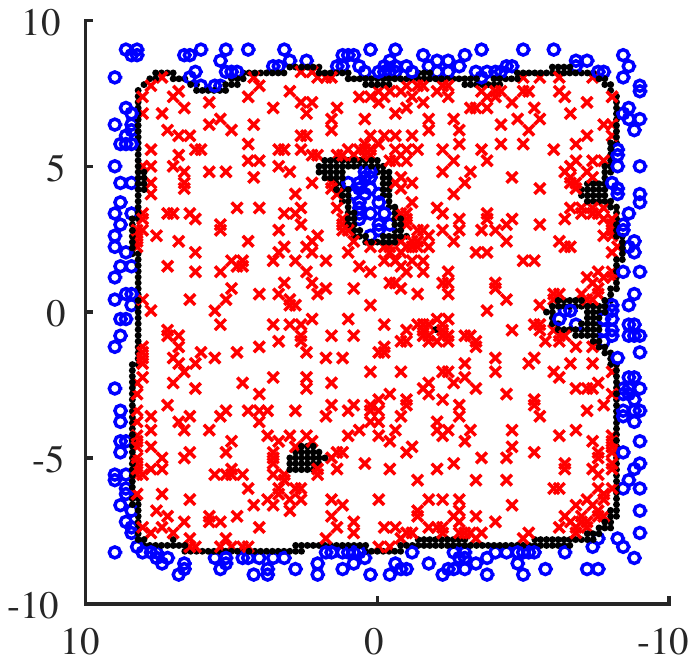}
\caption{Chern insulator with disorder set to 8 (left) or 9 (right). 
\label{fig:AI-local-3}
}
\end{figure}

\begin{figure}
\includegraphics[clip,scale=0.9,bb=210 300 410 490]{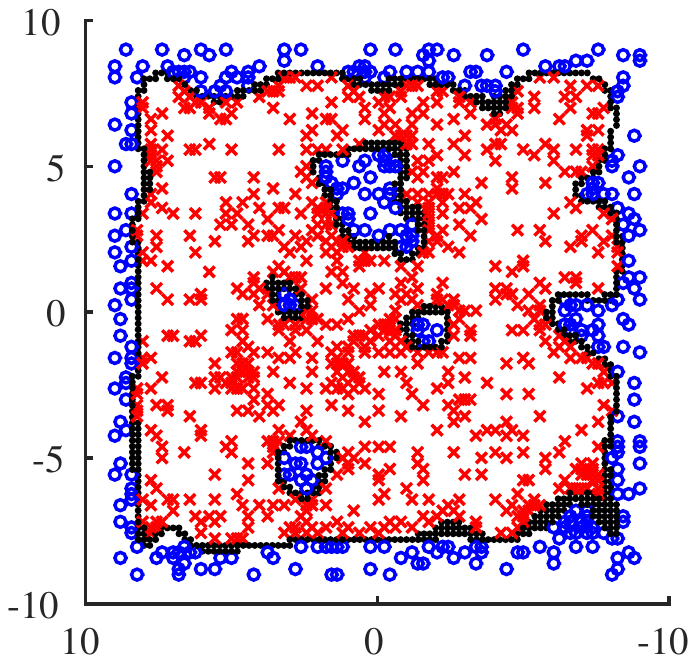}
\includegraphics[clip,scale=0.9,bb=210 300 410 490]{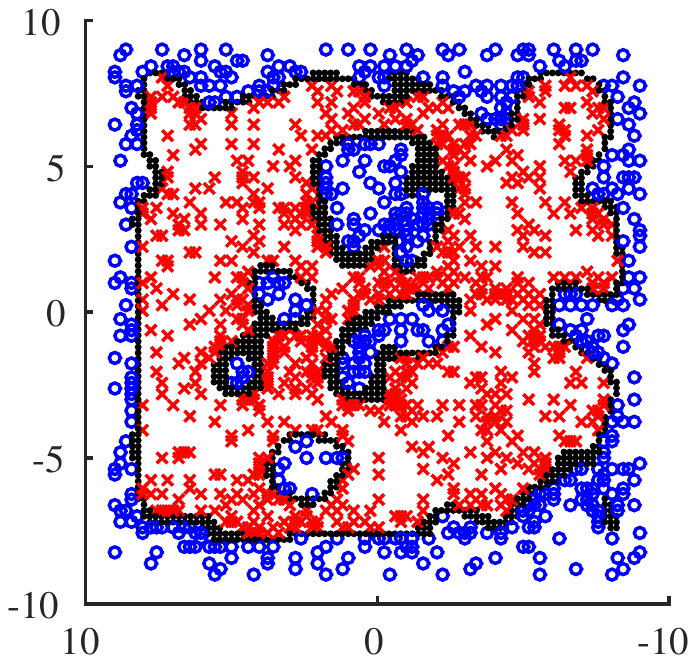}
\caption{Chern insulator with disorder set to 10 (left) or 11 (right).
\label{fig:AI-local-4} 
}
\end{figure}

\begin{figure}
\includegraphics[clip,scale=0.9,bb=210 300 410 490]{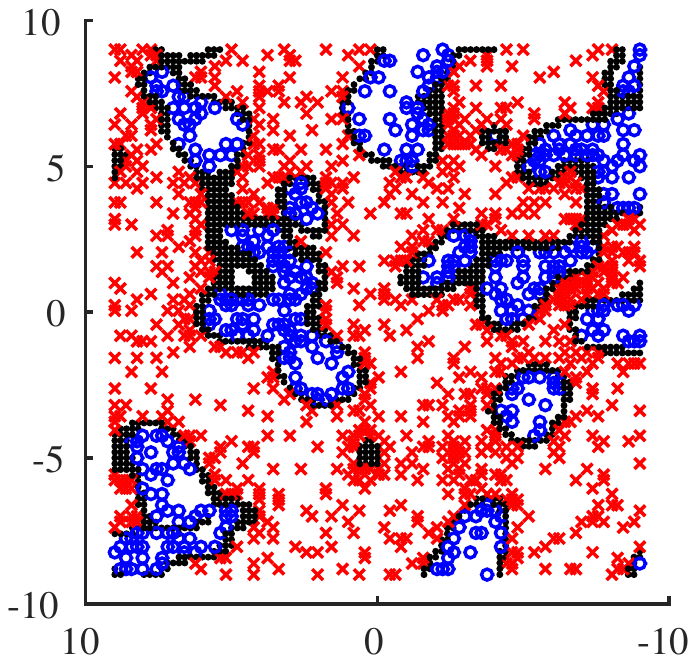}
\includegraphics[clip,scale=0.9,bb=210 300 410 490]{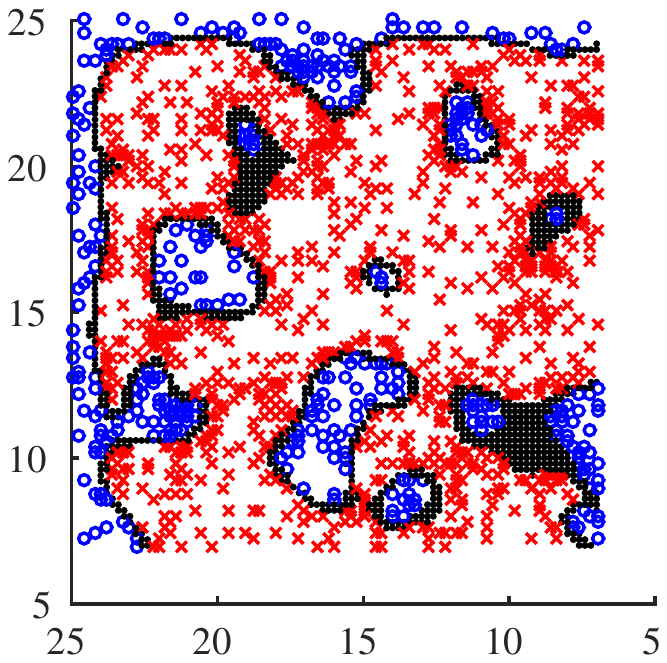}
\caption{Chern insulator with disorder set to 11. 
Now on an 50-by-50 lattice, just in the middle of the bulk (left)
and at the top-right corner (right).
\label{fig:AI-local-5}  
}
\end{figure}

We look at an $18$-by-$18$ lattice with increasing disorder, in Figures
\ref{fig:AI-local-1}-\ref{fig:AI-local-4}. At
the high value of disorder substantially, as in Figure
\ref{fig:AI-local-4}, the zero modes have, in places, moved in substantially
from the edge. On a $50$-by-$50$,
in Figure~\ref{fig:AI-local-5}, we see better how the center of
the sample has roughly circular patches of the wrong index, while
the corner of the sample still has some semblance of a boundary effect.

\begin{figure}[pt]
\includegraphics[bb=1in 3in 7.5in 8in,clip,scale=0.45]{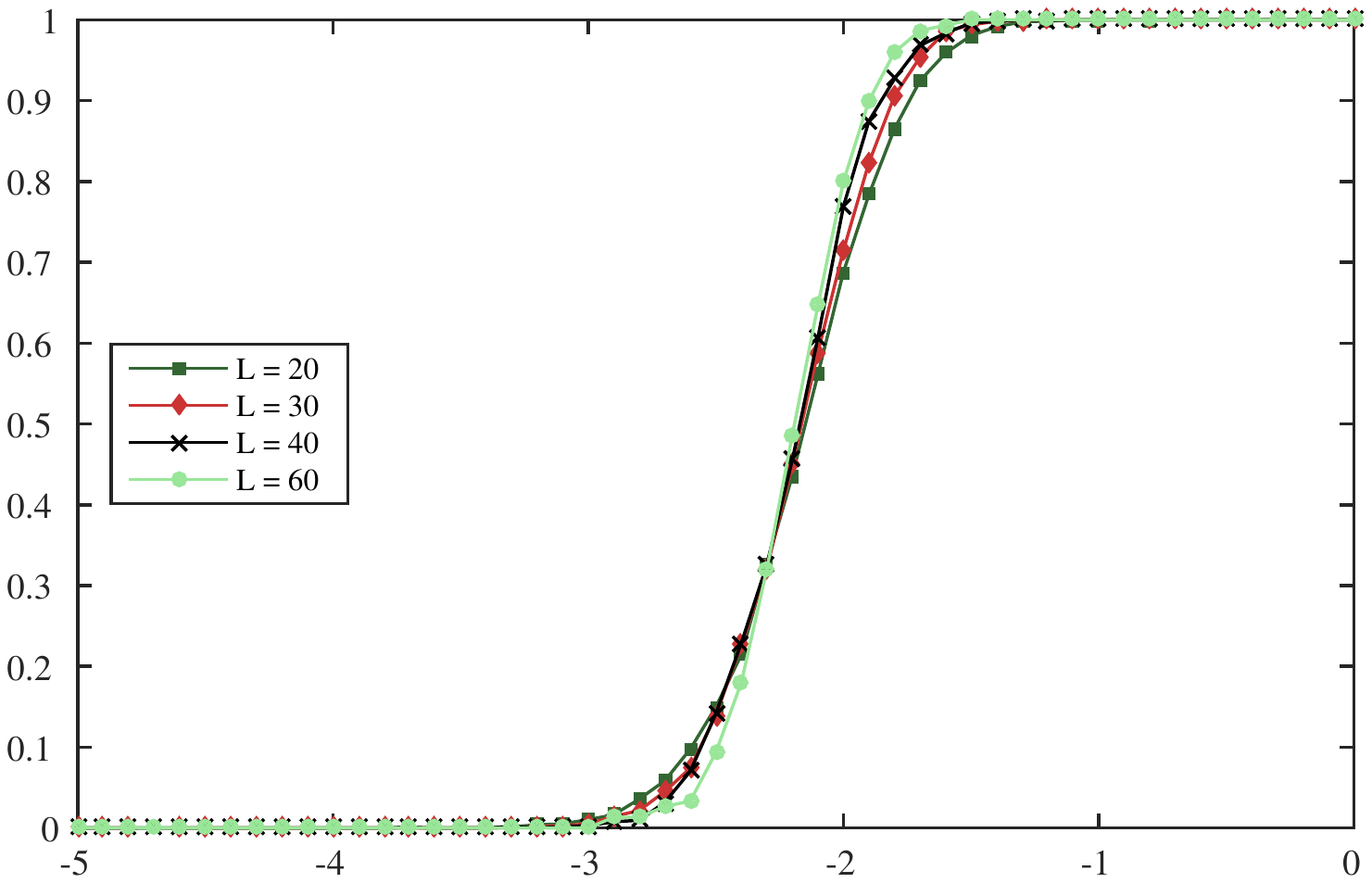}
\includegraphics[bb=1in 3in 7.5in 8in,clip,scale=0.45]{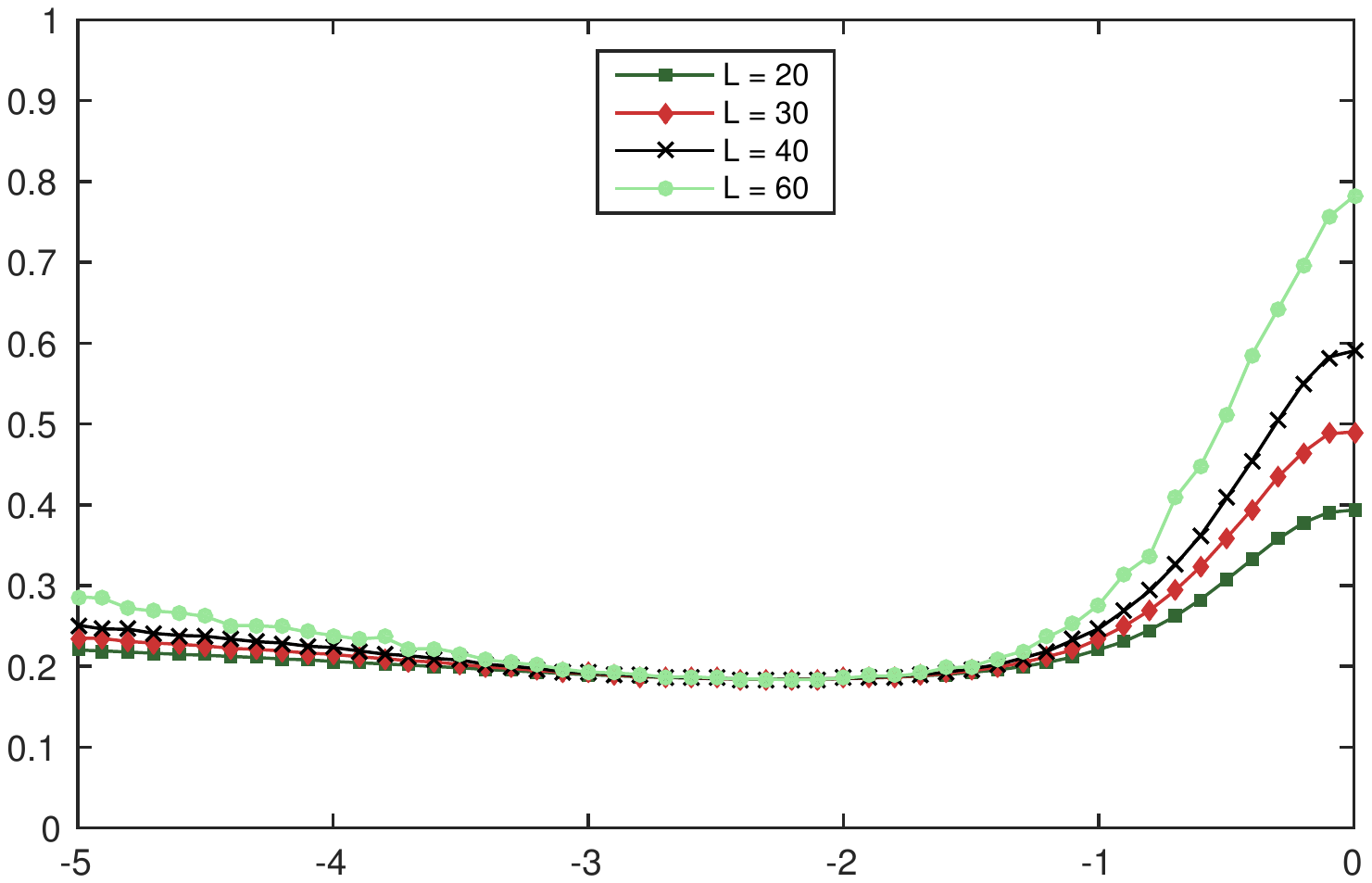}
\caption{Chern Insulator by the old Bott index. The number $N$ of samples
for an $L$-by-$L$ lattice used in these averages was: $L=20$, $N=10938$;
$L=30$, $N=2634$; $L=40$, $N=104$0; $L=60$, $N=150$. The left
panel shows the disorder-averaged Bott index with various Fermi levels.
The right panel shows the disorder-average of the concentration of
the Fermi projector, as defined in Equation~\ref{eq:concentrationDef}.
\label{fig:Chern-average-old}
}
\end{figure}

Now we look at the global invariant with $\eta=4/L$ when the model
is on an $L$-by-$L$ lattice. Holding disorder fixed at $8$ we compute
the index at the center, but with the energy level moving between
$-5$ and $0$. The results are shown in Figure~\ref{fig:Chern-average-new}

\begin{figure}[pt]
\includegraphics[bb=1in 3in 7.5in 8in,clip,scale=0.45]{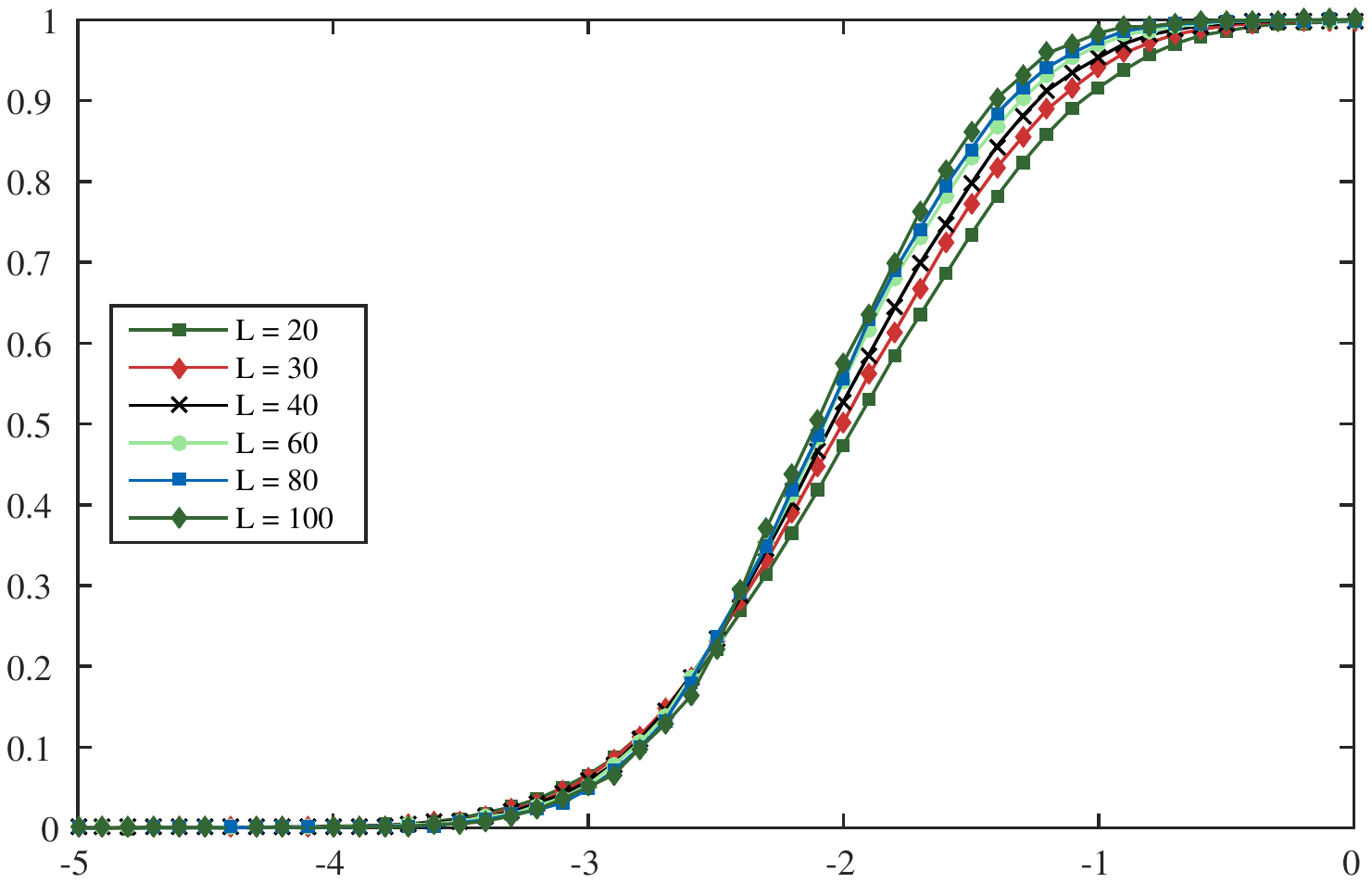}
\includegraphics[bb=1in 3in 7.5in 8in,clip,scale=0.45]{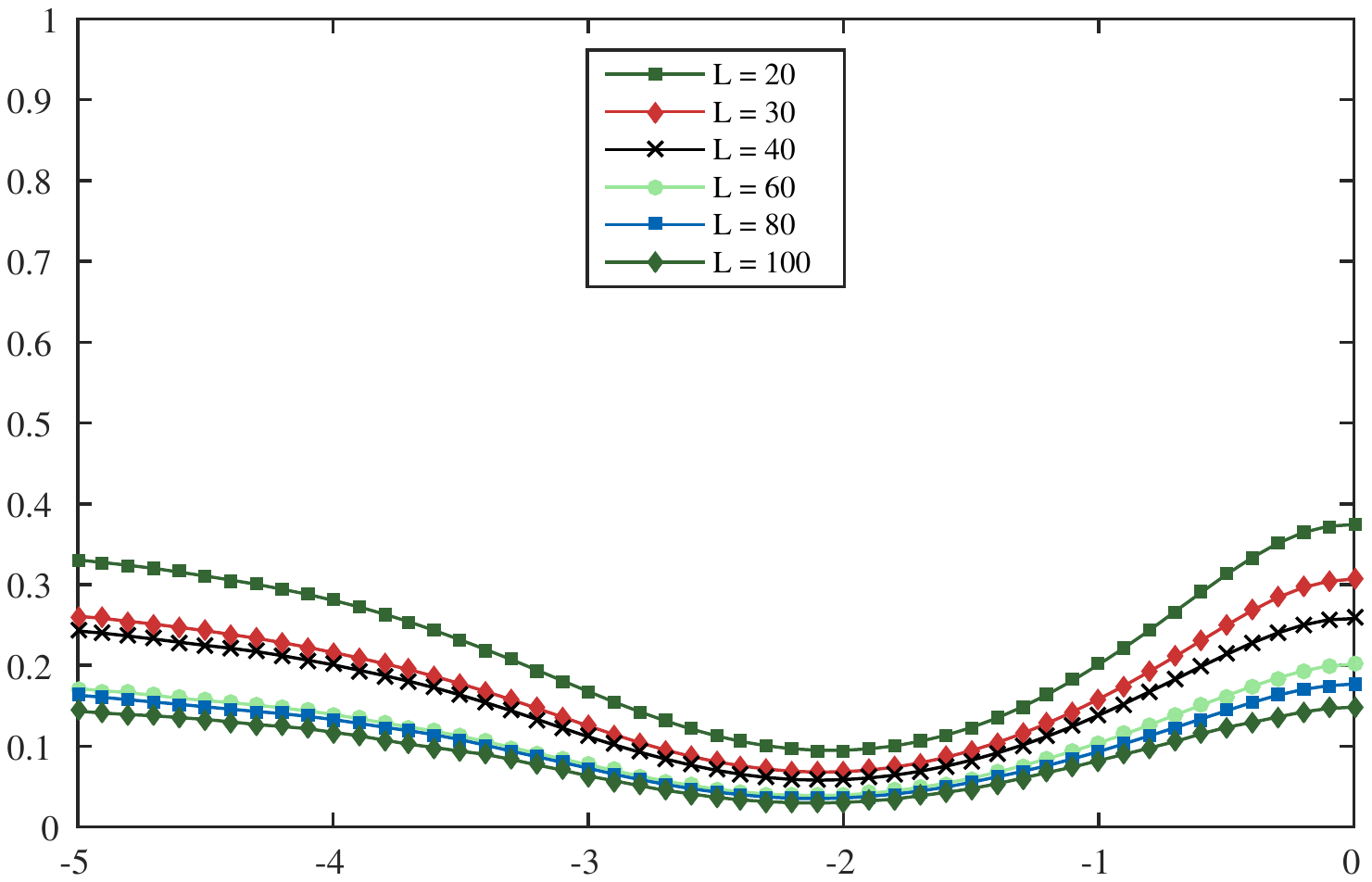}
\caption{Chern Insulator by the new index. The number $N$ of samples for an
$L$-by-$L$ lattice used in these averages was: $L=20$, $N=15929$;
$L=30$, $N=21326$; $L=40$, $N=16853$; $L=60$, $N=10224$; $L=80$,
$N=865$9; $L=100$, $N=2845$. The left panel shows the disorder-averaged
new index with various Fermi levels. The right panel shows the disorder-average
of gap localized at the origin, as defined in Equation~\ref{eq:GapAtOrigin}.
\label{fig:Chern-average-new}
}
\end{figure}

We did this study with dense matrix methods and the formula for the
Bott index, in joint work with Hastings \cite{LorHastHgTe}. We re-ran
this study in order to compute the following proxy for the inverse
of spread of the Fermi projector $P=P_{E_{F}}$,
\begin{equation}
\mathrm{concentration}\left(P_{E_{F}}\right)
=
1/\sqrt{\mbox{\ensuremath{\left\Vert \eta\left[\hat{U},P\right]\right\Vert ^{2}}+\ensuremath{\left\Vert \eta\left[\hat{V},P\right]\right\Vert ^{2}}}}
\label{eq:concentrationDef}
\end{equation}
where $\hat{U}$ and $\hat{V}$ are the unitary operators corresponding
to ''periodic observables'' of position on the torus. In terms of
the usual position observables $X$ and $Y$, where we have 
\[
\frac{-L+1}{2}\leq X,Y\leq\frac{L-1}{2},
\]
we can define these commuting unitary matrices as
\[
\hat{U}=e^{\frac{2\pi i}{L}X},\quad\hat{V}=e^{\frac{2\pi i}{L}Y}.
\]
 The quantity in Equation~\ref{eq:concentrationDef} is expected
to be similar to 
\begin{equation}
\mbox{localGap\ensuremath{\left(F\right)}}
=
\left\Vert \left(B(\eta X,\eta Y,H-E_{F})\right)^{-1}\right\Vert ^{-1}
\label{eq:GapAtOrigin}
\end{equation}
in the new method. It is slow to compute because computing the Fermi
projector is slow. 

By the Fermi projector we mean the spectral subspace of $H_{\mathrm{per}}$
corresponding to $(-\infty,E_{F}]$. In the event of a large spectral
gap we can prove that $P$ will have relatively small commutator with
$\hat{U}$ and $\hat{V}$, as the indicator function can be calculated
as $f(H_{\mathrm{per}})$ for $f$ with reasonable Fourier transform.
There is also what is called a mobility gap \cite{ZhangMobiltyGap},
where there is a region of the spectrum around the Fermi level that
is filled with eigenvalues that have well localized eigenstates. In
that case as well, the commutators $[P,\hat{U}]$ and $[P,\hat{V}]$
tend to be small.

There is an integer we can calculate here, the Bott index. Let $U=P\hat{U}P+(I-P)$
and $V=P\hat{V}P+(I-P)$ and define the index 
\[
\Re\left(\mathrm{Trace}\left(\frac{1}{2\pi i}\log\left(VUV^{*}U^{*}\right)\right)\right)
\]
which can be proven to be an integer. This integer will be zero when
$U$, $V$ and $P$ are close to a commuting triple of matrices and
when it is nonzero such as approximation is precluded. Given a full
eigensolve of $H_{\mathrm{per}}$, if assemble all the low-energy
eigenstates to form a non-square matrix $W$ with $WW^{*}=P$, then
a reformulation of this formula is
\[
\Re\left(\mathrm{Trace}\left(\frac{1}{2\pi i}\log\left(W^{*}\hat{V}P\hat{U}P\hat{V}^{*}P\hat{U}^{*}W\right)\right)\right).
\]
We can compute this from just the eigenvalues of $W^{*}\hat{V}P\hat{U}P\hat{V}^{*}P\hat{U}^{*}W$.
We do need up to half of the eigenvectors of $H_{\mathrm{per}}$,
so an algorithm for the Bott index is easy to implement in $O(n^{3})$
time, but no better.

The results using the old method are shown in Figure~\ref{fig:Chern-average-old}.
Both methods show a sharpening transition, with the new algorithm able
to work with large lattices.

\subsection{Class AII in 2D}

Now we look at the model used in the the numerical study done with
Hastings \cite{LorHastHgTe}, which was the model for an HgTe quantum
well given in \cite{konig2008quantum}. We keep the same disorder and the
same strength of the $H_\mathrm{BIA}$ term that breaks in version symmetry as
in the old study \cite{LorHastHgTe}.

We are just claiming proof of concept, the our formula in a possible
replacement for the Pfaffian-Bott index. We can't get to much larger
matrices than before because our algorithm to compute the sign of
the Pfaffian uses dense matrices. We hope this data will inspire the
production new software implementing the sparse matrix factorization
in \cite{DuffSparseFactorSkew}.

We look again at the local index and pseudospectrum, with $\eta=0.5.$
Figures \ref{fig:AII-local-1}-\ref{fig:AII-local-4} show this with
increasing disorder. The red $\times$ indicate index $-1$ while
the blue $\circ$ indicate index $1$.

\begin{figure}[pt]
\hspace*{-2in}(a)\includegraphics[scale=0.4]{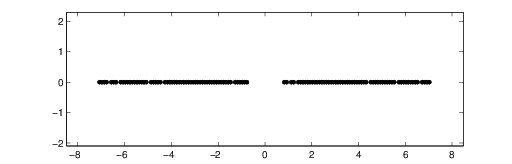}\\
\hspace*{-2in}(b)\includegraphics[scale=0.4]{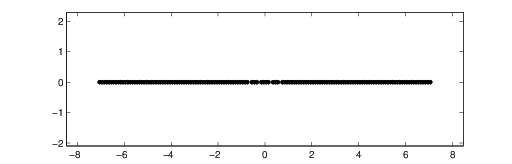}

\vspace*{-2in}

\hspace*{3in}(c)\includegraphics[clip,scale=0.55]{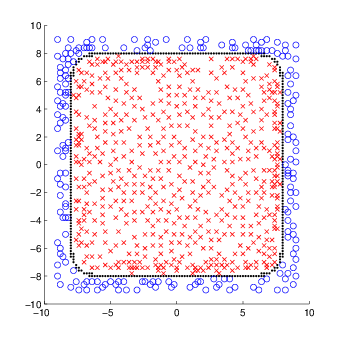}

\caption{Class AII insulator with disorder at 0. 
Panel (a) is the $0.1$-pseudospectrum of the Hamiltonian with periodic
boundary conditions. Panel (b) is the $0.1$-pseudo-spectrum of the
Hamiltonian with zero boundary conditions. Panel (c) is the Clifford
pseudospecturm with $K$-theory labels in the gaps.\label{fig:AII-local-1}}
\end{figure}

\begin{figure}[pt]
\hspace*{-2in}(a)\includegraphics[scale=0.4]{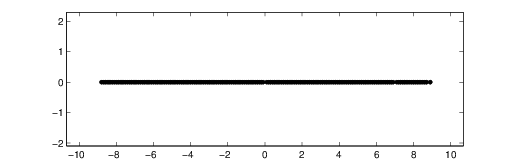}\\
\hspace*{-2in}(b)\includegraphics[scale=0.4]{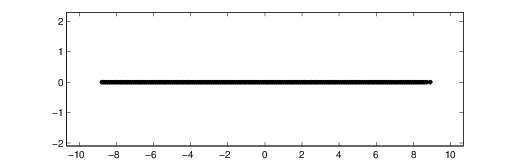}

\vspace*{-2in}

\hspace*{3in}(c)\includegraphics[clip,scale=0.55]{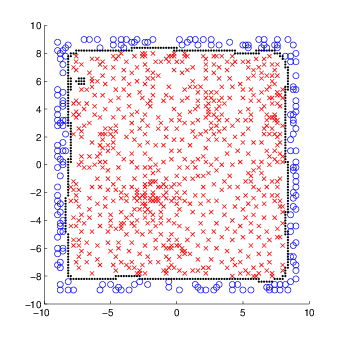}

\caption{Class AII insulator with disorder set to $7$. 
Panel (a) is the $0.1$-pseudospectrum of the Hamiltonian with periodic
boundary conditions. Panel (b) is the $0.1$-pseudo-spectrum of the
Hamiltonian with zero boundary conditions. Panel (c) is the Clifford
pseudospectrum. At this strength of disorder, the bulk gap is just
closing. 
\label{fig:AII-local-2}
}
\end{figure}

\begin{figure}
\includegraphics[clip,scale=0.55]{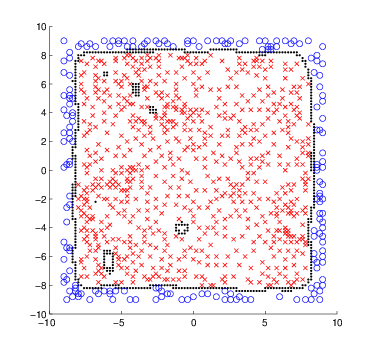}
\includegraphics[clip,scale=0.55]{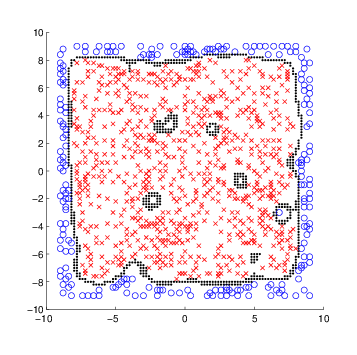}

\caption{Class AII insulator with disorder at 8 (left) and 9 (right).
\label{fig:AII-local-3}
}
\end{figure}

\begin{figure}
\includegraphics[clip,scale=0.55]{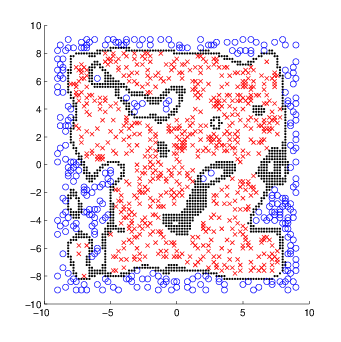}
\includegraphics[clip,scale=0.55]{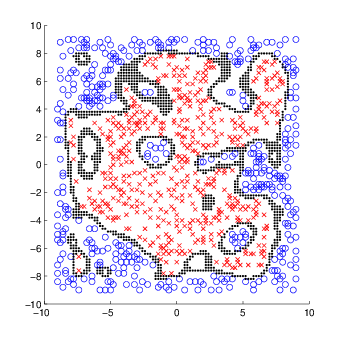}

\caption{Class AII insulator with disorder at 10 (left) and 11 (right).
\label{fig:AII-local-4}
}
\end{figure}

Now the global index. Here we set $\eta=4/L$ for an $L$-by-$L$
lattice. Figure \ref{fig:Phase-transition-2D-spin-old} is reproduced
from \cite{LorHastHgTe}. We cannot work with larger matrices in this
symmetry class because we do not have software to compute the needed
factorization of sparse antisymmetric matrices. The results of the
new formulas are shown in Figure \ref{fig:Phase-transition-2D-spin-new}.
We see that the transition is probably not sharpening as system size
increases, but it is hard to tell without the larger system sizes.
Notice that $40$ lattice units is roughly two nanometers. If we are
modeling films of roughly one nanometer thickness, we ought to be
looking at $L\approx200$. Such a size may be in reach of the sparse
algorithm as soon as that is available.

\begin{figure}
\includegraphics[scale=0.28]{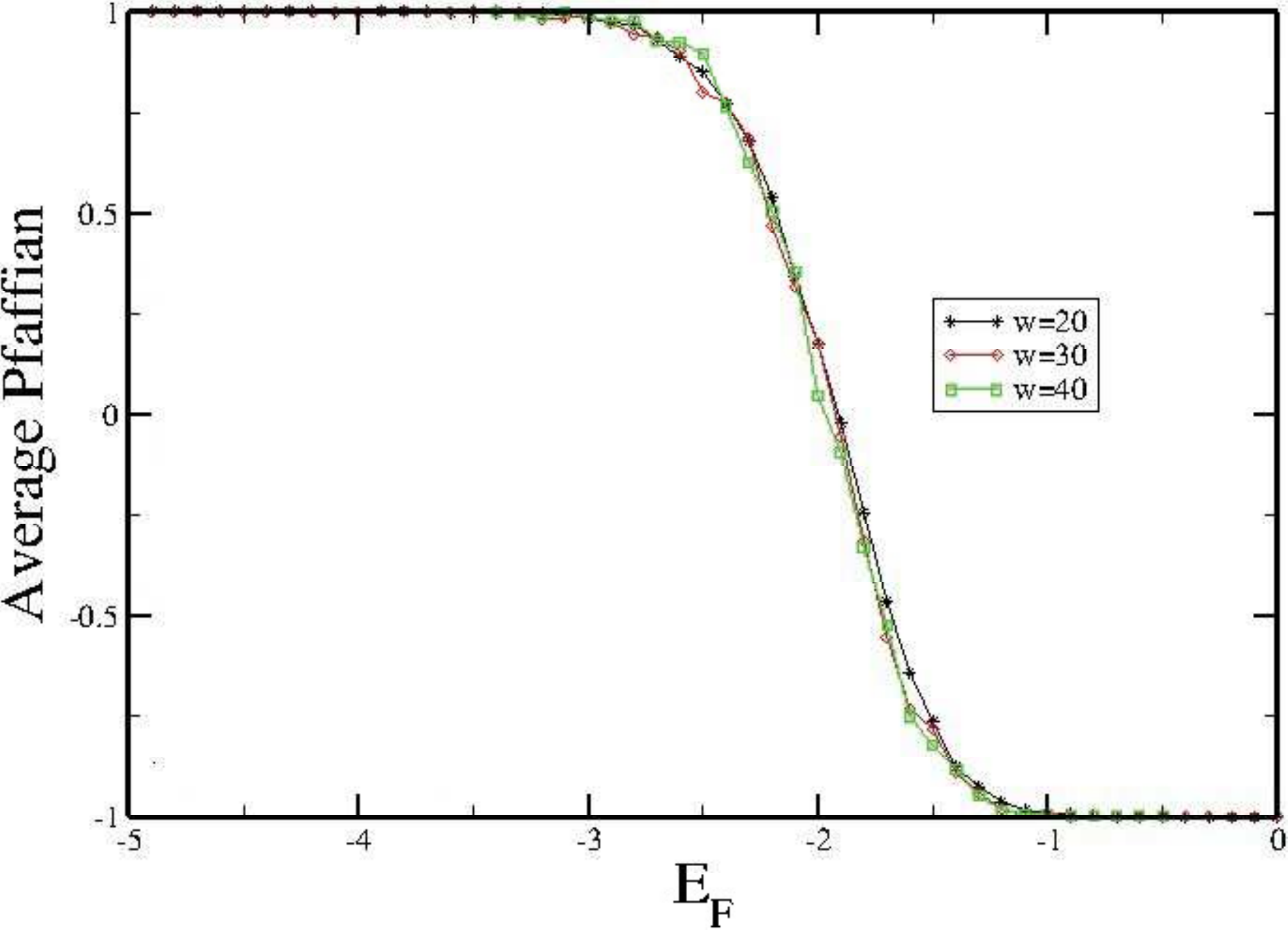}\\
\vspace*{-0.2cm}

\caption{Phase transition in 2D. The larger plot shows the disorder averaged
Pfaffian-Bott index. Reproduced from \cite{LorHastHgTe}. 
\label{fig:Phase-transition-2D-spin-old}
}
\end{figure}

\begin{figure}
\includegraphics[bb=1in 3in 7.5in 8in,clip,scale=0.45]{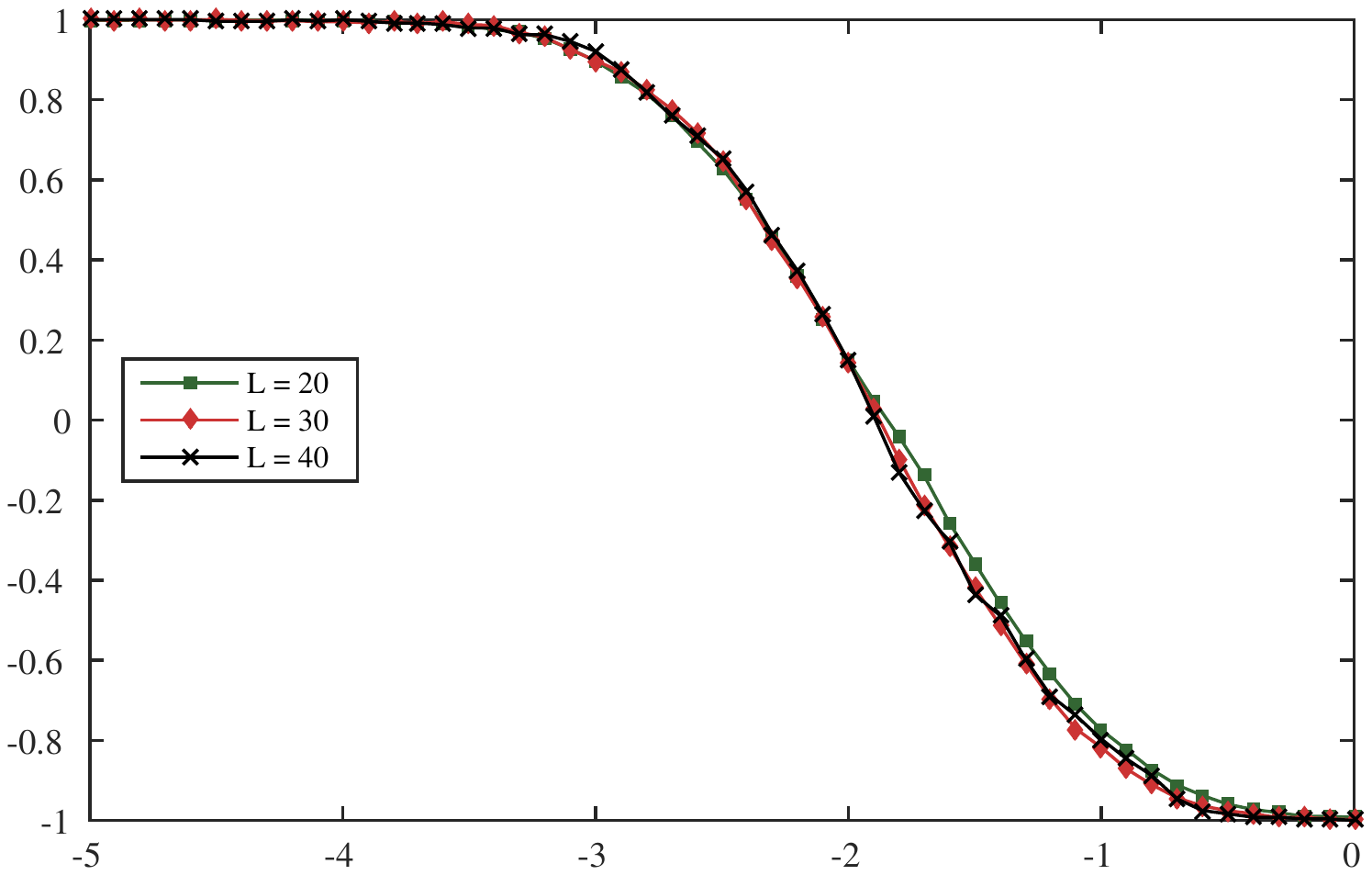}
\includegraphics[bb=1in 3in 7.5in 8in,clip,scale=0.45]{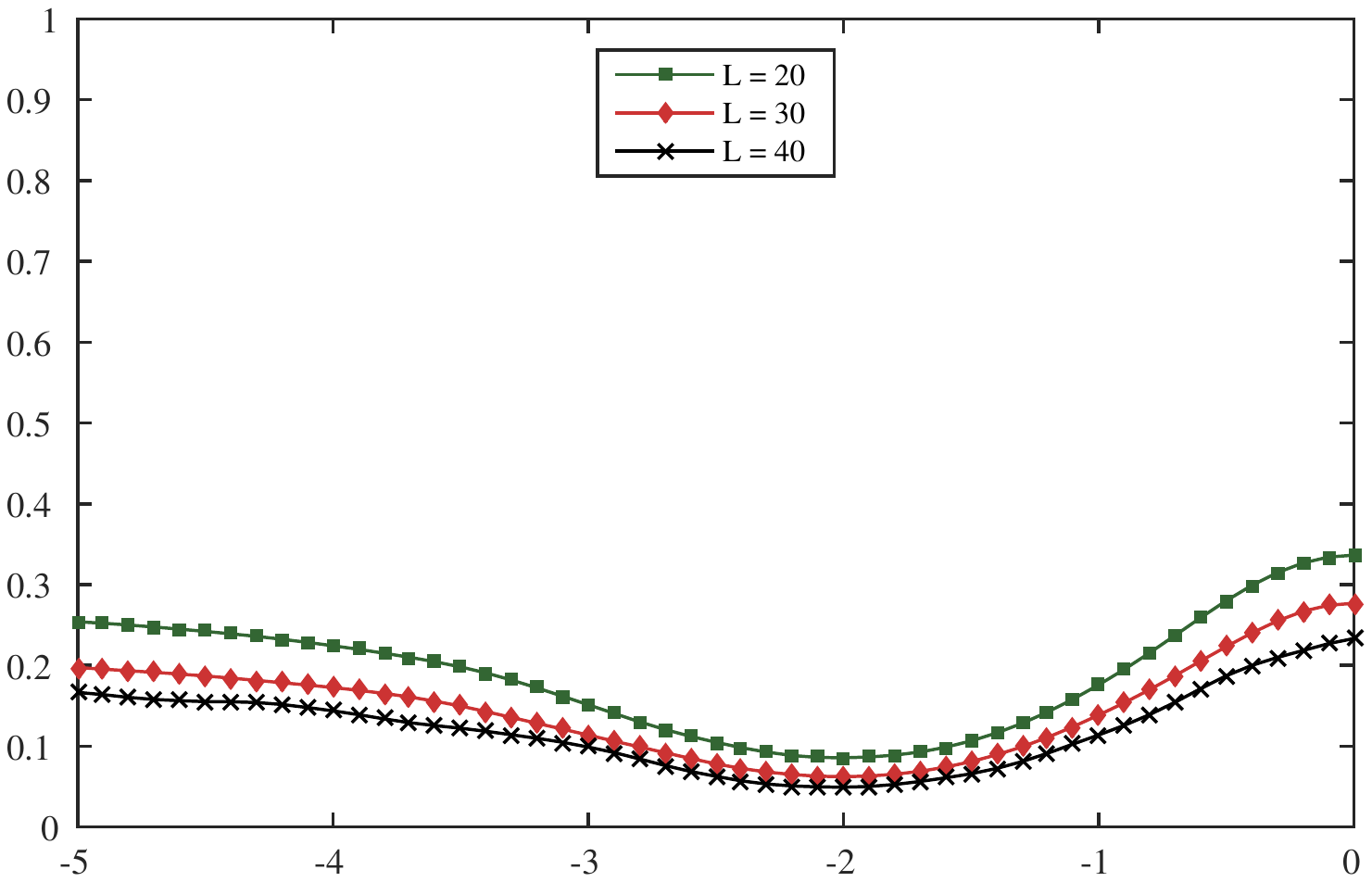}
\vspace*{-0.2cm}

\caption{Phase transition in 2D. The left panel shows the disorder averaged
new class AII index. The right panel shows the disorder-average of
gap localized at the origin. The number $N$ of samples for an $L$-by-$L$
lattice used in these averages was: $L=20$, $N=3544$; $L=30$, $N=2651$;
$L=40$, $N=481$. 
\label{fig:Phase-transition-2D-spin-new}
}
\end{figure}

\section{Dimension three, numerics}

We present examples in one symmetry class, AII.

\subsection{Class AII in 3D}

The first numerical study \cite{HastLorTheoryPractice} in 3D of the
effect of doping a topological insulator used an index  that worked for
periodic boundary conditions.  It also involved calculating the
sign of a determinant, but involving polynomials in three variables
that approximate a degree-one mapping of a three-torus to a three-sphere.
The geometry here forces these polynomials to have degree eleven,
and the resulting algorithm was slow. In contrast, as we are working
with a cube not a three-torus, we have a substantially faster algorithm. 

The local index was defined using $\eta=0.25$. The results are shown
in Figures \ref{fig:3D-insulator-local-1}-\ref{fig:3D-insulator-local-4}.
The red $\times$ indicate index $-1$ while the blue $\circ$ indicate
index $1$.

\begin{figure}
\hspace*{-2in}(a)\includegraphics[scale=0.4]{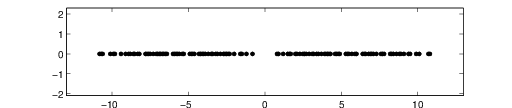}\\
\hspace*{-2in}(b)\includegraphics[scale=0.4]{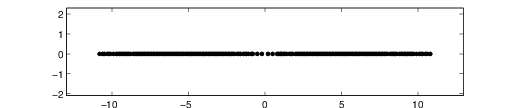}

\vspace*{-1.4in}

\hspace*{3in}(c)\includegraphics[clip,scale=0.55]{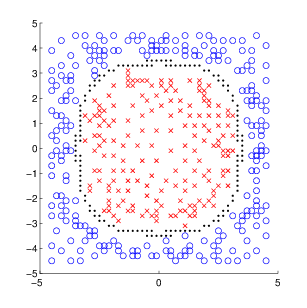}

\caption{Class AII insulator in 3D with no disorder. Showing a slice at $z=0$
and at the Fermi level. 
Using a 9-by-9-by-9 lattice. \label{fig:3D-insulator-local-1}}
\end{figure}
\begin{figure}
\hspace*{-2in}(a)\includegraphics[scale=0.4]{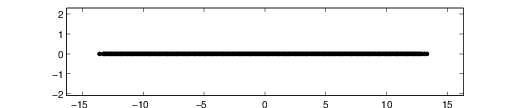}\\
\hspace*{-2in}(b)\includegraphics[scale=0.4]{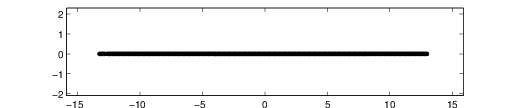}

\vspace*{-1.4in}

\hspace*{3in}(c)\includegraphics[clip,scale=0.55]{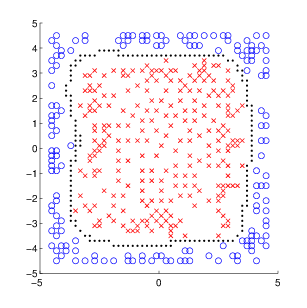}

\caption{Class AII insulator in 3D with disorder at 11. Showing a slice at
$z=0$ and at the Fermi level. Using a 9-by-9-by-9 lattice. 
The bulk gap is just closing at this disorder level. 
\label{fig:3D-insulator-local-2}
}
\end{figure}

\begin{figure}
\includegraphics[clip,scale=0.55]{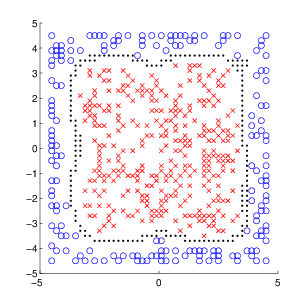}
\includegraphics[clip,scale=0.55]{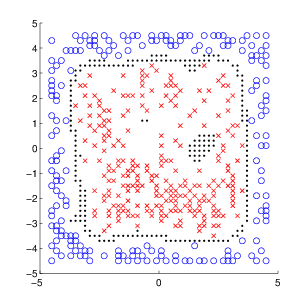}

\caption{Class AII insulator in 3D with disorder at 12 (left) and 13 (right).
\label{fig:3D-insulator-local-3}
}
\end{figure}
\begin{figure}
\includegraphics[clip,scale=0.55]{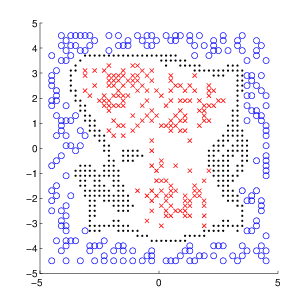}
\includegraphics[clip,scale=0.55]{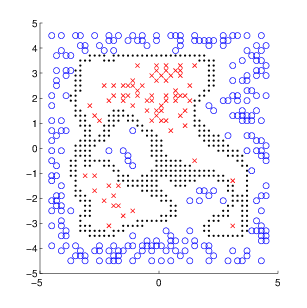}

\caption{Class AII insulator in 3D with disorder at 14 (left) and 15 (right).
\label{fig:3D-insulator-local-4}
}
\end{figure}

The global index was defined using $\eta=4/L$ for an $L$-by-$L$-by-$L$
lattice. Figure~7.2 of \cite{HastLorTheoryPractice} is replicated
here as Figure~\ref{fig:Phase-transition-3D-old}. This data was
sufficiently noisy that no real conclusions about scaling could be
made. The left panel in Figure~\ref{fig:Phase-transition-3D-new}
shows how we can generate much cleaner data with the new algorithm
by using more samples and larger systems. It appears now the transition
from this 3D topological insulator to an ordinary insulator is not
sharp, but larger system sizes are needed to clarify this. These can
be studied on existing machines, but the processing time needed will be
significant.

\begin{figure}
\includegraphics[bb=2bp 0bp 712bp 530bp,clip,scale=0.28]{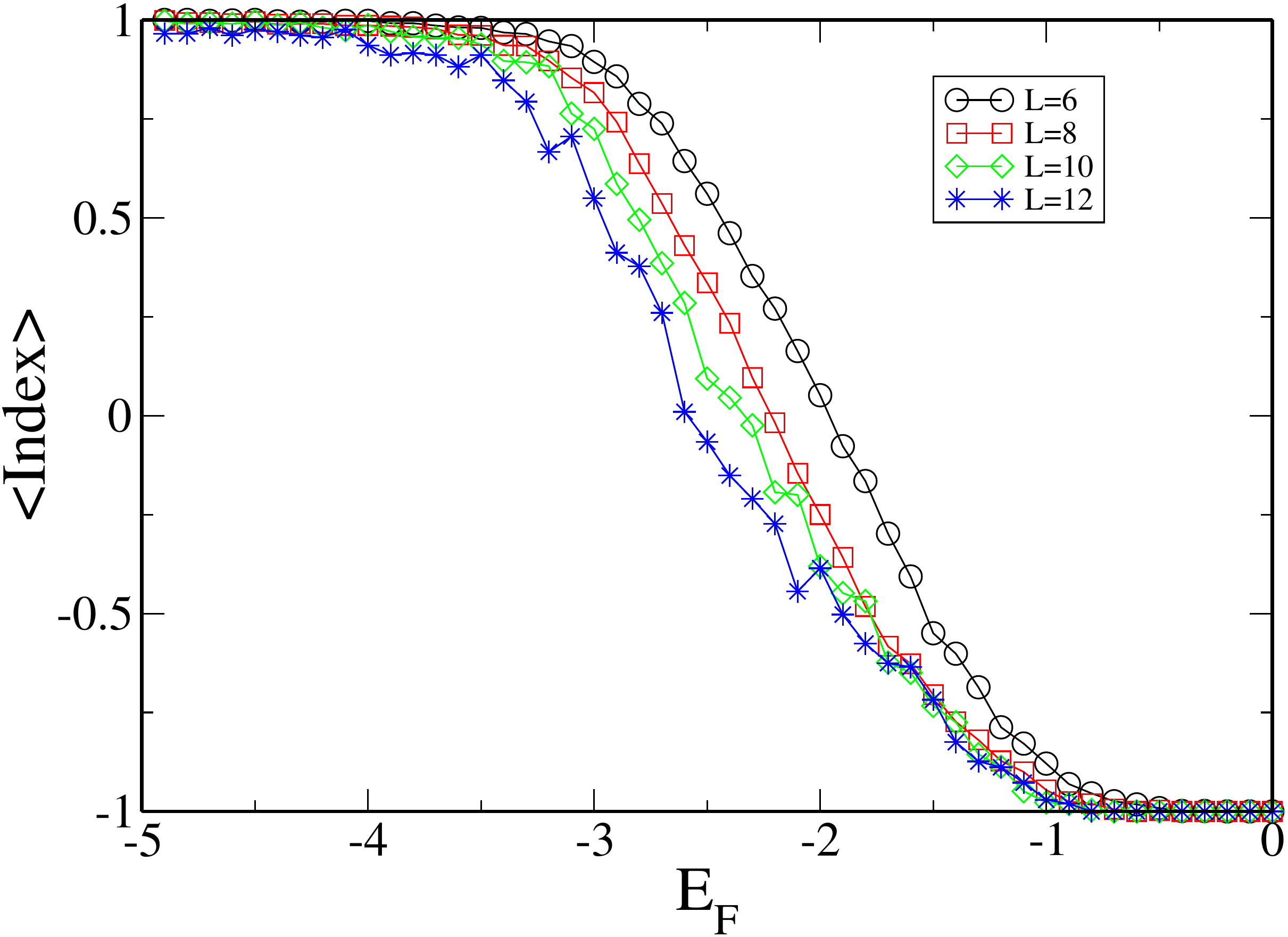}\vspace*{-0.2cm}

\caption{Phase transition in 3D, as computed via the old method.  
This figure is replicated from \cite{HastLorTheoryPractice}.
Shown is a plot of average index for
L = 6, 8, 10, 12 with $L\times L\times L$ lattices. Each data point
is an average of 1700, 1400, 600, 400 samples, respectively. 
\label{fig:Phase-transition-3D-old}
}
\end{figure}

\begin{figure}
\includegraphics[bb=1in 3in 7.5in 8in,clip,scale=0.45]{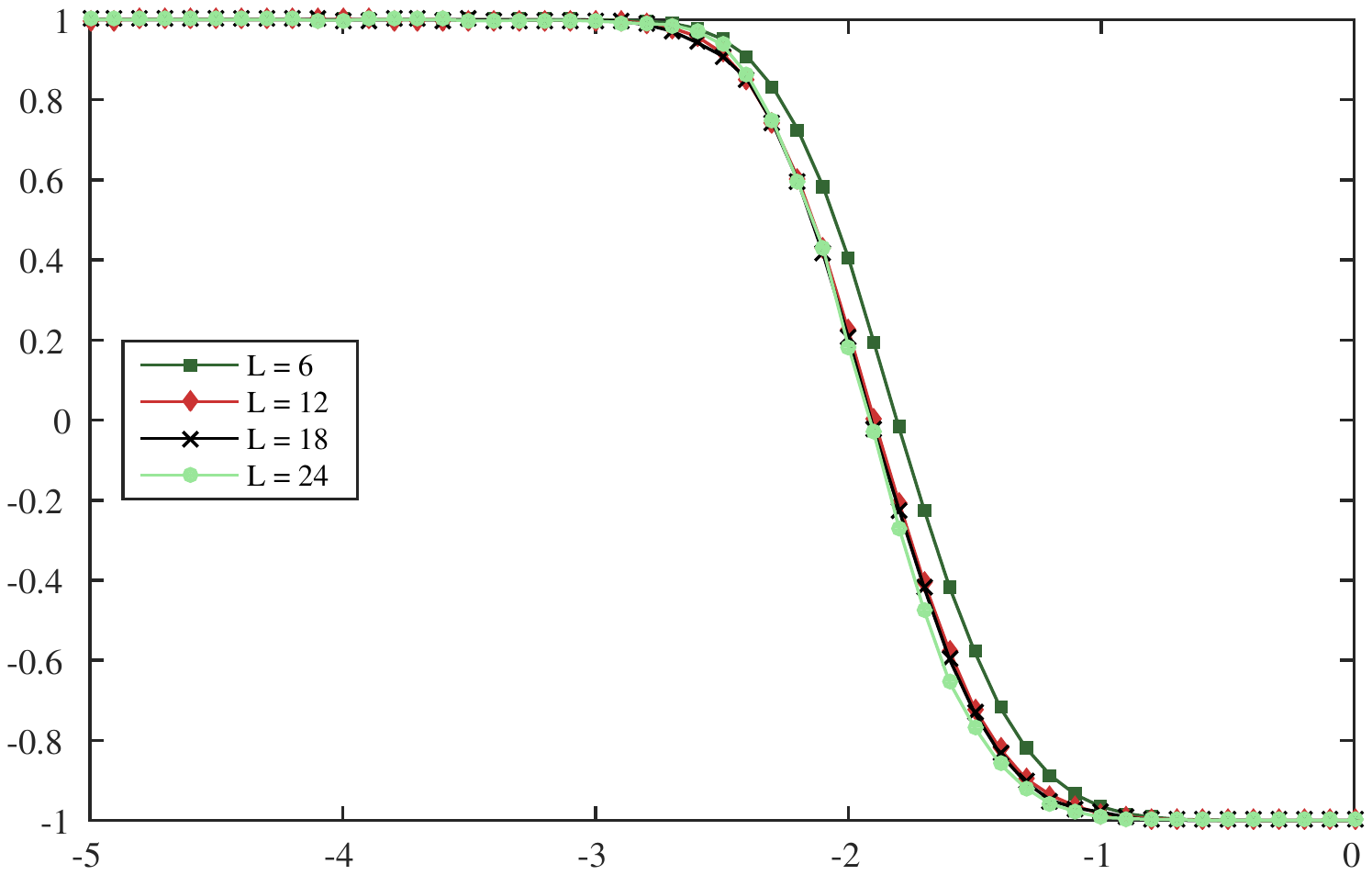}
\includegraphics[bb=1in 3in 7.5in 8in,clip,scale=0.45]{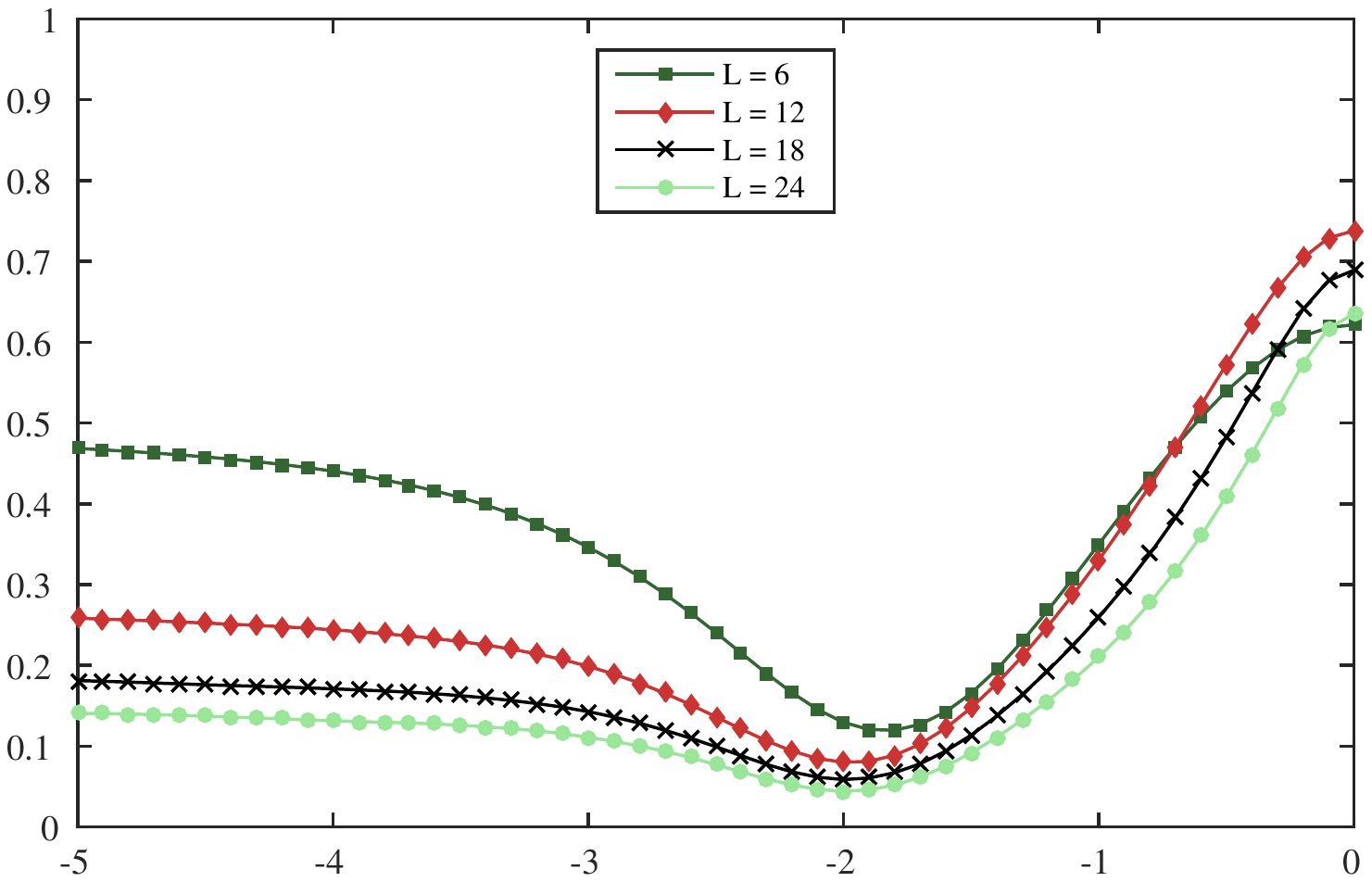}\vspace*{-0.2cm}

\caption{Phase transition in 3D. The left panel shows the disorder averaged
new class AII index. The right panel shows the disorder-average of
gap localized at the origin. The number $N$ of samples for an $L$-by-$L$-by-$L$
lattice used in these averages was: $L=6$, $N=26036$; $L=12$, $N=23357$;
$L=18$, $N=3491$; $L=24$, $N=1547$. 
\label{fig:Phase-transition-3D-new}
}
\end{figure}

\section{The algorithms}

Matlab code, with instructions on how to produce many of the figures
in this paper, will be made
available at a data repository\footnote{ 
https://repository.unm.edu/handle/1928/23449}.
For the larger system sizes, the study of the global index was done
on multiple computing nodes, each with 32 cores and 64GB of random
access memory. However the local index, at the system sizes illustrated
in the figures, can be explored using less than a day on a desktop
with 4 cores and only 8GB of random access memory.

A good example here is the algorithm for the global class AII invariant
in 3D. The formula in equation \ref{eq:3D-AII-full formula} tells
us we need to first compute a matrix
\[
A=Q^{*}
\left[\begin{array}{cccc}
0 & 0 & H+iZ & iX+Y\\
0 & 0 & iX-Y & H-iZ\\
H-iZ & -iX-Y & 0 & 0\\
-iX+Y & H+iZ & 0 & 0
\end{array}\right]
Q
\]
 that is real and sparse. We need to compute its spectral gap
\[
\left\Vert A^{-1}\right\Vert ^{-1}
\]
as well as the sign of its determinant.

We rely on LU algorithm \cite{DavisSparseLU}, as implemented in Matlab,
to factor $A$ as
\begin{equation}
A=R^{-1}P^{*}LUQ^{*}\label{eq:LURPQ}
\end{equation}
where $R$ is diagonal, $P$ and $Q$ are permutation matrices, $L$
is lower triangular, sparse with unit diagonal, and $U$ is upper
triangular and sparse. The determinant of $L$ is one, and so
\[
\mathrm{sign}(\det(A))=\mathrm{sign}(\det(R))\mathrm{sign}(\det(P))\mathrm{sign}(\det(L))\mathrm{sign}(\det(Q)).
\]
These signs of each these determinants is easy to compute. The norm of $A^{-1}$
we compute with the power method. Essentially this method starts with
a random unit vector $\mathbf{v}=\mathbf{v}_{0}$ and then computes
a few dozen iterations of 
\begin{align*}
\mathbf{w}_{n} & =\left(A^{*}A\right)^{-1}\mathbf{v}_{n-1}.\\
\mathbf{v}_{n} & =\frac{1}{\left\Vert \mathbf{w}_{n}\right\Vert }\mathbf{w}_{n}.
\end{align*}
We modified this procedure a little. We found starting with $\mathbf{v}_{0}$
having all entries equal worked better than a random vector in this
setting. Then, following \cite{TrefethenEmbree}, we compute $\mathbf{w}_{n}$
using equation \ref{eq:LURPQ} and the Matlab $\setminus$ operator
that computes $C^{-1}\mathbf{x}$ without inverting the matrix. Since
\[
\left(A^{*}A\right)^{-1}=QU^{*}L^{*}PR^{-2}P^{*}LUQ^{*},
\]
and since $\left(A^{*}A\right)^{-1}$ and $Q^{*}\left(A^{*}A\right)^{-1}Q,$
we can compute $\mathbf{w}_{n}$ via
\[
\mathbf{w}_{n}
=
U^{*}\setminus(L^{*}\setminus(P\setminus(R^{-2}\setminus(P^{*}\setminus(L\setminus(U\setminus\mathbf{v}_{n-1})))))).
\]

\section*{Acknowledgments}

The author wish to thank Deborah Evans, Alexei Kitaev, Matthew Hastings,
Joel Moore and Hermann Schulz-Baldes for illuminating discussions, 
mathematical and physical.  

This work was partially supported by a grant from the Simons Foundation
(208723 to Loring) and by financial support form the Erwin Schr\"odinger
International Institute for Mathematical Physics.  Most of the comupting was
done on machines at the
Center for Advance Research Computing at the University of New Mexico.

\rule[0.5ex]{1\linewidth}{1pt}

\bibliographystyle{plain}
\bibliography{/home/terry/opAlgResearch/cstarRefs}

\end{document}